\documentclass[twoside,11pt]{article}

\usepackage{blindtext}

\usepackage[preprint]{jmlr2e}
\usepackage[dvipsnames]{xcolor}
\usepackage{amsmath, dsfont}

\makeatletter
\renewenvironment{proof}[1][Proof]{%
  \par\noindent{\bf #1\ }%
}{\hfill\BlackBox\\[2mm]}
\makeatother

\usepackage[utf8]{inputenc}
\usepackage{tikz}
\usetikzlibrary{positioning, calc, fit, decorations.pathreplacing}

\newcommand{\RR}{\mathbb{R}}
\newcommand{\NN}{\mathbb{N}}
\newcommand{\EP}[1]{\mathbb{E}_P\left[ #1 \right]}
\newcommand{\VP}[1]{\mathrm{Var}_P\left[ #1 \right]}
\newcommand{\PP}[1]{\mathbb{P}_P\left( #1 \right)}
\newcommand{\EPepsilon}[1]{\mathbb{E}_{P_\epsilon}\left[ #1 \right]}

\newcommand{\dd}{\mathrm{d}}
\newcommand{\norm}[1]{\left\lVert#1\right\rVert}
\renewcommand{\epsilon}{\varepsilon} 

\DeclareMathOperator*{\argmin}{argmin}

\usepackage{enumitem}
\usepackage{subcaption}

\newtheorem{example}{Example}

\let\oldexample\example
\let\oldendexample\endexample

\renewenvironment{example}[1][]{%
  \if\relax\detokenize{#1}\relax
    \oldexample
  \else
    \oldexample[#1]%
  \fi
  \normalfont
}{%
  \hfill$\circ$\oldendexample
}

\usepackage{tikz}
\usetikzlibrary{positioning,arrows.meta,calc}
\usepackage{float}

\usepackage{algorithm}
\usepackage{algpseudocode}

\newcommand{\RN}[1]{%
  (\textup{\uppercase\expandafter{\romannumeral#1}})%
}

\newcommand{\ind}{\mathds{1}}

\usepackage{lastpage}
\jmlrheading{23}{2022}{1-\pageref{LastPage}}{1/21; Revised 5/22}{9/22}{21-0000}{
Asger Waagepetersen, Asbjørn Risom, Niels Richard Hansen, and Anton Rask Lundborg }

\ShortHeadings{Outcome-adapted AutoDML}{Waagepetersen, Risom, Hansen, and Lundborg}
\firstpageno{1}

\begin{document}

\newpage
\title{Outcome-adapted Automatic Debiased Machine Learning}

\author{\name Asger Waagepetersen$^*$ \email aw@math.ku.dk\\
   \name Asbjørn Risom$^*$ \email ar@math.ku.dk \\
   \name Niels Richard Hansen \email niels.r.hansen@math.ku.dk \\
   \name Anton Rask Lundborg \email arl@math.ku.dk \\
   \addr Department of Mathematical Sciences\\
   University of Copenhagen\\
   Universitetsparken 5, 2100 København Ø, Denmark \\
   $^*$The first two authors contributed equally to the paper.
}

\editor{My editor}

\maketitle

\begin{abstract}
Parameters of interest in causal inference, such as treatment or policy effects, can often be expressed as linear functionals of an outcome regression function. Automatic debiased machine learning (AutoDML) is a unified framework for obtaining asymptotically normal estimators of such parameters, which requires estimation of both a regression function and a Riesz representer. Existing AutoDML neural network architectures, such as RieszNet and MADNet, use a shared intermediate covariate representation. However, it remains unclear whether this shared representation should be predictive of the Riesz representer or the outcome.

We show that a shared representation of the covariates that preserves predictive power of the outcome while discarding information about the Riesz representer is asymptotically more efficient than the baseline AutoDML estimator that uses all covariates. Motivated by these results, we propose the outcome-adapted AutoDML estimator and establish its asymptotic behavior in a sample splitting framework. We provide a neural network implementation of the estimator that learns a sparse representation of the covariates that is predictive of the outcome but not predictive of the Riesz representer. We demonstrate the efficiency gains of our estimator over existing alternatives on synthetic data and achieve state-of-the-art estimation accuracy on the semi-synthetic IHDP benchmark dataset. 
\end{abstract}

\begin{keywords}
Automatic debiased machine learning, causal inference, Riesz regression, representation learning, semiparametric estimation
\end{keywords}

\section{Introduction}
A number of important parameters in causal inference and econometrics, such as treatment and policy effects, depend on the regression function $\gamma(x) = \mathbb{E}[Y \mid X = x]$ of an outcome $Y$ given a set of covariates $X$. These parameters can  be estimated naively in terms of an estimator of the regression function, but it is well known that such plug-in estimators can suffer from considerable bias and fail to achieve $\sqrt{n}$-consistency \citep{chernozhukov2018doubledebiasedmachinelearningtreatment}. To solve this problem, \citet{chernozhukov2018doubledebiasedmachinelearningtreatment} introduced double/debiased machine learning (DML) based on classical semiparametric estimation theory, which gives asymptotically normal estimators of the parameters of interest while allowing for regression function estimators based on flexible machine learning methods.

\citet{Chernozhukov_2022_auto_dml_lasso} subsequently introduced automatic debiased machine learning (AutoDML), which identifies a common structure across a range of specific examples leading to a problem independent, hence automatic, debiasing step. They achieved this by showing that many parameters of interest are given in terms of a continuous linear functional on an $L^2$-space evaluated in the regression function. This gives an inner product representation of the functional through its Riesz representer $\alpha \in L^2$, which can be used to construct an asymptotically normal estimator using a doubly robust estimating function. \citet{chernozhukov_24_via_riesz_regression} additionally showed that the Riesz representer can be estimated without knowing its functional form through a procedure termed Riesz regression. 

AutoDML is a powerful and general framework for achieving asymptotically normal estimation of parameters of interest by leveraging modern machine learning methods for estimation of the two nuisance functions, the regression function and the Riesz representer. An example of such an AutoDML estimator is based on the RieszNet neural network architecture by \citet{chernozhukov2022riesznetforestrieszautomaticdebiased}, see Figure \ref{fig:riesz_net_architecture}, which estimates the regression function and the Riesz representer simultaneously by learning a representation $Z$ of the covariates that is shared by both nuisance functions. An outstanding question when training RieszNet, or any other AutoDML machine learning architecture, is how to optimally combine the estimation of the two nuisance functions to achieve the most efficient estimator of the parameter of interest.

\begin{figure}
    \centering
    \begin{tikzpicture}[
        scale=0.9, 
        node distance=1.5cm and 1.2cm,
        layer/.style={rectangle, draw, fill=blue!10, minimum width=2.2cm, minimum height=0.9cm, rounded corners, align=center, font=\small},
        branch_layer/.style={rectangle, draw, minimum width=2.2cm, minimum height=0.8cm, rounded corners, align=center, font=\small},
        input/.style={circle, draw, fill=green!10, minimum size=0.8cm, font=\small},
        output/.style={circle, draw, fill=red!10, minimum size=0.8cm, font=\small},
        rep_node/.style={circle, draw, fill=blue!10, minimum size=0.8cm, font=\small},
        arrow/.style={-stealth, thick},
        label_text/.style={font=\bfseries\small}
    ]

    \node[input] (W) at (0,0) {$X$};
    \node[layer, right=1.2cm of W] (T1) {Hidden \\ Layers};
    \node[rep_node, right=1.2cm of T1] (T2) {$Z$};
    \path (T1.north) -- (T2.north) node[label_text, midway, above=0.4cm] {Shared Trunk};
    \node[branch_layer, fill=purple!5, above right=0.3cm and 1.2cm of T2] (G1) {Hidden \\ Layers};
    \node[output, right=1.2cm of G1] (gamma) {$\gamma$};
    \node[label_text, purple, above=0.2cm of G1] {Outcome Branch};
    \node[branch_layer, fill=orange!5, below right=0.3cm and 1.2cm of T2] (A1) {Hidden \\ Layers};
    \node[output, right=1.2cm of A1] (alpha) {$\alpha$};
    \node[label_text, orange, below=0.2cm of A1] {Riesz Branch};
    \draw[arrow] (W) -- (T1);
    \draw[arrow] (T1) -- (T2);
    \coordinate (split) at ($(T2.east) + (0.5,0)$);
    \draw[thick] (T2.east) -- (split);
    \draw[arrow] (split) |- (G1.west);
    \draw[arrow] (split) |- (A1.west);
    
    \draw[arrow] (G1) -- (gamma);
    \draw[arrow] (A1) -- (alpha);

    \end{tikzpicture}
    \caption{RieszNet neural network architecture mapping covariates $X$ to a regression function, $\gamma$, and a Riesz representer, $\alpha$, through a shared state $Z$. The implementation used by \citet{chernozhukov2022riesznetforestrieszautomaticdebiased} uses no hidden layers in the Riesz branch and outputs the Riesz representer as a linear function of $Z$.}
    \label{fig:riesz_net_architecture}
\end{figure}

\subsection{Novel Contributions}

To optimize estimation efficiency, we introduce the outcome-adapted AutoDML estimator in Section \ref{sec:OA_Autodml}, which learns a covariate representation $Z$ exclusively through an outcome regression step. This estimator is supported by our two main theoretical results, Theorems \ref{thm:delete_overadjust} and \ref{thm:supp_with_precision}, on using a shared covariate representation $Z$ for both the regression function and the Riesz representer. Their combined interpretation, in terms of improving the efficiency of the resulting AutoDML estimator, is that $Z$ should be maximally predictive of the outcome but minimally predictive of the Riesz representer.

We propose in Section \ref{sec:neural_nets} an implementation of the outcome-adapted AutoDML estimator using a neural network with an architecture similar to the RieszNet architecture in Figure \ref{fig:riesz_net_architecture}. The fundamental novelty of our implementation is that parameters in the shared trunk are learned only in combination with learning the outcome branch, while we freeze those parameters when learning the Riesz branch. 
To maximize the efficiency gain of our estimator, we also introduce an information bottleneck based on dimensionality reduction to make the representation minimally predictive of the Riesz representer.

We demonstrate the performance of our outcome-adapted AutoDML estimator on a 
standard semi-synthetic Infant Health and Development Program (IHDP) benchmark data\-set \citep{chernozhukov2022riesznetforestrieszautomaticdebiased}, where we achieve state-of-the-art performance in comparison to other doubly robust estimators, see Figure \ref{fig:mae}. The details of this experiment are given in Section \ref{sec:IHDP}. Section \ref{sec:sim} provides additional numerical experiments that also demonstrate the benefits of the outcome-adapted AutoDML estimator over alternative ways to trade off different objectives when learning the nuisance functions.

\begin{figure}
    \centering
    \includegraphics[width=\linewidth]{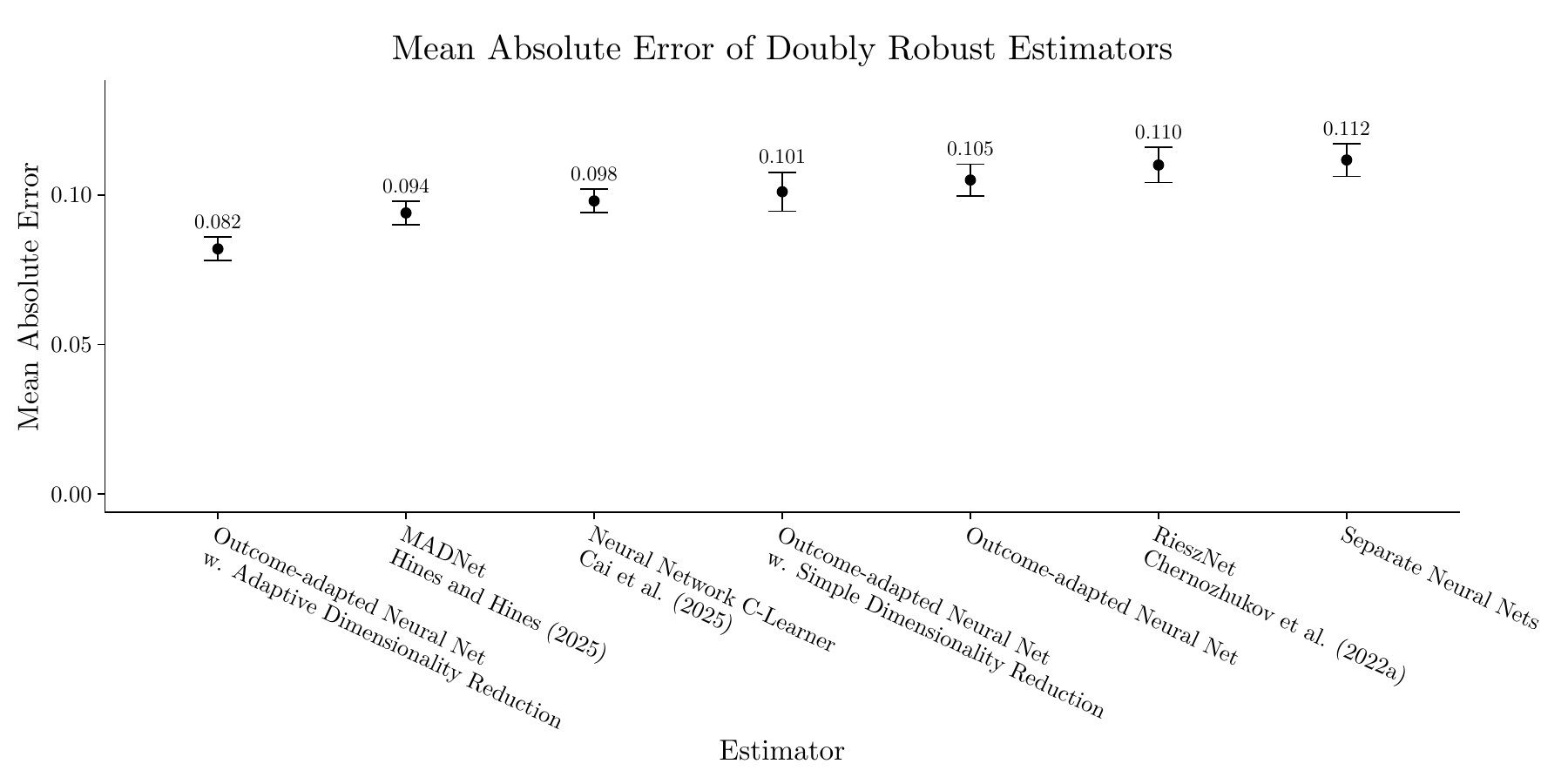}
    \caption{Mean absolute error (MAE) of average treatment effect estimates on the IHDP dataset using different doubly robust estimators. Error bars indicate asymptotic 95\% Gaussian confidence intervals.}
    \label{fig:mae}
\end{figure}

\subsection{Relations to the Literature}

Our work takes as a starting point the recent AutoDML papers \citep{Chernozhukov_2022_auto_dml_lasso, chernozhukov_24_via_riesz_regression}, the RieszNet implementation of AutoDML \citep{chernozhukov2022riesznetforestrieszautomaticdebiased}, and prior neural network DML estimators, such as Dragonnet \citep{shi2019adapting}.
This line of work advocates that the neural networks should be designed and trained to make the representation $Z$ predictive of the Riesz representer rather than the outcome. In practice, RieszNet learns all parameters simultaneously by minimizing a single loss function that weighs together different components of the network, and the choice of weights controls the tradeoff between learning either of the two nuisance functions. The related MADNet estimator by \citet{hines2025automaticdebiasingneuralnetworks} uses the same architecture as RieszNet, but it is trained by solving a moment-constrained minimization problem instead. Its shared layer is, however, still intended to be predictive of the Riesz representer. 

The logic behind outcome-adapted AutoDML estimation is rooted in the causal inference literature on efficient adjustment for estimation of the average treatment effect (ATE). Our outcome-adapted AutoDML estimator is a generalization of the debiased outcome-adapted propensity estimator (DOPE) proposed by \citet{christgau2025efficientadjustmentcomplexcovariates}, which is based on the classical augmented inverse propensity weighted (AIPW) estimator of the ATE \citep{robins_1995_aipw}. The ATE fits into the framework of AutoDML with the Riesz representer given in terms of the inverse propensity score, that is, the inverse probability of receiving treatment given the covariates. Expanding on work by \citet{Henckel_2022} and \citet{rotnitzky2019efficientadjustmentsetspopulation}, \citet{christgau2025efficientadjustmentcomplexcovariates} established a precursor of our main results specific to the AIPW estimator. They showed how to achieve an efficiency gain in terms of the asymptotic variance of the AIPW estimator by constructing the regression function and the propensity score as functions of a covariate representation that is predictive of the outcome but not the treatment. Our Theorem \ref{thm:delete_overadjust} generalizes Theorem 3.3(iii) by \citet{christgau2025efficientadjustmentcomplexcovariates} on relative efficiency from the AIPW estimator to general AutoDML estimators while at the same time weakening the assumptions. Similarly, our Theorem \ref{thm:supp_with_precision} generalizes their Corollary 3.9  for Riesz representers of a particular form, and it shows that any representation retaining predictive power of the Riesz representer can lead to an efficiency loss. 

Outcome-adapted AutoDML estimation differs sharply from previous AutoDML estimators by learning a covariate representation as part of an outcome regression first, and only subsequently is the Riesz representer learned for the particular representation. This procedure is not only supported by our theory and numerical experiments, but it also leads to a simpler estimator with fewer tuning parameters.

\subsection{Organization of the Paper}

The basic setup and a brief introduction to automatic debiased machine learning is given in Section \ref{sec:setup}. Section \ref{sec:efficient-covars} develops the theory of efficient covariate representations for automatic debiased machine learning. Section \ref{sec:OA_Autodml} defines the outcome-adapted automatic debiased machine learning estimator and establishes asymptotic results. Section \ref{sec:neural_nets} describes the neural network implementation of the outcome-adapted AutoDML estimator and compares it to RieszNet. Section \ref{sec:sim} reports on the setup of and results from our numerical experiments. All proofs and some auxiliary results and examples are in Appendices \ref{app:proofs} and \ref{sec: semiparametric efficiency}, and Appendix \ref{sec:sim_study_details} contains additional details regarding the numerical experiments.

\section{Setup} \label{sec:setup}
Let $\mathcal{P}$ be a statistical model consisting of distributions of a pair of random variables $(X,Y)\in\RR^d\times\RR$. We interpret $X$ as covariates, $Y$ as an outcome and assume that we observe $n\in\NN$ i.i.d. copies of $(X,Y)$.
Formally, we define $(X,Y)$ to be a mapping on a background measurable space $(\Omega,\mathbb{F})$, and we define a family of probability measures $(\mathbb{P}_P)_{P\in\mathcal{P}}$ on the background space such that the distribution of $(X,Y)$ is $P$ when the background space is equipped with $\mathbb{P}_P$.
We write $\mathbb{E}_P$ when taking expectations under $\mathbb{P}_P$ and assume throughout 
that $\EP{Y^2} < \infty$ for all $P \in \mathcal{P}$. When the distribution of $(X,Y)$ is $P$, we use $P_X$ and $P_Y$ to denote the marginal distributions of $X$ and $Y$, respectively. 

We let $\mathcal{M}^d$ be the space of real-valued measurable mappings on $\RR^d$ and, for $P\in\mathcal{P}$, 
$$
\mathcal{L}^2(P_X) = \left\lbrace\nu \in \mathcal{M}^d:\EP{\nu(X)^2}<\infty\right\rbrace.
$$
The corresponding Hilbert space consisting of $\mathcal{L}^2(P_X)$-equivalence classes is denoted $L^2(P_X)$. 

The parameter of interest is expressed as a functional on the statistical model, 
$$
    \psi:\mathcal{P}\to\RR.
$$
The parameters considered in this paper and estimated using AutoDML are those that can be expressed as a mean-square continuous linear functional evaluated at the regression function of $Y$ on $X$. We use $\gamma_P$ to denote the, almost surely unique, regression function under $P\in\mathcal{P}$, that is,
$$
    \gamma_P(x) = \EP{Y\mid X=x}. 
$$
\begin{definition}[AutoDML operators] \label{dfn-autoDML}
    We say that $M:\mathcal{M}^d \to \mathcal{M}^d$ is an AutoDML operator if, for all $P\in\mathcal{P}$,
    \begin{enumerate}[label=(\alph*)]
        \item \label{cond:lin} for $\nu_1,\nu_2\in\mathcal{L}^2(P_X)$, $a,b\in\RR$, $M(a\nu_1+b\nu_2) = aM(\nu_1)+bM(\nu_2)$ $P_X$-almost surely,
        \item \label{cond:msq} there exists $\kappa_P \geq 0$ such that for $\nu\in\mathcal{L}^2(P_X)$, 
        $$
        \EP{M(\nu)(X)^2} \leq \kappa_P \EP{\nu(X)^2}.
        $$
    \end{enumerate}
\end{definition}

Given an AutoDML operator, we define $m:\RR^d\times\mathcal{M}^d\to\RR$ by
$$
    m(x,\nu) = M(\nu)(x).
$$ 
The condition \ref{cond:msq} is equivalent to requiring that, for each $P \in \mathcal{P}$, there exists $\kappa_P \geq 0$ such that, for $\nu\in \mathcal{L}^2(P_X)$, 
$$
        \EP{m(X,\nu)^2} \leq \kappa_P \EP{\nu(X)^2}. 
$$
We follow the convention of \citet[][Assumption 1]{chernozhukov_24_via_riesz_regression} and refer to this property as \emph{mean-square continuity}. It ensures that if 
$\nu_1 = \nu_2$ $P_X$-a.s. then $M(\nu_1) = M(\nu_2)$ $P_X$-a.s., and conditions \ref{cond:lin} and \ref{cond:msq} together allow us to regard
$$
    M : L^2(P_X) \to L^2(P_X)
$$
as a well-defined bounded linear operator on the Hilbert space $L^2(P_X)$. 
We will also regard $m$ as a mapping $m: \RR^d \times L^2(P_X) \to \RR$, 
where for each $\nu \in L^2(P_X)$, $m(\cdot, \nu)$ is $P_X$-a.s. uniquely defined. 

\begin{definition}[AutoDML functionals]
    Let $M$ be an AutoDML operator. We say that $\psi:\mathcal{P}\to\RR$ given by
    $$
        \psi(P) = \EP{m(X,\gamma_P)}
    $$
    is an AutoDML functional. 
\end{definition}

Boundedness and linearity are the key properties of AutoDML operators as they allow us to obtain an $L^2(P_X)$ inner product representation of AutoDML functionals. Indeed, since mean-square continuity implies that the linear functional $\nu \mapsto \EP{m(X,\nu)}$ is continuous on $L^2(P_X)$, the following proposition is a direct consequence of the Riesz representation theorem.

\begin{proposition}\label{prop:existence-of-riesz-rep}
    Let $\psi$ be an AutoDML functional. For each $P\in\mathcal{P}$, there exists a unique $\alpha_P \in L^2(P_X)$, such that, for $\nu\in L^2(P_X)$,
    $$
        \EP{m(X,\nu)} = \EP{\alpha_P(X)\nu(X)}.
    $$
    In particular,
    $$
        \psi(P) = \EP{m(X,\gamma_P)} = \EP{\alpha_P(X)\gamma_P(X)}.
    $$
    We say that $\alpha_P$ is the Riesz representer associated with $\psi$. 
\end{proposition}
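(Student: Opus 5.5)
The plan is to apply the Riesz representation theorem on the Hilbert space $L^2(P_X)$ for each fixed $P\in\mathcal{P}$, and then specialise to $\nu=\gamma_P$.

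\textbf{The linear functional.} Fix $P\in\mathcal{P}$ and define $\ell_P:L^2(P_X)\to\RR$ by
$$
\ell_P(\nu)=\EP{m(X,\nu)}=\EP{M(\nu)(X)}.
$$

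\emph{Well-definedness.} The paragraph after Definition~\ref{dfn-autoDML} shows that $M$ respects $P_X$-a.s.\ equivalence. So $\ell_P(\nu)$ does not depend on the representative chosen in $\mathcal{L}^2(P_X)$.

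\emph{Finiteness and boundedness.} By Cauchy--Schwarz and condition~\ref{cond:msq},
$$
|\ell_P(\nu)|\le \EP{m(X,\nu)^2}^{1/2}\le \kappa_P^{1/2}\EP{\nu(X)^2}^{1/2}=\kappa_P^{1/2}\norm{\nu}_{L^2(P_X)}.
$$

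\emph{Linearity.} This follows from condition~\ref{cond:lin}: the identity $M(a\nu_1+b\nu_2)=aM(\nu_1)+bM(\nu_2)$ holds $P_X$-a.s., so it survives taking expectations.

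Hence $\ell_P$ is a continuous linear functional on $L^2(P_X)$.

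\textbf{Existence and uniqueness of $\alpha_P$.} The Riesz representation theorem gives a unique $\alpha_P\in L^2(P_X)$ with
$$
\ell_P(\nu)=\langle \alpha_P,\nu\rangle_{L^2(P_X)}=\EP{\alpha_P(X)\nu(X)}\quad\text{for all }\nu\in L^2(P_X).
$$
Uniqueness is as an element of $L^2(P_X)$, i.e.\ up to $P_X$-a.s.\ equality.

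\textbf{The identity for $\psi(P)$.} It remains to check that $\gamma_P\in L^2(P_X)$. By conditional Jensen,
$$
\EP{\gamma_P(X)^2}=\EP{\EP{Y\mid X}^2}\le \EP{Y^2}<\infty,
$$
using the standing assumption on $Y$. Taking $\nu=\gamma_P$ in the representation gives $\psi(P)=\EP{\alpha_P(X)\gamma_P(X)}$.

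\textbf{Main obstacle.} There is no substantive difficulty. The only point needing care is the passage from $\mathcal{L}^2$ to $L^2$ equivalence classes, i.e.\ that $\ell_P$ is well defined and linear on classes. This is exactly what conditions~\ref{cond:lin} and~\ref{cond:msq} provide.
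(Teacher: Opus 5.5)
Your proposal is correct and follows the paper's own argument. The paper notes that mean-square continuity makes $\nu\mapsto\EP{m(X,\nu)}$ a continuous linear functional on $L^2(P_X)$ and then invokes the Riesz representation theorem; your extra checks (well-definedness on equivalence classes, the Cauchy--Schwarz bound, and $\gamma_P\in L^2(P_X)$ via conditional Jensen) simply spell out details the paper leaves implicit.
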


Below we provide examples of AutoDML functionals and their Riesz representers. For the first two functionals, we have the decomposition $X = (U,W)\in\RR\times\RR^{d-1}$ where we interpret $U$ as a treatment variable and $W$ as covariates. In the fourth example, we also have the decomposition $X = (U,W)\in\RR\times\RR^{d-1}$ where $U$ is a binary censoring indicator. 
\begin{example}[Average treatment effect]\label{ex-ate}
    With $U\in\{0,1\}$ being a binary treatment indicator,
    the average treatment effect (ATE) functional is given by
    $$
    \psi(P) = \EP{\gamma_P(1,W) - \gamma_P(0,W)}.
    $$
    Here, with $x=(u,w)$,
    $$
    m(x,\nu) = \nu(1,w)-\nu(0,w).
    $$
    With
    $$
    \pi_P(w) = \PP{U=1\mid W=w},
    $$
    the Riesz representer is
    $$
    \alpha_P(U,W)=\frac{U}{\pi_P(W)} - \frac{1-U}{1-\pi_P(W)}.
    $$
\end{example}

\begin{example}[Average shift effect]\label{ex-ase}
    With $U$ being a continuous treatment variable with conditional density w.r.t. Lebesgue measure $f_P(\cdot\mid w )$ given $W = w$, the average shift effect (ASE) functional is, for a fixed $\delta\in\RR$, given by
    $$
    \psi(P) = \EP{\gamma_P(U+\delta,W) - \gamma_P(U,W)}.
    $$
    Here, with $x=(u,w)$,
    $$
    m(x,\nu) = \nu(u + \delta,w)-\nu(u,w).
    $$
    The Riesz representer is
    $$
    \alpha_P(U,W)=\frac{f_P(U-\delta \mid W) - f_P(U\mid W)}{f_P(U\mid W)}.
    $$
\end{example}

\begin{example}[Average policy effect]\label{ex-ape}
    Let $X$ have density $f_P$ with respect to a measure $\mu$ on $\RR^d$ and let $f_1$ and $f_0$ be densities with respect to the same measure. The average policy effect is given by
    $$
    \psi(P) = \int \gamma_P(x) (f_1(x)-f_0(x)) \dd \mu(x).
    $$
    Here, whenever $\int |\nu(\tilde{x}) (f_1(\tilde{x})-f_0(\tilde{x}))| \dd \mu(\tilde{x}) < \infty$, we define
    \begin{align*}
    m(x,\nu) = 
     \int \nu(\tilde{x}) (f_1(\tilde{x})-f_0(\tilde{x})) \dd \mu(\tilde{x})
    \end{align*}
    and let $m(x, \nu) = 0$ otherwise.\footnote{If $\nu\in L^2(P_X)$ for some $P\in\mathcal{P}$, the regularity conditions which ensure mean-square continuity, see Lemma \ref{lem:int-treat-func}, also ensure $\int|\nu(\tilde x)(f_1(\tilde x)-f_0(\tilde x))|\dd\mu(\tilde x)<\infty$.} The Riesz representer is
    $$
    \alpha_P(X)=\frac{f_1(X)-f_0(X)}{f_P(X)}.
    $$
\end{example}

\begin{example}[Mean Missing Outcome]\label{ex:mmo}
    Let $\tilde Y$ be an uncensored outcome, $U \in \{0, 1\}$ a binary censoring indicator and 
    \[
        Y = \begin{cases}
            \tilde Y & \text{if} \quad U = 1,\\
            \texttt{NA} & \text{if} \quad U = 0.
        \end{cases}
    \]
    The mean missing outcome functional is given by
    $$
    \psi(P) = \EP{\gamma_P(1,W)}.
    $$
    Here, with $x=(u,w)$, 
    $$
    m(x,\nu) = \nu(1,w).
    $$
    With 
    $$
    \pi_P(w) = \PP{U=1\mid W=w},
    $$
    the Riesz representer is 
    $$
    \alpha_P(U,W) = \frac{U}{\pi_P(W)}. 
    $$
\end{example}

\subsection{The AutoDML Estimator}\label{seq-autodml}
The AutoDML estimator is based on the estimating function
$$
\varphi(X,Y,\gamma,\alpha,\psi) = m(X,\gamma) + \alpha(X)(Y-\gamma(X)) - \psi. 
$$
The crucial property of $\varphi$ is that it is \emph{doubly robust} in the sense that 
$$
\EP{\varphi(X,Y,\gamma,\alpha,\psi(P))} = 0
$$
if either $\gamma=\gamma_P$ or $\alpha=\alpha_P$. In addition, $\varphi(X,Y,\gamma_P,\alpha_P,\psi(P))$ is also the nonparametric efficient influence function for $\psi(P)$ as noted by \citet{newey1994}. For a precise statement and proof of this, see Appendix \ref{sec: semiparametric efficiency}. 

AutoDML estimators are constructed by first obtaining estimates $\hat\gamma$ and $\hat\alpha$ of $\gamma_P$ and $\alpha_P$, and then $\hat\psi$ is computed by solving the estimating equation determined by the estimating function. In this case,
$$
\hat\psi = \frac{1}{n}\sum_{i=1}^n m(X_i,\hat\gamma) + \hat\alpha(X_i)(Y_i-\hat\gamma(X_i)).
$$
Typically, cross-fitting (see, e.g., \citealt{chernozhukov2018doubledebiasedmachinelearningtreatment}) is used to ensure that the estimated functions $\hat\gamma$ and $\hat\alpha$ in the estimate of $\psi$ are evaluated only at observations not used in their construction. This leads to the following estimation procedure: 
\begin{enumerate}
    \item Let $I = \{1, \dots, n\}$ and let $D = ((X_i,Y_i))_{i \in I}$ be the full sample. Partition $I$ into $K$ disjoint sets $I_1,\dots, I_K$ of roughly equal size and set $D_k = ((X_i,Y_i))_{i \in I_k}$. 
    \item For $k=1, \dots ,K$ estimate $\hat\gamma_k,\hat\alpha_k$ using the observations not in $D_k$.
    \item Compute point and variance estimates
    \begin{align*}
        \hat\psi &= \frac{1}{n}\sum_{k=1}^K\sum_{i\in I_k} m(X_i,\hat\gamma_k) + \hat\alpha_k(X_i)(Y_i - \hat\gamma_k(X_i)),\\
        \hat V &= \frac{1}{n}\sum_{k=1}^K\sum_{i\in I_k} (m(X_i,\hat\gamma_k) + \hat\alpha_k(X_i)(Y_i - \hat\gamma_k(X_i)) - \hat\psi)^2. 
    \end{align*}
\end{enumerate}

Under regularity conditions on the model $\mathcal{P}$ and rate conditions on the estimators of $\gamma_P$ and $\alpha_P$, \citet{chernozhukov_24_via_riesz_regression} establish that the AutoDML estimator is a $\sqrt{n}$-consistent estimator with asymptotic variance
$$
V_P = \EP{\varphi(X,Y,\gamma_P,\alpha_P, \psi(P))^2} = \VP{m(X, \gamma_P)} + \EP{\alpha_P(X)^2\VP{Y \mid X}},    
$$
and that $\hat V$ is a consistent estimator of the asymptotic variance. 

To construct an AutoDML estimator, we require estimates of $\gamma_P$ and $\alpha_P$. As $\gamma_P$ is a regression function, it can be estimated using a variety of machine learning methods. To estimate $\alpha_P$, one option is to estimate the components of a closed-form expression for $\alpha_P$ and then plug these into the closed-form. For instance, for the ATE functional of Example~\ref{ex-ate}, we would estimate the propensity score $\pi_P$, which leads to the AIPW estimator of \citet{robins_1995_aipw}. 

AutoDML takes a different approach. The crucial observation at the foundation of AutoDML is that 
\begin{align*}
\alpha_P &= \argmin_{\alpha\in L^2(P_X)} \EP{(\alpha(X)-\alpha_P(X))^2} = \argmin_{\alpha\in L^2(P_X)} \EP{\alpha(X)^2- 2 \alpha(X) \alpha_P(X)}\\
&= \argmin_{\alpha\in L^2(P_X)} \EP{\alpha(X)^2 - 2 m(X,\alpha)},
\end{align*}
which suggests that $\alpha_P$ can be estimated directly by minimizing the Riesz loss
$$
\ell_{\mathrm{Riesz}}(\alpha) = \frac{1}{n}\sum_{i=1}^n \alpha(X_i)^2 - 2m(X_i,\alpha). 
$$
Minimizing the above loss function is referred to as Riesz regression, and a number of machine learning methods have been adapted to this task. These include LASSO \citep{Chernozhukov_2022_auto_dml_lasso}, random forests and neural nets \citep{chernozhukov2022riesznetforestrieszautomaticdebiased}, kernel ridge \citep{singh2024kernelridgerieszrepresenters}, and gradient boosting \citep{lee2025rieszboostgradientboostingriesz}. An advantage highlighted by \citet{chernozhukov_24_via_riesz_regression} of using Riesz regression, instead of estimating the components of a closed-form expression, is that Riesz regression avoids division by density or propensity score estimates, which can be unstable in the presence of small estimates.

\section{Efficient Covariate Representations}\label{sec:efficient-covars}
The central topic of this paper is efficient covariate representations. We define a covariate representation to be the result of applying a mapping to the covariates. 

\begin{definition}[Covariate representations] \label{def:covar_rep}
    Let $M: \mathcal{M}^d \to\mathcal{M}^d$ be an AutoDML operator and let \mbox{$h:\RR^d\to\RR^q$} be a measurable mapping. We say that $Z=h(X)$ is a permissible representation if there exists $M_h:\mathcal{M}^q\to\mathcal{M}^q$ such that, for all $\eta\in\mathcal{M}^q$,
    $$
    M_h(\eta)\circ h = M(\eta\circ h).
    $$
\end{definition}

Given a permissible representation, we define $m_h:\RR^q\times\mathcal{M}^q\to \RR$ by
$$
m_h(z,\eta) = M_h(\eta)(z), 
$$
and the condition in Definition \ref{def:covar_rep} is equivalent to requiring, for all $\eta\in\mathcal{M}^q$ and $x\in\RR^d$, with $z=h(x)$, that
$$
m_h(z,\eta) = m(x,\eta\circ h).
$$
For the functionals in Examples \ref{ex-ate}, \ref{ex-ase} and \ref{ex:mmo}, this condition is satisfied, for instance, if $h$ is of the form $h(u,w) = (u,\tilde h(w))$ such that the representation does not lose track of the treatment/censoring variable. For Example \ref{ex-ape}, any measurable $h$ is permissible.\footnote{See Example \ref{ex-not permissible} in Section \ref{seq: proofs sec:efficient-covars} for a representation that is not permissible.}

When $Z=h(X)$ is a permissible representation, we define 
\begin{align*}
    \gamma_{h,P}(z) & = \EP{Y\mid Z=z} \\
    \alpha_{h,P}(z) & = \EP{\alpha_P(X)\mid Z=z},
\end{align*}
and we can define an AutoDML functional in terms of $(Z,Y)$ and $M_h$. 
\begin{proposition}\label{prop:rep-functional}
    Let $\psi$ be an AutoDML functional determined by the AutoDML operator $M$, and let $h:\RR^d\to\RR^q$ induce a permissible representation $Z=h(X)$. Then $M_h$ is an AutoDML operator and
    $$
        \psi_h(P) = \EP{m_h(Z,\gamma_{h,P})}
    $$
    is an AutoDML functional with associated Riesz representer $\alpha_{h,P}$.
    Additionally, if either 
    $$
        \gamma_{h,P}(Z) = \gamma_{P}(X) \quad \text{ or } \quad \alpha_{h,P}(Z) = \alpha_{P}(X),
    $$
    then 
    $$
        \psi_h(P) = \psi(P).
    $$
\end{proposition}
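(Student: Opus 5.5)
Throughout, the key device is the change-of-variables identity $M_h(\eta)\circ h = M(\eta\circ h)$ from Definition \ref{def:covar_rep}. With it, every statement about $M_h$ on $L^2(P_Z)$, where $P_Z$ is the law of $Z=h(X)$, can be transported to a statement about $M$ on $L^2(P_X)$. I will first show that $M_h$ is an AutoDML operator. Then I will identify its Riesz representer through the tower property. Finally I will check the two sufficient conditions for $\psi_h(P)=\psi(P)$.

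\emph{Step 1: $M_h$ is an AutoDML operator.} Fix $P$ and $\eta\in\mathcal{L}^2(P_Z)$. Then $\eta\circ h\in\mathcal{L}^2(P_X)$, since $\EP{\eta(h(X))^2}=\EP{\eta(Z)^2}$.

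For linearity, take $\eta_1,\eta_2\in\mathcal{L}^2(P_Z)$. Then
$$M_h(a\eta_1+b\eta_2)\circ h = M(a\eta_1\circ h+b\eta_2\circ h) = aM(\eta_1\circ h)+bM(\eta_2\circ h),$$
where the second equality holds $P_X$-a.s. by \ref{cond:lin}. The right-hand side equals $(aM_h(\eta_1)+bM_h(\eta_2))\circ h$. An equality of the form $f\circ h=g\circ h$ holding $P_X$-a.s. means exactly that $f=g$ holds $P_Z$-a.s. This gives \ref{cond:lin} for $M_h$ with respect to $P_Z$.

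For mean-square continuity,
$$\EP{M_h(\eta)(Z)^2}=\EP{M(\eta\circ h)(X)^2}\le\kappa_P\EP{\eta(h(X))^2}=\kappa_P\EP{\eta(Z)^2}.$$
So \ref{cond:msq} holds with the same constant $\kappa_P$. Since $\gamma_{h,P}\in L^2(P_Z)$, because $\EP{Y^2}<\infty$, the quantity $\psi_h$ is a well-defined AutoDML functional of the distribution of $(Z,Y)$.

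\emph{Step 2: the Riesz representer.} For $\eta\in L^2(P_Z)$ I will compute
$$\EP{m_h(Z,\eta)}=\EP{m(X,\eta\circ h)}=\EP{\alpha_P(X)\eta(h(X))},$$
using Proposition \ref{prop:existence-of-riesz-rep}. Conditioning on $Z$, this equals $\EP{\alpha_{h,P}(Z)\eta(Z)}$. Here $\alpha_{h,P}\in L^2(P_Z)$ by Jensen's inequality. By uniqueness in Proposition \ref{prop:existence-of-riesz-rep}, applied to $M_h$, it follows that $\alpha_{h,P}$ is the Riesz representer.

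\emph{Step 3: equality of the functionals.} Step 2 gives $\psi_h(P)=\EP{\alpha_{h,P}(Z)\gamma_{h,P}(Z)}$.

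If $\alpha_{h,P}(Z)=\alpha_P(X)$, then
$$\psi_h(P)=\EP{\alpha_P(X)\gamma_{h,P}(Z)}=\EP{\alpha_P(X)\,\EP{Y\mid Z}}=\EP{\alpha_P(X)Y}.$$
The last equality uses that $\alpha_P(X)$ is $\sigma(Z)$-measurable, being a.s. equal to $\alpha_{h,P}(Z)$. Conditioning on $X$ then gives $\EP{\alpha_P(X)Y}=\EP{\alpha_P(X)\gamma_P(X)}=\psi(P)$.

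If instead $\gamma_{h,P}(Z)=\gamma_P(X)$, then
$$\psi_h(P)=\EP{\alpha_{h,P}(Z)\gamma_P(X)}=\EP{\EP{\alpha_P(X)\mid Z}\gamma_P(X)}.$$
Since $\gamma_P(X)$ is a.s. $\sigma(Z)$-measurable, this equals $\EP{\alpha_P(X)\gamma_P(X)}=\psi(P)$.

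A more direct route for the second case is also available. One can use $m_h(Z,\gamma_{h,P})=m(X,\gamma_{h,P}\circ h)$, which equals $m(X,\gamma_P)$ a.s. because $\gamma_{h,P}\circ h=\gamma_P$ $P_X$-a.s. and $M$ is well defined on equivalence classes.

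\emph{Main obstacle.} The only delicate point is the measure-theoretic bookkeeping in Step 1. The AutoDML conditions for $M_h$ must hold with respect to $P_Z$, whereas the permissibility identity is pointwise. The transfer relies on the fact that $f\circ h=g\circ h$ holding $P_X$-a.s. is equivalent to $f=g$ holding $P_Z$-a.s. This is immediate from the definition of the pushforward measure, but it must be stated carefully.
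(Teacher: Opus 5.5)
Your proposal is correct and follows essentially the same route as the paper: you transfer linearity and mean-square continuity through $M_h(\eta)\circ h = M(\eta\circ h)$, identify $\alpha_{h,P}$ via the tower property and uniqueness of the Riesz representer, and handle the $\alpha$-case exactly as the paper does. The only minor difference is in the case $\gamma_{h,P}(Z)=\gamma_P(X)$, where the paper uses precisely the direct route you mention at the end, $\EP{m_h(Z,\gamma_{h,P})}=\EP{m(X,\gamma_{h,P}\circ h)}=\EP{m(X,\gamma_P)}$; your extra care with $P_X$- versus $P_Z$-almost sure statements just makes explicit what the paper dismisses as ``clear''.
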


Given a permissible representation, we define for $\gamma, \alpha \in \mathcal{M}^q$ the estimating function
$$
\varphi_h(Z,Y,\gamma,\alpha,\psi) = m_h(Z,\gamma) + \alpha(Z)(Y-\gamma(Z)) - \psi.
$$
Using $\varphi_h$ gives an AutoDML estimator of $\psi_h(P)$. Proposition \ref{prop:rep-functional} shows that if either the Riesz representer or the regression function is preserved by $h$, this AutoDML estimator is also an estimator of $\psi(P)$. As discussed in Section \ref{seq-autodml}, the estimator will, under regularity assumptions, have asymptotic variance equal to
$$
V_{h,P} = \EP{\varphi_h(Z,Y,\gamma_{h,P},\alpha_{h,P},\psi_h(P))^2}.
$$
Whether an AutoDML estimator based on $\varphi_h$ is preferable to one based on $\varphi$ depends on the relationship between $V_P$ and $V_{h,P}$. This leads us to the main theoretical result of the paper, showing that an AutoDML estimator based on a representation $Z$ that preserves predictive power of the outcome has smaller asymptotic variance than the estimator based on $X$.

\begin{theorem}\label{thm:delete_overadjust}
    Let $\psi$ be an AutoDML functional and let $h:\RR^d\to\RR^q$ induce a permissible representation $Z$. If 
    $$
    \gamma_{h,P}(Z) = \gamma_{P}(X) \quad \text{and} \quad \VP{Y \mid Z} = \VP{Y \mid X},
    $$
    then
    $$
    V_P - V_{h,P} = \EP{\VP{\alpha_P(X)\mid Z}\VP{Y\mid Z}} \geq 0. 
    $$
\end{theorem}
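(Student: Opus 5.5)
We decompose both asymptotic variances as in Section~\ref{seq-autodml}. By Proposition~\ref{prop:rep-functional}, $M_h$ is an AutoDML operator with Riesz representer $\alpha_{h,P}$. Hence the same computation that gives $V_P$ applies to $(Z,Y)$ and yields
$$
V_{h,P} = \VP{m_h(Z,\gamma_{h,P})} + \EP{\alpha_{h,P}(Z)^2\VP{Y\mid Z}}.
$$
Concretely, expand $\varphi_h^2$. The cross term $\EP{(m_h(Z,\gamma_{h,P})-\psi_h(P))\alpha_{h,P}(Z)(Y-\gamma_{h,P}(Z))}$ vanishes after conditioning on $Z$, because the first two factors are $Z$-measurable. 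Also $\psi_h(P)=\EP{m_h(Z,\gamma_{h,P})}$ by definition. Permissibility matters only through Proposition~\ref{prop:rep-functional}; no extra integrability issues arise, since $\alpha_{h,P}\in L^2$ by Jensen.

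Next we compare the two expressions term by term. Permissibility gives $m_h(Z,\gamma_{h,P}) = m(X,\gamma_{h,P}\circ h)$, and the assumption $\gamma_{h,P}(Z)=\gamma_P(X)$ a.s.\ says $\gamma_{h,P}\circ h=\gamma_P$ in $L^2(P_X)$. Mean-square continuity then gives $m(X,\gamma_{h,P}\circ h)=m(X,\gamma_P)$ a.s. Therefore the first terms coincide, and
$$
V_P - V_{h,P} = \EP{\alpha_P(X)^2\VP{Y\mid X}} - \EP{\alpha_{h,P}(Z)^2\VP{Y\mid Z}}.
$$

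For the remaining difference, use the hypothesis $\VP{Y\mid X}=\VP{Y\mid Z}$. The latter is $Z$-measurable, so by the tower property
$$
\EP{\alpha_P(X)^2\VP{Y\mid X}} = \EP{\EP{\alpha_P(X)^2\mid Z}\VP{Y\mid Z}}.
$$
Since $\alpha_{h,P}(Z)=\EP{\alpha_P(X)\mid Z}$, subtracting gives the integrand
$$
\bigl(\EP{\alpha_P(X)^2\mid Z}-\EP{\alpha_P(X)\mid Z}^2\bigr)\VP{Y\mid Z} = \VP{\alpha_P(X)\mid Z}\VP{Y\mid Z}.
$$
This product is nonnegative, which yields the claim.

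The main obstacle is justifying the tower-property step when the products may not be integrable. The integrand is nonnegative, so we would apply conditional monotone convergence, which is valid for nonnegative variables without integrability. We would also state all conditional quantities as versions defined almost surely. Finiteness of $V_P$ is implicit in the statement; if it fails, both sides are $+\infty$ and we interpret the identity accordingly.
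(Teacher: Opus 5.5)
Your proposal is correct and follows essentially the same route as the paper. Like the paper, you decompose both variances orthogonally, identify the $m$-terms via $\gamma_{h,P}\circ h=\gamma_P$, and then combine $\VP{Y\mid X}=\VP{Y\mid Z}$ with the tower property to turn $\EP{\alpha_P(X)^2\mid Z}-\alpha_{h,P}(Z)^2$ into $\VP{\alpha_P(X)\mid Z}$. Your closing aside is slightly loose: when $V_P=V_{h,P}=\infty$ the left-hand side is $\infty-\infty$, so the identity should be read under the tacit assumption $V_P<\infty$, which suffices because $V_{h,P}\le V_P$.
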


We will refer to representations that are adapted to the task of predicting outcomes as outcome-adapted. Theorem \ref{thm:delete_overadjust} not only shows us that outcome-adapted representations lead to smaller asymptotic variances, it also shows that to maximize the efficiency gain, we need the representation to carry as little information about $\alpha_P(X)$ as possible such that $\VP{\alpha_P(X)\mid Z}$ is large.

Note that Example 3.10 by \citet{christgau2025efficientadjustmentcomplexcovariates} shows that we cannot drop the assumption $\VP{Y \mid Z} = \VP{Y \mid X}$ in Theorem \ref{thm:delete_overadjust}, and our proof reveals how this is the natural minimal requirement of a representation, in addition to $\gamma_{h,P} = \gamma_P$, to ensure $V_{h,P} \leq V_P$.

\subsection{Riesz-adapted Covariate Representations}\label{sec:supp_precision}

Proposition \ref{prop:rep-functional} shows that $\psi(P)=\psi_h(P)$ if $\alpha_{h,P}(Z) = \alpha_{P}(X)$. Thus, it is natural to ask what happens if we use a Riesz-adapted representation in the spirit of DragonNet \citep{shi2019adapting} and RieszNet \citep{chernozhukov2022riesznetforestrieszautomaticdebiased}. 
It turns out that for a particular\footnote{Example \ref{ex: counterexample ase.} in Section \ref{seq: proofs sec:efficient-covars} shows that there exist cases where $V_P >  V_{h,P}$ while $\alpha_{h,P}(Z) = \alpha_{P}(X)$.} but broad class of functionals, $V_P \leq V_{h,P}$ if the 
representation preserves the Riesz representer, that is, if $\alpha_{h,P}(Z) = \alpha_{P}(X).$ 

\begin{lemma}\label{lem:int-treat-func}
    Assume that $X=(U,W)\in\RR^{d-k}\times\RR^{k}$ and that $U$ has conditional density $f_P(\cdot \mid w)$ given $W = w$ with respect to a $\sigma$-finite measure $\mu$ on $\RR^{d-k}$. Let $g:\RR^{d-k}\to\RR$ be measurable and define, whenever $\int |g(\tilde u) \nu(\tilde u,w)| \dd\mu(\tilde u)<\infty$, $m:\RR^d \times \mathcal{M}^d\to\RR$ by 
    $$
    m((u,w),\nu) = 
    \int g(\tilde u) \nu(\tilde u,w) \dd\mu(\tilde u)
    $$
    and $m((u, w), \nu) = 0$ otherwise. If, for all $P\in\mathcal{P}$, there exists $\kappa_P > 0$ such that, for $P_W \otimes \mu$-almost all $(u,w)\in\RR^{d-k}\times\RR^{k}$,
    \begin{equation}\label{eq:bound}
    \left|\frac{g(u)}{f_P(u\mid w)}\right|\leq \kappa_P
    \end{equation}
    then $\psi(P) = \EP{m(X,\gamma_P)}$ is an AutoDML functional with Riesz representer 
    $$
        \alpha_P(U,W) = \frac{g(U)}{f_P(U\mid W)}.
    $$
    Moreover, if $h:\RR^d\to\RR^q$ is of the form $h(u,w) = (u,\tilde h(w))$, then $Z = (U,\tilde h(W)) = (U,\tilde Z)$ is a permissible representation.  
\end{lemma}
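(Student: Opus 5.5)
We verify the two conditions of Definition~\ref{dfn-autoDML}, identify the Riesz representer via Proposition~\ref{prop:existence-of-riesz-rep}, and then construct $M_h$ explicitly.

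\textbf{Well-definedness and mean-square continuity.} Fix $P$ and $\nu\in\mathcal{L}^2(P_X)$. By \eqref{eq:bound}, for $P_W$-almost every $w$,
$$
\int |g(\tilde u)\nu(\tilde u,w)|\,\dd\mu(\tilde u)\le \kappa_P\int |\nu(\tilde u,w)| f_P(\tilde u\mid w)\,\dd\mu(\tilde u)=\kappa_P\,\EP{|\nu(X)|\mid W=w}.
$$
The right-hand side is finite for $P_W$-a.e.\ $w$, because $\EP{\nu(X)^2}<\infty$. Hence, $P_X$-a.s., $m(X,\nu)$ is given by the integral formula rather than the default value $0$. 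By the same bound and Jensen's inequality,
$$
m(X,\nu)^2\le \kappa_P^2\,\EP{|\nu(X)|\mid W}^2\le\kappa_P^2\,\EP{\nu(X)^2\mid W}.
$$
Taking expectations gives $\EP{m(X,\nu)^2}\le\kappa_P^2\EP{\nu(X)^2}$, which is condition \ref{cond:msq} with constant $\kappa_P^2$.

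\textbf{Linearity.} Let $\nu_1,\nu_2\in\mathcal{L}^2(P_X)$ and $a,b\in\RR$. Integrability holds $P_W$-a.s.\ for $\nu_1$, $\nu_2$ and $a\nu_1+b\nu_2$ simultaneously, so on that set linearity of the integral gives condition \ref{cond:lin}. The one subtlety is the default branch of the definition. On the exceptional null set the integral formula may fail, but the condition only asks for equality $P_X$-a.s., so this set can be ignored. This makes $M$ an AutoDML operator.

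\textbf{Riesz representer.} Since $m(X,\nu)$ depends only on $W$, Fubini's theorem together with the a.s.\ integrability above gives
$$
\EP{m(X,\nu)}=\int\!\!\int g(\tilde u)\nu(\tilde u,w)\,\dd\mu(\tilde u)\,\dd P_W(w).
$$
Inserting $f_P/f_P$ turns this into
$$
\int\!\!\int \frac{g(\tilde u)}{f_P(\tilde u\mid w)}\,\nu(\tilde u,w)\,f_P(\tilde u\mid w)\,\dd\mu(\tilde u)\,\dd P_W(w)=\EP{\alpha_P(X)\nu(X)}.
$$
This step needs care where $f_P=0$. The bound \eqref{eq:bound} holds $P_W\otimes\mu$-a.e., so $g(\tilde u)=0$ wherever $f_P(\tilde u\mid w)=0$, up to a null set, and the inserted factor costs nothing there. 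Moreover, $|\alpha_P|\le\kappa_P$ $P_X$-a.s., so $\alpha_P\in L^2(P_X)$. By the uniqueness part of Proposition~\ref{prop:existence-of-riesz-rep}, $\alpha_P$ is the Riesz representer.

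\textbf{Permissibility.} For $h(u,w)=(u,\tilde h(w))$, define for $\eta\in\mathcal{M}^q$ and $z=(u,\tilde z)$
$$
M_h(\eta)(u,\tilde z)=\int g(\tilde u)\,\eta(\tilde u,\tilde z)\,\dd\mu(\tilde u)
$$
when the integral converges absolutely, and $0$ otherwise. For $x=(u,w)$ we have $(\eta\circ h)(\tilde u,w)=\eta(\tilde u,\tilde h(w))$. The absolute-integrability condition for $\eta\circ h$ at $w$ is therefore exactly the condition for $\eta$ at $\tilde h(w)$. So $m_h(h(x),\eta)=m(x,\eta\circ h)$ holds pointwise in both branches, for every $\eta\in\mathcal{M}^q$. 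Measurability of $M_h(\eta)$ follows from Tonelli's theorem, since $\mu$ is $\sigma$-finite.

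\textbf{Main obstacle.} I expect the main difficulty to be the measure-theoretic bookkeeping rather than any estimate. One issue is the zero set of $f_P$, handled in the Riesz representer step. The other is the measurability of $w\mapsto m((u,w),\nu)$ and of the exceptional set where the integral diverges. Both are settled by Tonelli/Fubini using the $\sigma$-finiteness of $\mu$ and the fact that the bound \eqref{eq:bound} holds $P_W\otimes\mu$-a.e.
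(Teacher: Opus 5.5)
Your proposal is correct and follows essentially the same route as the paper. Both arguments use the bound \eqref{eq:bound} with conditional Jensen to get a.e.\ finiteness of the integral and mean-square continuity with constant $\kappa_P^2$, identify the Riesz representer by inserting $f_P/f_P$ and conditioning on $W$, and define $m_h$ by the same integral formula with matching integrability branches to obtain permissibility. Your explicit handling of the zero set of $f_P$, where $g=0$ a.e., and your check that $\alpha_P\in L^2(P_X)$ make precise two points the paper's proof uses only implicitly.
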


Examples of functionals that are of the type specified in Lemma \ref{lem:int-treat-func} include the ATE from Example \ref{ex-ate}, the average policy effect from Example \ref{ex-ape}, the mean missing outcome from Example \ref{ex:mmo}, and the weighted average derivative functional discussed by \citet{Chernozhukov_2022_auto_dml_lasso}. In the case of the ATE, the condition \eqref{eq:bound} is equivalent to the propensity scores being bounded away from 0 and 1. For the particular functionals and representations in Lemma \ref{lem:int-treat-func}, we have a converse of Theorem \ref{thm:delete_overadjust}, which shows that using a Riesz-adapted representation leads to a larger asymptotic variance. 

\begin{theorem}\label{thm:supp_with_precision}
    Let $\psi$ be an AutoDML functional and $Z = h(X)$ a permissible representation as in Lemma \ref{lem:int-treat-func}. If 
    $$
        \alpha_{h,P}(Z) = \alpha_{P}(X), 
    $$
    then
    $$
        V_{h,P} - V_P = \VP{R_P} \geq 0,
    $$
    where 
    $$
    R_P = \varphi_h(Z,Y,\gamma_{h,P},\alpha_{h,P},\psi(P)) - \varphi(X,Y,\gamma_P,\alpha_P,\psi(P)). 
    $$
\end{theorem}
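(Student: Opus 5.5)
The plan is to write $\varphi_h$ as $\varphi$ plus the remainder $R_P$, and to show that $R_P$ is uncorrelated with $\varphi$. Write $\varphi := \varphi(X,Y,\gamma_P,\alpha_P,\psi(P))$ and $\varphi_h := \varphi_h(Z,Y,\gamma_{h,P},\alpha_{h,P},\psi(P))$, so that $\varphi_h = \varphi + R_P$.

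\emph{Step 1 (the means vanish).} Since $\alpha_{h,P}(Z)=\alpha_P(X)$, Proposition \ref{prop:rep-functional} gives $\psi_h(P)=\psi(P)$. Double robustness of $\varphi$ and $\varphi_h$ at the true nuisances then gives $\EP{\varphi}=\EP{\varphi_h}=0$, and hence $\EP{R_P}=0$. Therefore
$$V_{h,P} = \EP{\varphi_h^2} = V_P + \VP{R_P} + 2\,\EP{\varphi R_P},$$
and it suffices to prove $\EP{\varphi R_P}=0$.

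\emph{Step 2 (simplify $R_P$).} Set $\delta = \gamma_{h,P}\circ h - \gamma_P \in L^2(P_X)$; this holds because conditional expectations stay in $L^2$. Permissibility gives $m_h(Z,\gamma_{h,P}) = m(X,\gamma_{h,P}\circ h)$, and linearity of $M$ (condition \ref{cond:lin}) applies. Substituting $\alpha_{h,P}(Z)=\alpha_P(X)$ into both estimating functions, the $Y$ terms cancel and
$$R_P = m(X,\delta) - \alpha_P(X)\,\delta(X).$$
In particular, $R_P$ is a function of $X$ alone. Now
$$\varphi = \bigl(m(X,\gamma_P)-\psi(P)\bigr) + \alpha_P(X)\bigl(Y-\gamma_P(X)\bigr),$$
and the second term has conditional mean zero given $X$. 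Since $\alpha_P(X)(Y-\gamma_P(X))R_P$ is integrable (all factors are in $L^2$ after conditioning), its expectation vanishes. This leaves
$$\EP{\varphi R_P} = \EP{\bigl(m(X,\gamma_P)-\psi(P)\bigr)R_P}.$$

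\emph{Step 3 (conditioning on $W$; the key step).} For the operators in Lemma \ref{lem:int-treat-func}, $m((u,w),\nu)$ does not depend on $u$, so $m(X,\gamma_P)$ and $m(X,\delta)$ are functions of $W$ only. The plan is to show $\EP{R_P\mid W}=0$ almost surely, which immediately gives $\EP{(m(X,\gamma_P)-\psi(P))R_P}=0$. This is a localized, conditional-on-$W$ version of the Riesz identity:
$$\EP{\alpha_P(X)\delta(X)\mid W=w} = \int \frac{g(u)}{f_P(u\mid w)}\,\delta(u,w)\,f_P(u\mid w)\,\dd\mu(u) = \int g(u)\delta(u,w)\,\dd\mu(u) = m((u,w),\delta).$$
The main care needed here is justification. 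The bound \eqref{eq:bound} forces $g=0$ $\mu$-a.e.\ wherever $f_P(\cdot\mid w)=0$, so the cancellation is valid $P_W\otimes\mu$-a.e. It also gives
$$\int |g\,\delta|\,\dd\mu \le \kappa_P\int |\delta|\,f_P\,\dd\mu < \infty$$
for $P_W$-a.e.\ $w$, so $m(X,\delta)$ is given by the integral formula rather than the default value $0$. The same estimate shows all products involved are integrable, via Cauchy--Schwarz and mean-square continuity.

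Combining the three steps gives $V_{h,P}-V_P=\VP{R_P}\ge 0$.
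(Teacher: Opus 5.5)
Your proof is correct and follows essentially the same route as the paper: write $V_{h,P}-V_P=\VP{R_P}+2\EP{R_P\varphi}$ with $\EP{R_P}=0$, use that $R_P$ is $\sigma(X)$-measurable to drop the $\alpha_P(X)(Y-\gamma_P(X))$ term by conditioning on $X$, and kill the remaining $m(X,\gamma_P)$ term by showing $\EP{R_P\mid W}=0$ through the conditional-on-$W$ Riesz identity. Combining the two $m$-terms via linearity into $\delta$, and checking explicitly that \eqref{eq:bound} forces $g=0$ where $f_P(\cdot\mid w)=0$, are just slightly more careful versions of the paper's computation.
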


\section{The Outcome-adapted AutoDML Estimator}\label{sec:OA_Autodml}
Theorem \ref{thm:delete_overadjust} shows that an AutoDML estimator based on an outcome-adapted representation $Z=h(X)$ can have smaller asymptotic variance than an AutoDML estimator based on $X$. The theorem additionally shows that to maximize the efficiency gain, the representation should discard as much information about $\alpha_P(X)$ as possible. In practice, any such representation is typically unknown. Thus, we have to estimate an outcome-adapted representation, $\hat{h}$, to turn the theoretical results into an estimation procedure. We can then subsequently construct an AutoDML estimator using $\varphi_{\hat h}$. We refer to this procedure as the outcome-adapted AutoDML estimator.

\begin{algorithm} 
\caption{The sample split outcome-adapted AutoDML estimator}
\label{alg:oa-autodml}
\begin{algorithmic}[1]
\Require An Auto-DML operator, a set, $\mathcal{H}$, of admissible representations, and a dataset $(X_1,Y_1), \ldots, (X_n, Y_n)$.
\State Partition $\{1, \dots n\}$ into two disjoint sets $I_1$ and $I_2$ of fixed proportion.
\State Estimate a regression function  of the form  $\hat{\gamma} = \hat{\eta} \circ \hat{h}$ with 
$\hat{h} \in \mathcal{H}$ via outcome regression using $((X_i,Y_i))_{i \in I_1}.$
\State Compute $\hat Z_i = \hat h(X_i)$ for $i = 1, \dots, n$.
\State Compute the estimate $\hat\gamma_{\hat{h}}$ by regression using $((\hat Z_i,Y_i))_{i\in I_1}$.
\State Compute the estimate $ \hat\alpha_{\hat{h}}$ by Riesz regression using $(\hat Z_i)_{i \in I_1}$.
\State Compute
\Statex
\begin{align*}
    \hat\psi
    &= \frac{1}{|I_2|}\sum_{i\in I_2}
    m_{\hat{h}}(\hat Z_i, \hat \gamma_{\hat{h}})
    + \hat\alpha_{\hat{h}}(\hat Z_i)
    \bigl(Y_i - \hat \gamma_{\hat{h}}(\hat Z_i)\bigr), \\
    \hat V
    &= \frac{1}{|I_2|}\sum_{i\in I_2}
    \left(
    m_{\hat{h}}(\hat Z_i, \hat \gamma_{\hat{h}})
    + \hat\alpha_{\hat{h}}(\hat Z_i)
    \bigl(Y_i - \hat \gamma_{\hat{h}}(\hat Z_i)\bigr)
    - \hat\psi
    \right)^2.
\end{align*}
\State \Return $\hat{\psi}, \hat{V}$
\end{algorithmic}
\end{algorithm}
We note that it is possible to take $\hat{\gamma}_{\hat{h}} = \hat{\eta}$ in Step 4 of the algorithm, where $\hat{\eta}$ is computed as part of Step 2. Thus, Step 4 is an optional fine tuning step of the outcome regression beyond estimation of the representation. In Section \ref{sec:neural_nets}, we propose a particular implementation of the algorithm using neural networks where $\hat{\gamma}_{\hat{h}} = \hat{\eta}$, and in our concrete implementation, $\hat{h}$ is composed of three hidden layers of the network. An alternative implementation could take $\mathcal{H}$ as a set of coordinate projections of the covariate space, and Step 2 could be a simple variable selection step, while Step 4 would then be the actual outcome regression using the selected variables, see Section \ref{seq:representation error}.

There is a loss of efficiency due to the simple sample splitting procedure relative to using cross-fitting, as the variance of $\hat{\psi}$ will scale with $|I_2|$ rather than $n$. In practice, the outcome-adapted AutoDML estimator could be implemented with cross-fitting or without sample splitting entirely. However, using cross-fitting would substantially complicate the asymptotic arguments in the sequel, as a different representation would be estimated for each cross-fitting fold.

The outcome-adapted AutoDML estimator is a generalization of the debiased outcome-adapted propensity estimator proposed by \citet{christgau2025efficientadjustmentcomplexcovariates} to the framework of automatic debiased machine learning.

\subsection{Asymptotics} \label{sec-asymp}
Since the representation $\hat{h}$ is estimated, we generally have $\psi(P) \neq \psi_{\hat{h}}(P).$
This motivates the error decomposition,
$$
\hat{\psi} - \psi(P) =  \underbrace{\hat{\psi}-\psi_{\hat{h}}(P)}_{\text{sampling error}} + \underbrace{\psi_{\hat{h}}(P)-\psi(P)}_{\text{rep. error}},
$$
and we refer to the first term as the sampling error and the second as the representation error. We analyze these separately below.

We state our asymptotic results as uniform distributional convergence over the statistical model $\mathcal{P}$,
see, e.g., Section 1.2 by \citet{Shah_2020} for an argument of why uniform distributional convergence is preferable to pointwise convergence. To this end, we use the following notation. Let $T_1,T_2,\dots$ be a sequence of random variables on $(\Omega,\mathbb{F})$, let $a_n > 0$ be a sequence of real numbers, and let $\Phi$ be the distribution function of the standard Gaussian distribution. If, for all $\varepsilon>0$, 
$$
\lim_{n\to\infty} \sup_{P\in\mathcal{P}}\PP{|T_n/a_n|>\varepsilon} = 0,
$$
we say that $T_n$ tends to $0$ in probability uniformly over $\mathcal{P}$ at rate (faster than) $a_n$ and write 
$$
T_n = o_\mathcal{P}(a_n).
$$
If
$$
\lim_{n\to\infty}\sup_{P\in\mathcal{P}}\sup_{t\in\RR} |\PP{T_n\leq t} - \Phi(t)| = 0,
$$
we say that $T_n$ converges in distribution to a standard Gaussian uniformly over $\mathcal{P}$ and write 
$$
T_n\stackrel{d/\mathcal{P}}{\to}\mathcal{N}(0,1).
$$

\subsection{Sampling Error}\label{seq:sampling error}

For our uniform asymptotic results, we first require a uniform version of the mean-square continuity that we assume in Definition~\ref{dfn-autoDML}.

\begin{assumption}[Uniform mean-square continuity]\label{assump-mean-square continuity}
    There exists $\kappa > 0$ such that for all $P \in \mathcal{P}$, it holds for all $\nu\in L^2(P_X)$,
    $$
    \EP{m(X,\nu)^2} \leq \kappa \EP{\nu(X)^2}.
    $$
\end{assumption}

We also need to impose some additional standard assumptions on our data-generating processes, our set of permissible representations and the outcome and Riesz regression estimators. 

\begin{assumption}\label{assump-asymptotics}
Let $D_1 = ((X_i, Y_i))_{i \in I_1}$ denote the data used for estimating the nuisance functions in Algorithm~\ref{alg:oa-autodml}. Let $X \sim P_X$ be independent of $D_1$ and $\hat{Z} = \hat{h}(X)$.
    \begin{enumerate}[label=(\roman*)]
        \item There exist $C, \delta, c > 0$ such that, for all $P \in \mathcal{P}$, we have $\EP{|Y|^{2+\delta}} \leq C$ and, for all $h \in \mathcal{H}$, we have, with $Z = h(X)$, $\EP{\varphi_{h,P}(Z,Y)^2} \geq c$ and
        \[
        \max \left\lbrace  |\alpha_{h, P}(Z)|,  \EP{\left(Y-\gamma_{h,P}(Z)\right)^{2} \mid Z}, \EP{\left|m_h(Z, \gamma_{h,P})-\psi_h(P)\right|^{2+\delta}}  \right\rbrace \leq C.
        \]
        In addition, $|\hat{\alpha}_{\hat{h}}(\hat{Z})| \leq C$.
        \item $\EP{\left(\hat\gamma_{\hat{h}}(\hat{Z}) - \gamma_{\hat{h}, P}(\hat{Z})\right)^2 \mid D_1} = o_\mathcal{P}(1)$,
        \item $\EP{\left(\hat\alpha_{\hat{h}}(\hat{Z}) - \alpha_{\hat{h}, P}(\hat{Z})\right)^2 \mid D_1} = o_\mathcal{P}(1)$,
        \item $n^{1/2}\EP{\left(\hat\alpha_{\hat{h}}(\hat{Z}) - \alpha_{\hat{h}, P}(\hat{Z})\right)^2 \mid D_1}^{1/2}     \EP{\left(\hat\gamma_{\hat{h}}(\hat{Z}) - \gamma_{\hat{h}, P}(\hat{Z})\right)^2 \mid D_1}^{1/2} = o_\mathcal{P}(1)$.
    \end{enumerate}
\end{assumption}

Assumption~\ref{assump-asymptotics}~(i) consists of mild regularity conditions and is mostly benign except for highly expressive representations (large $\mathcal{H}$). Assumption~\ref{assump-asymptotics}~(ii)-(iv) are conditions ensuring sufficiently fast convergence of the nuisance estimates and are reminiscent of other conditions imposed in the literature on doubly-robust estimators. Overall, these assumptions are similar to Assumptions 3 and 4 made by \citet{chernozhukov_24_via_riesz_regression}. 

We now state the main asymptotic result for the sampling error.

\begin{theorem}\label{thm-inferens}
    Under Assumptions \ref{assump-mean-square continuity} and \ref{assump-asymptotics}, the outcome-adapted AutoDML estimator satisfies
    $$
    \left(V_{\hat h,P}/|I_2|\right)^{-1/2}(\hat\psi - \psi_{\hat{h}}(P))\stackrel{d/\mathcal{P}}{\to}\mathcal{N}(0,1).
    $$
    In addition, it holds that 
    $$
    \hat{V} = V_{\hat{h},P} + o_{\mathcal{P}}(1).
    $$
\end{theorem}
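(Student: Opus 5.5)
Conditionally on $D_1$, the representation $\hat h$ and the nuisance estimates $\hat\gamma_{\hat h},\hat\alpha_{\hat h}$ are fixed, and the observations in $I_2$ are i.i.d.\ and independent of $D_1$. The plan is therefore a conditional-on-$D_1$ version of the standard doubly robust analysis, with every bound made uniform over $P\in\mathcal{P}$ and over the random $\hat h\in\mathcal{H}$.

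\textbf{Decomposition.} Write $n_2=|I_2|$ and, for fixed $h$, let $\varphi_{h,P}(Z,Y)=\varphi_h(Z,Y,\gamma_{h,P},\alpha_{h,P},\psi_h(P))$. Using Proposition \ref{prop:rep-functional}, $m_h$ is linear with Riesz representer $\alpha_{h,P}$. We decompose
$$
\hat\psi-\psi_{\hat h}(P)=\frac1{n_2}\sum_{i\in I_2}\varphi_{\hat h,P}(\hat Z_i,Y_i)+R_1+R_2+R_3 .
$$
With $\Delta_\gamma=\hat\gamma_{\hat h}-\gamma_{\hat h,P}$ and $\Delta_\alpha=\hat\alpha_{\hat h}-\alpha_{\hat h,P}$, the remainders are:
\begin{itemize}
\item $R_1=\frac1{n_2}\sum_{i\in I_2}\bigl\{m_{\hat h}(\hat Z_i,\Delta_\gamma)-\alpha_{\hat h,P}(\hat Z_i)\Delta_\gamma(\hat Z_i)\bigr\}$;
\item $R_2=\frac1{n_2}\sum_{i\in I_2}\Delta_\alpha(\hat Z_i)(Y_i-\gamma_{\hat h,P}(\hat Z_i))$;
\item $R_3=-\frac1{n_2}\sum_{i\in I_2}\Delta_\alpha(\hat Z_i)\Delta_\gamma(\hat Z_i)$.
\end{itemize}

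\textbf{Bounding the remainders.} We show $\sqrt{n_2}R_j=o_{\mathcal{P}}(1)$ for each $j$.
\begin{itemize}
\item $R_1$ and $R_2$ have conditional mean zero given $D_1$: for $R_1$ by the Riesz property of $\alpha_{\hat h,P}$, and for $R_2$ because $\EP{Y-\gamma_{\hat h,P}(\hat Z)\mid \hat Z,D_1}=0$.
\item Their conditional variances are bounded as follows. For $R_1$ the bound is $\lesssim n_2^{-1}\EP{\Delta_\gamma^2\mid D_1}$, using uniform mean-square continuity (Assumption \ref{assump-mean-square continuity}) transferred to $m_h$ and $|\alpha_{h,P}|\le C$. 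For $R_2$ the bound is $\le n_2^{-1}C\,\EP{\Delta_\alpha^2\mid D_1}$, using the bound on the conditional residual variance.
\item Mean-square continuity must be checked for $m_h$. Since $m_h(Z,\eta)=m(X,\eta\circ h)$, this is immediate with the same $\kappa$.
\item A conditional Chebyshev inequality, combined with Assumption \ref{assump-asymptotics}(ii)--(iii), gives $\sqrt{n_2}(R_1+R_2)=o_{\mathcal{P}}(1)$. The step from conditional to unconditional probability uses that $\PP{|A|>\epsilon}\le \EP{\min(1,\PP{|A|>\epsilon\mid D_1})}$.
\item For $R_3$, Cauchy--Schwarz and Markov give $\EP{|R_3|\mid D_1}\le \EP{\Delta_\alpha^2\mid D_1}^{1/2}\EP{\Delta_\gamma^2\mid D_1}^{1/2}$. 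Since $n_2\asymp n$, Assumption \ref{assump-asymptotics}(iv) yields $\sqrt{n_2}R_3=o_{\mathcal{P}}(1)$.
\end{itemize}

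\textbf{Uniform CLT for the leading term.} This is the main obstacle. The summands are i.i.d.\ given $D_1$, but their law depends on the random $\hat h$. We will apply the Berry--Esseen bound (or Lyapunov with explicit rate) conditionally on $D_1$. This requires a uniform-in-$h$ bound on $\EP{|\varphi_{h,P}|^{2+\delta}}$ together with the lower bound $\EP{\varphi_{h,P}^2}\ge c$ from Assumption \ref{assump-asymptotics}(i).

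The $2+\delta$ moment is bounded using the stated moment conditions. Write $\varphi_{h,P}=(m_h-\psi_h)+\alpha_{h,P}(Y-\gamma_{h,P})$ and use $|\alpha_{h,P}|\le C$. The term $\EP{|Y-\gamma_{h,P}(Z)|^{2+\delta}}$ is controlled by $\EP{|Y|^{2+\delta}}\le C$ together with Jensen for conditional expectations. This gives a bound on $\sup_t|\PP{\cdot\le t\mid D_1}-\Phi(t)|$ that is deterministic, of order $n_2^{-\delta/2}$, and uniform in $P$ and $\hat h$. Integrating over $D_1$ then gives the uniform CLT with normalization $V_{\hat h,P}$.

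Finally, the remainders are absorbed by Slutsky in its uniform version, which uses $V_{\hat h,P}\ge c$ and the uniform continuity of $\Phi$.

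\textbf{Variance consistency.} Expand $\hat V$ around $\frac1{n_2}\sum_{i\in I_2}\varphi_{\hat h,P}(\hat Z_i,Y_i)^2$. The correction terms are controlled by Cauchy--Schwarz, using the $L^2$ rates in (ii)--(iii), the bounds $|\hat\alpha_{\hat h}|,|\alpha_{\hat h,P}|\le C$, mean-square continuity, and $\hat\psi-\psi_{\hat h}(P)=o_{\mathcal{P}}(1)$. The main term converges to $V_{\hat h,P}$ by a conditional weak law of large numbers. We obtain it via a von Bahr--Esseen bound using the uniform $1+\delta/2$ moment of $\varphi_{h,P}^2$. One cross term, $\Delta_\gamma$ multiplied by the residual $Y-\gamma_{\hat h,P}$, needs the conditional residual variance bound.
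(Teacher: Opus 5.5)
Your proposal is correct and is essentially the paper's argument. You condition on $D_1$ and use a CLT and LLN for $\varphi_{h,P}$ that hold uniformly over $(P,h)$, thanks to the uniform $2+\delta$ moment and the lower bound $c$; the paper cites Lemmas 18--20 of Shah and Peters where you invoke Berry--Esseen, von Bahr--Esseen and a uniform Slutsky step directly, with $\delta$ replaced by $\min(\delta,1)$ in the rate. You then integrate over $D_1$ and kill the remainder with mean-square continuity, (ii)--(iii) and the product-rate condition (iv). The only difference is bookkeeping: the paper centers the whole remainder $r_\varphi$ and identifies its conditional mean as $-\EP{\Delta_\alpha(\hat Z)\Delta_\gamma(\hat Z)\mid D_1}$, whereas you split off $R_3$ and bound it in conditional $L^1$, which means your CLT step needs only $|\alpha_{\hat h,P}|\le C$ and not the bound on $|\hat\alpha_{\hat h}|$.
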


\begin{corollary}\label{corr-inference}
    Under Assumptions \ref{assump-mean-square continuity} and \ref{assump-asymptotics},
    $$
    \left(\hat V/|I_2|\right)^{-1/2}(\hat\psi - \psi_{\hat{h}}(P))\stackrel{d/\mathcal{P}}{\to}\mathcal{N}(0,1).
    $$
    In particular, it holds for $t\in(0,1)$,
    $$
    \sup_{P\in\mathcal{P}} \left| P\left(\psi_{\hat{h}}(P) \in \hat\psi \, \pm\,  q_{1-t/2}\left(\hat V/|I_2|\right)^{1/2} \right) - (1-t) \right|\to 0,
    $$
    where $q_{1-t/2}$ is the $1-t/2$-quantile of a standard normal distribution. 
\end{corollary}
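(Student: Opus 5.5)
The corollary follows from Theorem \ref{thm-inferens} by a uniform Slutsky-type argument. Write
$$
\left(\hat V/|I_2|\right)^{-1/2}(\hat\psi - \psi_{\hat{h}}(P)) = \left(V_{\hat h,P}/\hat V\right)^{1/2} \cdot T_n, \qquad T_n = \left(V_{\hat h,P}/|I_2|\right)^{-1/2}(\hat\psi - \psi_{\hat{h}}(P)).
$$
By Theorem \ref{thm-inferens}, $T_n\stackrel{d/\mathcal{P}}{\to}\mathcal{N}(0,1)$. It therefore suffices to show that $(V_{\hat h,P}/\hat V)^{1/2} = 1 + o_\mathcal{P}(1)$, and that a uniform version of Slutsky's lemma holds.

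\textbf{The ratio.} Assumption \ref{assump-asymptotics}~(i) gives $V_{h,P} = \EP{\varphi_{h,P}(Z,Y)^2} \geq c > 0$ for every $h\in\mathcal{H}$. Since $\hat h\in\mathcal{H}$ and $\hat h$ is independent of the evaluation point, $V_{\hat h,P}\geq c$ holds surely. Theorem \ref{thm-inferens} gives $\hat V - V_{\hat h,P} = o_\mathcal{P}(1)$. Hence
$$
\left|\hat V/V_{\hat h,P} - 1\right| \leq c^{-1}|\hat V - V_{\hat h,P}| = o_\mathcal{P}(1).
$$
On the event $\{|\hat V/V_{\hat h,P}-1|\leq 1/2\}$, the map $x\mapsto x^{-1/2}$ is Lipschitz near $1$. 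This yields $(V_{\hat h,P}/\hat V)^{1/2}-1 = o_\mathcal{P}(1)$, and the complementary event has probability tending to zero uniformly in $P$.

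\textbf{Uniform Slutsky.} Let $A_n = (V_{\hat h,P}/\hat V)^{1/2}$ and fix $t\in\RR$ and $\epsilon\in(0,1)$. On $\{|A_n-1|\leq\epsilon\}$, the event $\{A_nT_n\leq t\}$ is sandwiched between $\{T_n\leq t/(1+\epsilon)\}$ and $\{T_n\leq t/(1-\epsilon)\}$ when $t\ge0$, with the roles reversed when $t<0$. Consequently
$$
|\PP{A_nT_n\leq t}-\Phi(t)| \leq \sup_s|\PP{T_n\leq s}-\Phi(s)| + \sup_{t}\max_\pm|\Phi(t/(1\pm\epsilon))-\Phi(t)| + \PP{|A_n-1|>\epsilon}.
$$
The first and third terms vanish uniformly over $P$ and $t$. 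The middle term is bounded by a quantity $\omega(\epsilon)$, independent of $P$ and $t$, that tends to $0$ as $\epsilon\to0$, because $\sup_t|\Phi(t)-\Phi(\lambda t)|\to0$ as $\lambda\to1$ ($t\phi(t)$ is bounded). Letting $n\to\infty$ and then $\epsilon\to0$ gives the first claim.

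\textbf{Coverage.} The event $\psi_{\hat h}(P)\in\hat\psi\pm q_{1-t/2}(\hat V/|I_2|)^{1/2}$ equals $\{|A_nT_n|\leq q_{1-t/2}\}$, up to the boundary event $\{\hat V=0\}$, which has vanishing probability. Its probability is a difference of two values of the distribution function of $A_nT_n$, up to an atom at $-q_{1-t/2}$. Uniform Kolmogorov convergence to the continuous $\Phi$ controls both, as well as the atom, which is bounded by
$$
\PP{A_nT_n\le -q}-\PP{A_nT_n\le -q-\eta}\to \Phi(-q)-\Phi(-q-\eta)
$$
for arbitrarily small $\eta$. This gives $\Phi(q_{1-t/2})-\Phi(-q_{1-t/2}) = 1-t$ uniformly over $\mathcal{P}$.

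The main obstacle is uniformity in the Slutsky step, which is handled by the explicit sandwich bound above. The other essential ingredient is the lower bound $V_{\hat h,P}\ge c$ taken from Assumption \ref{assump-asymptotics}~(i).
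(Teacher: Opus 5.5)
Your proposal is correct and follows essentially the same route as the paper: the same factorization into $T_n$ times a variance ratio, with the ratio controlled by the lower bound $V_{\hat h,P}\ge c$ and by $\hat V = V_{\hat h,P}+o_\mathcal{P}(1)$ from Theorem \ref{thm-inferens}. The paper gets the lower bound from Lemma \ref{lemma-asymp_alt} \eqref{fourth claim_alt}, which is exactly this consequence of Assumption \ref{assump-asymptotics}~(i). It cites Lemma 20 of \citet{Shah_2020} for the uniform Slutsky step and says the coverage claim ``follows directly'', whereas you prove both by hand via the sandwich bound and the atom argument, which is a valid self-contained substitute.
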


Consequently, conditionally on $\hat{h}$, the outcome-adapted AutoDML estimator approximately has a Gaussian distribution with mean $\psi_{\hat{h}}(P)$ and variance $V_{\hat{h},P} / |I_2|$.

\subsection{Representation Error and Variable Selection}\label{seq:representation error}

With regard to the representation error, we have the following result.

\begin{proposition}\label{thm-rep-error}
    Let $\kappa_P \geq 0$ be the constant from Definition \ref{dfn-autoDML}. If $X \sim P_X$ independent of $D_1$ and $\hat{Z} = \hat{h}(X)$, then 
    $$
    \left|\psi_{\hat{h}}(P) - \psi(P)\right| \leq \kappa_P^{1/2} \EP{\left(\gamma_{\hat{h},P}(\hat{Z}) - \gamma_P(X)\right)^2\mid D_1}^{1/2}.
    $$ 
\end{proposition}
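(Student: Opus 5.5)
Conditionally on $D_1$, the map $\hat h$ is fixed, and $X$ is independent of $D_1$. So I will prove the bound for an arbitrary fixed permissible $h$, with $Z=h(X)$, and then substitute $h=\hat h$ (all expectations being conditional on $D_1$).

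The plan is to write both functionals as expectations of $m$ applied to regression functions of $X$, and then use linearity together with mean-square continuity. By permissibility, $m_h(Z,\gamma_{h,P}) = m(X,\gamma_{h,P}\circ h)$. Hence
\[
\psi_h(P) = \EP{m(X,\gamma_{h,P}\circ h)},
\]
while $\psi(P)=\EP{m(X,\gamma_P)}$ by definition. Both $\gamma_{h,P}\circ h$ and $\gamma_P$ lie in $\mathcal{L}^2(P_X)$, since $\EP{Y^2}<\infty$ and conditional expectation contracts the $L^2$ norm. Linearity, condition (a) of Definition \ref{dfn-autoDML}, then gives
\[
\psi_h(P)-\psi(P) = \EP{m(X,\gamma_{h,P}\circ h-\gamma_P)}.
\]

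Next I apply Jensen (or Cauchy--Schwarz):
\[
|\EP{m(X,\nu)}| \le \EP{m(X,\nu)^2}^{1/2}.
\]
Mean-square continuity, condition (b), bounds the right-hand side by $\kappa_P^{1/2}\EP{\nu(X)^2}^{1/2}$. Taking $\nu=\gamma_{h,P}\circ h-\gamma_P$ yields
\[
|\psi_h(P)-\psi(P)| \le \kappa_P^{1/2}\,\EP{(\gamma_{h,P}(Z)-\gamma_P(X))^2}^{1/2}.
\]

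The only delicate point is formalizing the conditioning on $D_1$. Because $X$ is independent of $D_1$, the conditional expectation given $D_1$ of a function $F(X,D_1)$ is $x$-integration of $F(\cdot,d)$ against $P_X$, evaluated at $d=D_1$ (the standard substitution/freezing lemma). Applying the fixed-$h$ bound pointwise in $d$ with $h=\hat h(d)$ then gives the stated inequality almost surely. This requires joint measurability of $(x,d)\mapsto \hat h_d(x)$, which I will assume is implicit in the setup.
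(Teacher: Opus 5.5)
Your proof is correct and follows essentially the same route as the paper: rewrite $\psi_{\hat h}(P)$ via permissibility as $\EP{m(X,\gamma_{\hat h,P}\circ \hat h)\mid D_1}$, then apply Jensen and mean-square continuity to the difference (the paper uses linearity implicitly in that last step, which you state explicitly). Your treatment of the conditioning on $D_1$, fixing $h$ and then substituting $\hat h$ via independence, spells out what the paper leaves implicit.
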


Proposition \ref{thm-rep-error} shows that if we, in addition to Assumptions \ref{assump-mean-square continuity} and \ref{assump-asymptotics}, also have
\begin{equation}\label{eq-representation convergence rate}
\EP{\left(\gamma_{\hat{h},P}(\hat{Z}) - \gamma_P(X)\right)^2\mid D_1}^{1/2} = o_\mathcal{P}(n^{-1/2}), 
\end{equation}
then Corollary \ref{corr-inference} implies that the Gaussian confidence interval
$$
\hat{\psi} \pm 1.96 \cdot \sqrt{\hat{V}/|I_2|}
$$
will have 95\% asymptotic coverage for $\psi(P)$ uniformly over $\mathcal{P}$. 

To demonstrate that \eqref{eq-representation convergence rate} is achievable, we consider 
a variable selection setup. To this end, suppose $\gamma_P(x) = \eta_P (\beta_P^\top x)$ for $\beta_P \in \mathbb{R}^d$ and $\eta_P : \mathbb{R} \to \mathbb{R}$,
and that the permissible representations in $\mathcal{H}$ are all coordinate projections defined on $\mathbb{R}^d$. We define the support of $\gamma_P$ as
$$
S(P) = \{i \in \{1, \ldots, d\} : \beta_{P, i} \neq 0\},
$$
and the corresponding oracle representation $h_P : \mathbb{R}^d \to \mathbb{R}^{S(P)}$ as the 
projection onto the coordinates corresponding to the indices in $S(P)$. Then $\gamma_P = \gamma_{h_P, P} \circ h_P$ with 
$$
    \gamma_{h_P, P}((x_i)_{i \in S(P)}) = \eta_P \left(\sum_{i \in S(P)} \beta_{P, i} x_i\right).
$$
The estimated coordinate projection, $\hat{h} : \mathbb{R}^d \to \mathbb{R}^{\hat{S}}$, is determined entirely 
by the set $\hat{S} \subseteq  \{1, \ldots, d\}$ of selected variables. Within this setup, we show the following corollary of Proposition \ref{thm-rep-error}.

\begin{corollary} \label{cor-var-select}
With $\mathcal{H}$ a set of permissible coordinate projections on $\mathbb{R}^d$, then 
$$
    \sup_{P \in \mathcal{P}} \mathbb{P}_P(\psi_{\hat{h}}(P) \neq \psi(P)) \leq 
    1 - \inf_{P \in \mathcal{P}} \mathbb{P}_P(S(P) \subseteq \hat{S}).
$$
In particular, if
\begin{equation}
    \label{eq:uniform-SURE}    
        \inf_{P \in \mathcal{P}} \mathbb{P}_P(S(P) \subseteq \hat{S}) \to 1
\end{equation}   
then
$$
    \left|\psi_{\hat{h}}(P) - \psi(P)\right| = o_\mathcal{P}(a_n)
$$
for any sequence $a_n > 0$.
\end{corollary}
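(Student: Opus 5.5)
The plan is to show that on the event $\{S(P)\subseteq \hat S\}$ the representation error vanishes identically, and then use the remaining event to bound the probability.

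\textbf{Step 1: superset of the support preserves the regression function.} Fix a realization of $D_1$ with $S(P)\subseteq\hat S$. Since $\hat h$ is the coordinate projection onto $\hat S$ and $\beta_{P,i}=0$ for $i\notin S(P)$, we have
$$
\gamma_P(x)=\eta_P\Bigl(\sum_{i\in S(P)}\beta_{P,i}x_i\Bigr),
$$
which is a measurable function of $\hat h(x)=(x_i)_{i\in\hat S}$. Because $\hat h$ is computed from $D_1$ only and $X$ is independent of $D_1$, we may treat $\hat h$ as a fixed $h$. The tower property then gives
$$
\gamma_{\hat h,P}(\hat Z)=\EP{\gamma_P(X)\mid \hat Z,D_1}=\gamma_P(X)\quad\text{a.s.}
$$

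\textbf{Step 2: equality of the functionals.} With this a.s. identity, the second part of Proposition \ref{prop:rep-functional}, applied with the deterministic $h=\hat h$, gives $\psi_{\hat h}(P)=\psi(P)$. Alternatively, the right-hand side of Proposition \ref{thm-rep-error} is zero. Hence
$$
\{S(P)\subseteq\hat S\}\subseteq\{\psi_{\hat h}(P)=\psi(P)\},
$$
so $\mathbb{P}_P(\psi_{\hat h}(P)\neq\psi(P))\le 1-\mathbb{P}_P(S(P)\subseteq\hat S)$. Taking the supremum over $P$ turns the right-hand side into $1-\inf_P\mathbb{P}_P(S(P)\subseteq\hat S)$, which is the first claim.

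\textbf{Step 3: the rate statement.} For any $a_n>0$ and $\varepsilon>0$,
$$
\mathbb{P}_P\bigl(|\psi_{\hat h}(P)-\psi(P)|/a_n>\varepsilon\bigr)\le\mathbb{P}_P\bigl(\psi_{\hat h}(P)\neq\psi(P)\bigr).
$$
Taking the supremum over $P$, the right-hand side tends to $0$ by \eqref{eq:uniform-SURE} and Step 2, so the difference is $o_{\mathcal P}(a_n)$ for every sequence $a_n$.

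\textbf{Main obstacle.} The only delicate point is the conditioning in Step 1: we must justify treating the data-dependent $\hat h$ as fixed. This follows from independence of $X$ and $D_1$, since conditioning on $D_1$ reduces everything to a deterministic coordinate projection. The measurability of the event $\{S(P)\subseteq\hat S\}$ is immediate because $\hat S$ takes finitely many values.
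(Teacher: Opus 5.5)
Your proposal is correct and matches the paper's proof. On $\{S(P)\subseteq\hat S\}$ the tower property gives $\gamma_{\hat h,P}(\hat Z)=\gamma_P(X)$, so the representation error vanishes there, and the rate claim follows from bounding $\mathbb{P}_P(|\psi_{\hat h}(P)-\psi(P)|>\varepsilon a_n)$ by $\mathbb{P}_P(\psi_{\hat h}(P)\neq\psi(P))$.

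The paper finishes the middle step with your second option, Proposition~\ref{thm-rep-error}, multiplying its bound by $\ind_{\{S(P)\not\subseteq\hat S\}}$. Your first option, Proposition~\ref{prop:rep-functional} applied to each fixed value of $\hat S$, works equally well.
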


The property \eqref{eq:uniform-SURE} is a uniform version of the sure screening property introduced by \citet{fanSureIndependenceScreening2008}. When $\eta_P$ is the identity, their Sure Independence Screening (SIS) algorithm has the sure screening property \citep[Theorem 1]{fanSureIndependenceScreening2008} under their conditions 1--4, and if $P \in \mathcal{P}$ fulfills a uniform version of these conditions, 
\eqref{eq:uniform-SURE} holds for the SIS algorithm. The high dimensional ordinary least squares projection (HOLP) estimator by \citet{wangHighDimensionalOrdinary2016} is another example of a variable selection method with the sure screening property. When our representation learning step consists of variable selection, our outcome-adapted AutoDML estimator becomes analogous to the outcome-adaptive LASSO proposed by \citet{OAL}, which selects covariates for propensity score estimation as part of estimating ATE. 

 Without additional assumptions it seems difficult to relax the strong condition \eqref{eq-representation convergence rate} to ensure that the representation error vanishes, and, unless all representations have a particular structure, such as being coordinate projections, the required $\sqrt{n}$-rate might be unattainable. Note, however, that \eqref{eq-representation convergence rate} is a condition on the information that $\hat{h}(X)$ carries about $Y$ relative to $X$, and not a condition on the convergence rate of the procedure of regressing $Y$ onto $\hat{h}(X)$. Condition \eqref{eq-representation convergence rate} is, nevertheless, strong, and if we cannot justify that the representation error is negligible for a practical application,
 it is advisable to compute confidence intervals via bootstrapping for the general outcome-adapted AutoDML estimator as proposed by \citet{christgau2025efficientadjustmentcomplexcovariates} for DOPE.

\section{Outcome-adapted Neural Networks}\label{sec:neural_nets}
One way of constructing an outcome-adapted representation is to regress outcomes onto covariates using a feed-forward neural network and use the output of one of the hidden layers as the representation. We can then subsequently use this representation to estimate the Riesz representer.

To be specific, we consider a neural network, as depicted in Figure \ref{fig:riesz_net_architecture}, of arbitrary depth with arbitrary activation functions, consisting of a shared representation trunk followed by an outcome branch with a linear output layer and a Riesz representer branch, also with a linear output layer. We introduce the two loss functions
\begin{align*}
        \ell_\text{MSE}(\eta_\gamma, h) & = \frac{1}{n}\sum_{i=1}^n (Y_i- \eta_\gamma(h(X_i)))^2, \\
        \ell_{\text{Riesz}}(\eta_\alpha, h) & = \frac{1}{n} \sum_{i=1}^n \eta_\alpha(h(X_i))^2 - 2 m_h(h(X_i), \eta_\alpha),
\end{align*}
with the regression function $\gamma = \eta_\gamma \circ h$ and the Riesz representer $\alpha = \eta_\alpha \circ h$ sharing the representation trunk $h : \mathbb{R}^d \to \mathbb{R}^q$, and with \mbox{$\eta_\gamma, \eta_\alpha : \mathbb{R}^q \to \mathbb{R}$.} We denote by $w_h$, $w_\gamma$, and $w_\alpha$ the parameters associated with the representation trunk, the outcome branch, and the Riesz representer branch, respectively. We train the network by a two-step procedure:

\begin{enumerate}
    \item Optimize $w_h$ and $w_\gamma$ by minimizing the mean-square loss $\ell_\text{MSE}(\eta_\gamma, h)$.
    \item Freeze $w_h$ and optimize $w_\alpha$ by minimizing the Riesz loss $\ell_{\text{Riesz}}(\eta_\alpha, \hat{h})$.
\end{enumerate}

As we fix the representation after estimating the regression function, the intended effect is that the representation retains information that is relevant for predicting outcomes. We refer to a neural network trained in this manner as an outcome-adapted neural network. We could also obtain a Riesz-adapted neural network by reversing the above training procedure and learning $w_h$ as part of the Riesz regression. 

To maximize the variance reduction in Theorem \ref{thm:delete_overadjust}, we need the representation to carry as little information as possible about $\alpha_P(X)$. To achieve this, we consider two implementations of an information bottleneck at the shared representation layer.

\begin{itemize}
    \item \textbf{Simple dimensionality reduction.} We can enforce an information bottleneck by choosing a small dimension for the shared representation $Z$. In practice, we would choose the dimension $q$ of $Z$ by cross-validation during the first step of the training procedure. 
    \item \textbf{Adaptive dimensionality reduction.} Instead of forcing $Z$ to have a particular dimension, we can encourage the neural network to learn a low-dimensional representation $\hat{Z}$. We do this using a group LASSO type penalty on the weights in the $Z$-layer of the neural network. Specifically, with $\tilde w_j$ being the weights on the edges pointing to the $j$-th neuron of the layer (the $j$-th coordinate of $Z$), we add the following penalty to the MSE loss when optimizing it in the first step of the training procedure,
    $$
    \lambda \sum_{j} \| \tilde w_j \|_2.
    $$
    Here, $\lambda>0$ is a tuning parameter controlling the magnitude of the penalty. This type of group LASSO penalty has the ability to set entire neurons to zero leading to a dimensionality reduction. The parameter $\lambda$ can be chosen by cross-validation.
\end{itemize}

\subsection{Relationship to RieszNet}\label{seq: RieszNet}
Our architecture is inspired by the RieszNet architecture developed by \citet{chernozhukov2022riesznetforestrieszautomaticdebiased}. RieszNet is also a neural network that uses a shared representation for estimation of the outcome and Riesz representer. The main conceptual difference is that the representation in RieszNet is intended to be Riesz-adapted. As part of our simulation study, we will examine the relationship between the outcome-adapted neural network we propose and RieszNet, and we therefore include a description of their implementation differences.

There are two implementation differences between an outcome-adapted neural network and RieszNet. Firstly, RieszNet introduces an additional parameter $\epsilon \in \RR$ that modifies the regression function as
$$
\tilde{\gamma}(X) = \gamma(X) + \epsilon \alpha(X).
$$
During training, RieszNet incorporates an additional so-called TMLE loss motivated by the TMLE framework proposed by \citet{vanderLaanRose2011}. The TMLE loss is a mean-square error loss on the modified regression function, 
$$
\ell_{\text{TMLE}}(\gamma, \alpha, \epsilon) = \frac{1}{n} \sum_{i=1}^n (Y_i- \tilde{\gamma}(X_i))^2.
$$
Secondly, the training procedure is different. RieszNet optimizes all parameters, $w_h$, $w_\gamma$, $w_\alpha$, and $\epsilon$, simultaneously by minimizing the loss
$$
\ell_{\text{RieszNet}} = \lambda_\text{MSE}\ell_{\text{MSE}} + \lambda_\text{Riesz} \ell_{\text{Riesz}} + \lambda_\text{TMLE} \ell_{\text{TMLE}},
$$
where $\lambda_\text{MSE}, \lambda_\text{Riesz}, \lambda_\text{TMLE} \geq 0$ are user-chosen parameters. \citet{chernozhukov2022riesznetforestrieszautomaticdebiased} take $\lambda_\text{MSE} = 1$, and $\lambda_\text{Riesz}$ and $\lambda_\text{TMLE}$ then weigh the individual loss components relative to the MSE loss. Figure \ref{fig:riesznet_comparison}  illustrates the similarities and differences between the outcome-adapted neural network and RieszNet.
\begin{figure}
\centering
\begin{tikzpicture}[
    font=\small,
    >=stealth,
    sharedcolor/.style={fill=blue!25},
    outcomecolor/.style={fill=red!25},
    rieszcolor/.style={fill=orange!25},
    tmlecolor/.style={fill=yellow!30},
    layer/.style={
        rectangle,
        draw,
        rounded corners,
        minimum width=2.2cm,
        minimum height=1.0cm,
        align=center
    },
    io/.style={
        circle,
        draw,
        minimum size=0.9cm,
        align=center
    },
    procbox/.style={
        rectangle,
        draw,
        rounded corners,
        fill=gray!8,
        text width=6.4cm,
        minimum height=3.8cm,
        align=left,
        inner sep=4pt
    },
    lossbox/.style={
        rectangle,
        draw,
        rounded corners,
        minimum width=2.8cm,
        minimum height=1.2cm,
        align=center,
        inner sep=4pt,
        font=\small
    }
]


\node[io, fill=green!25] (X) at (0,0) {$X$};

\node[layer, sharedcolor] (T1) at (2.2,0)
{Hidden\\Layers};

\node[io, sharedcolor] (Z) at (4.4,0)
{$Z$};

\coordinate (split) at (5.7,0);

\node[layer, outcomecolor] (G1) at (7.6,1.5)
{Hidden\\Layers};

\node[io, outcomecolor] (gamma) at (9.9,1.5)
{$\gamma$};

\node[layer, rieszcolor] (A1) at (7.6,-1.5)
{Hidden\\Layers};

\node[io, rieszcolor] (alpha) at (9.9,-1.5)
{$\alpha$};



\node[
    rectangle,
    draw=black,
    rounded corners,
    fill=yellow!30,
    inner sep=5pt
]
(eqbox)
at (12.4,0)
{
$\tilde{\gamma} = \gamma + \epsilon \alpha$
};


\node[
    font=\bfseries,
    blue!70!black
]
at (3.3,1.0)
{Shared Trunk};

\node[
    font=\bfseries,
    red!70!black
]
at (8.7,2.25)
{Outcome Branch};

\node[
    font=\bfseries,
    orange!80!black
]
at (8.7,-2.25)
{Riesz Branch};


\draw[purple!70!black, thick]
(-0.3,3.2)
--
(10.2,3.2);

\draw[purple!70!black, thick]
(-0.3,3.2)
--
(-0.3,2.8);

\draw[purple!70!black, thick]
(10.2,3.2)
--
(10.2,2.8);

\node[
    purple!70!black,
    font=\bfseries
]
at (4.9,3.55)
{Shared Across Both Architectures};

\draw[yellow!50!black, thick]
(10.6,3.2)
--
(13.3,3.2);

\draw[yellow!50!black, thick]
(10.6,3.2)
--
(10.6,2.8);

\draw[yellow!50!black, thick]
(13.3,3.2)
--
(13.3,2.8);

\node[
    yellow!50!black,
    font=\bfseries,
    align=center
]
at (11.95,3.7)
{RieszNet TMLE \\Correction};


\draw[->, thick] (X) -- (T1);
\draw[->, thick] (T1) -- (Z);

\draw[thick] (Z.east) -- (split);

\filldraw[black] (split) circle (2pt);

\draw[->, thick]
(split)
|- (G1.west);

\draw[->, thick]
(split)
|- (A1.west);

\draw[->, thick] (G1) -- (gamma);
\draw[->, thick] (A1) -- (alpha);

\draw[->, thick]
(gamma.east)
-- ++(0.6,0)
|- (eqbox.north west);

\draw[->, thick]
(alpha.east)
-- ++(0.6,0)
|- (eqbox.south west);


\node[
    draw,
    dashed,
    rounded corners,
    inner sep=0.65cm,
    fit=(X)(T1)(Z)(G1)(gamma)(A1)(alpha)(eqbox)
]
(archbox)
{};


\node[procbox]
(procA)
at (3,-5.5)
{
\centering
{\large \textbf{Outcome-adapted Training}}

\begin{enumerate}
    \item Train {\color{blue!70!black}shared trunk}
    and {\color{red!70!black}outcome branch}
    with $\ell_{\text{MSE}}$.

    \item Freeze {\color{blue!70!black}shared trunk}
    and train {\color{orange!80!black}Riesz branch}
    with $\ell_{\text{Riesz}}$.
\end{enumerate}
};

\node[procbox]
(procB)
at (10.7,-5.5)
{
\centering
{\large \textbf{RieszNet Training}}

\begin{enumerate}
    \item Choose weights
    $\lambda_{\text{MSE}},
    \lambda_{\text{Riesz}},
    \lambda_{\text{TMLE}} \geq 0$.

    \item Train
    {\color{blue!70!black}shared trunk}, {\color{red!70!black}outcome branch}, {\color{orange!80!black}Riesz branch} and {\color{yellow!50!black}$\epsilon$} with
    \begin{align*}
    \ell_{\text{RieszNet}}
    &=
    \lambda_{\text{MSE}}\ell_{\text{MSE}}
    \\ &\,+
    \lambda_{\text{Riesz}}\ell_{\text{Riesz}}
    \\ &\,+
    \lambda_{\text{TMLE}}\ell_{\text{TMLE}}
    \end{align*}
\end{enumerate}
};

\node[align = left,
](losses)
at (2.2,-1.6)
{\textbf{Loss Components} \\
\color{red!70!black}{$\ell_{\text{MSE}} = \sum_{i=1}^n (Y_i -\gamma(X_i))^2$} \\
\color{orange!80!black}{$\ell_{\text{Riesz}} = \sum_{i=1}^n \alpha(X_i)^2 -2m(X_i,\alpha)$} \\ 
\color{yellow!50!black}{$\ell_{\text{TMLE}} = \sum_{i=1}^n (Y_i -\tilde{\gamma}(X_i))^2$}};


\draw[->, thick]
($(archbox.south)+(-3.5,0)$)
--
(procA.north);

\draw[->, thick]
($(archbox.south)+(4.2,0)$)
--
(procB.north);

\end{tikzpicture}
\caption{
Comparison of architectures and training procedures for the outcome-adapted neural network and RieszNet.
}
\label{fig:riesznet_comparison} 

\end{figure}

\citet{chernozhukov2022riesznetforestrieszautomaticdebiased} argue that they ``can give special attention to features that are predictive of the Riesz representer'' when estimating the outcome regression function. This argument is based on their Lemma 3.1, which is a special case of our Proposition \ref{prop:rep-functional} with $Z = \alpha_P(X)$. It justifies that the resulting AutoDML estimator is an estimator of $\psi(P)$, but in light of Theorem \ref{thm:supp_with_precision}, an estimator based on the representation $Z = \alpha_P(X)$ will be among the most inefficient AutoDML estimators. If the shared representation really was intended to be Riesz-adapted, RieszNet performs remarkably well in the experiments conducted by \citet{chernozhukov2022riesznetforestrieszautomaticdebiased}. We have two main hypotheses regarding the actual performance of RieszNet.
\begin{enumerate}
    \item The choice of $\lambda_\text{Riesz}$ determines whether the representation used by RieszNet is outcome- or Riesz-adapted. Accordingly, for $\lambda_\text{Riesz} \to 0$, RieszNet will behave similarly to the outcome-adapted neural network, and for $\lambda_\text{Riesz} \to \infty$ RieszNet will behave similarly to a Riesz-adapted neural network. The reason RieszNet is found to perform well by \citet{chernozhukov2022riesznetforestrieszautomaticdebiased} is that $\lambda_\text{Riesz}$ is chosen fairly small.
    \item The TMLE loss is essentially a mean-square error loss on outcome predictions. Thus, the TMLE loss contributes to making the shared representation outcome-adapted and RieszNet allows for a tradeoff between the weight on the MSE loss and the TMLE loss. That is, keeping $\lambda_\text{Riesz}$ fixed, using the weights $(\lambda_\text{MSE},\lambda_\text{Riesz},\lambda_\text{TMLE})$ and $(\tilde{\lambda}_\text{MSE}, \lambda_\text{Riesz}, \tilde{\lambda}_\text{TMLE})$ will result in similar estimators as long as
    $$
    \lambda_\text{MSE} + \lambda_\text{TMLE} = \tilde{\lambda}_{\text{MSE}} + \tilde{\lambda}_\text{TMLE}.
    $$
\end{enumerate}
We investigate these hypotheses in our numerical experiments reported in Section \ref{sec:IHDP}.

\section{Numerical Experiments}\label{sec:sim}

In this section, we apply the outcome-adapted AutoDML estimator using the neural network architecture described in Section \ref{sec:neural_nets} in four different numerical experiments using synthetic and semi-synthetic data. The details of the implementations and a link to the code are in Appendix \ref{sec:sim_study_details}.

\subsection{Outcome-adapted Dimensionality Reduction}

The first experiment is a proof of concept using a very simple data-generating process. It investigates if the outcome-adaptation and adaptive dimensionality reduction, that we have integrated into the training of our neural network, are actually capable of reducing the estimation error as expected.

We let $\beta\in\RR$ and
\begin{align*}
    W&\sim\mathcal{U}(-1,1),\\
    U\mid W=w & \sim \mathrm{Bernoulli}\left(\pi_P(w)\right),\\
    Y\mid(U,W) = (u,w) &\sim \mathcal{N}(u, 1),
\end{align*}
where $\pi_P(w) = \frac{1}{1+\exp(-\beta w)}$. We want to estimate the ATE functional from Example \ref{ex-ate}. Let $Z=h(U,W)=U$. Example \ref{ex:toy_example} shows that $Z$ satisfies the conditions of Theorem \ref{thm:delete_overadjust}, that $V_{h,P} = 4$, and that, if $\beta=0$, $V_P=4$, while, if $\beta\neq 0$,
$$
V_{P} = 2 + \frac{e^{\beta}-e^{-\beta}}{\beta}.
$$
In the best case scenario, we would like the outcome-adapted AutoDML estimator to behave as if we had oracle knowledge and simply used the representation $Z=U$. 

In the experiment we generated $2n=1000$ observations and used a simple data split where $n=500$ observations were used to estimate nuisance functions and $n=500$ observations were used to compute the final estimate. The experiment was repeated $1000$ times for different values of $\beta$ and the empirical MSE was computed for the estimators considered.

Figure \ref{fig:toy_example} shows the results of applying the outcome-adapted neural network without dimensionality reduction and with adaptive dimensionality reduction at varying levels of $\beta$. For comparison purposes, Figure \ref{fig:toy_example} includes results for an AutoDML estimator using separate neural networks to estimate the regression function and the Riesz representer. We also display $V_{h,P}$ and $V_P$ as functions of $\beta$. 

We see that the $n$-scaled MSE of the AutoDML estimator using separate neural networks is increasing in $\beta$ and approximately equals $V_P$. In contrast, the $n$-scaled MSE for the outcome-adapted neural network with adaptive dimensionality reduction does not increase with $\beta$ and is always approximately equal to $V_{h,P}=4$, that is, the estimator behaves as if we simply used the representation $Z=U$. The $n$-scaled MSE of the outcome-adapted neural network without dimensionality reduction is increasing in $\beta$ but more slowly than $V_P$. This clearly illustrates the intended effect of learning an outcome-adapted representation as well as the importance of using information bottleneck techniques. 

\begin{figure}
    \centering
    \includegraphics[width=\linewidth]{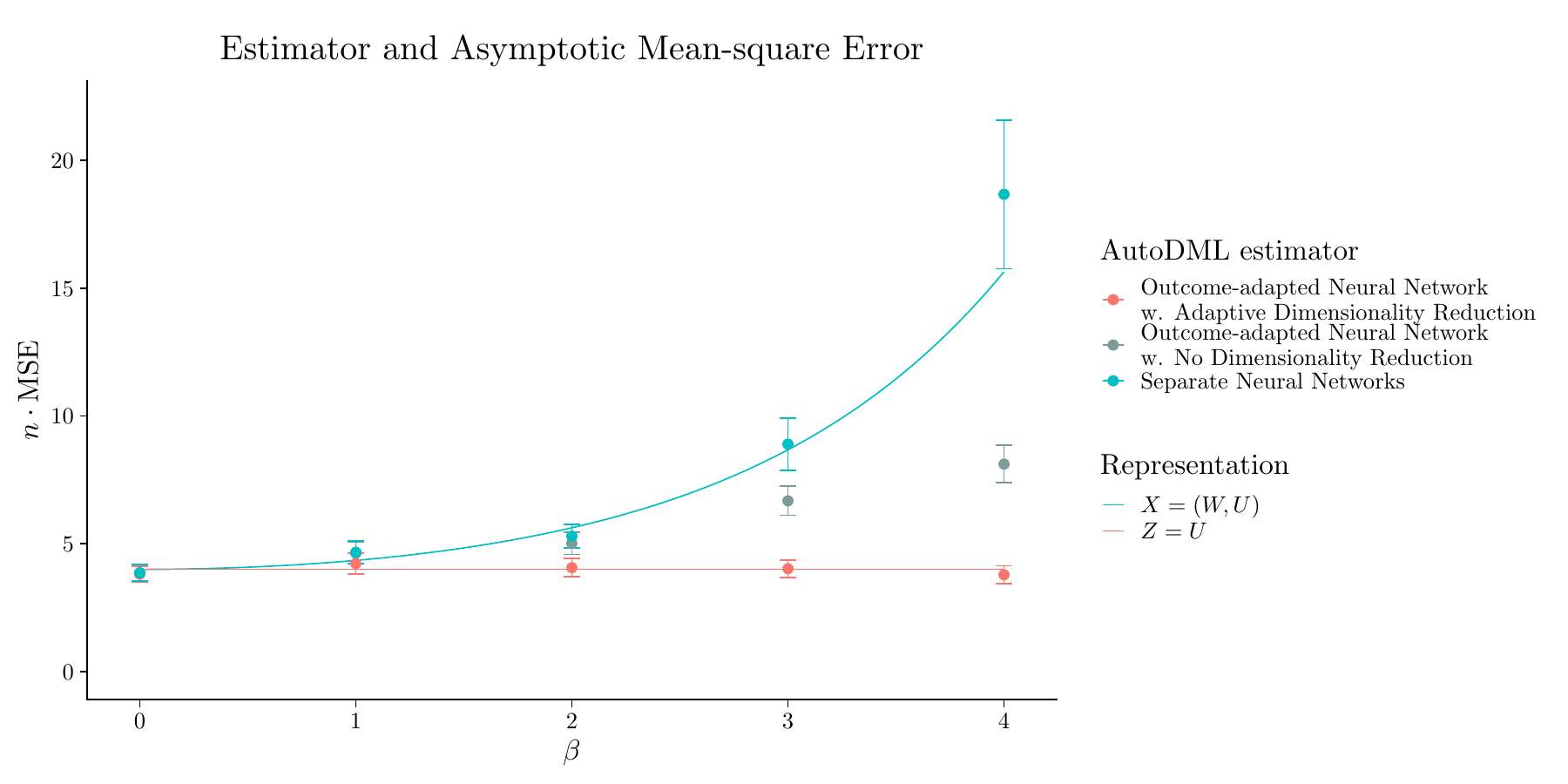}
    \caption{Scaled mean-square error (MSE) of AutoDML estimators (dots), compared to theoretical asymptotic mean square error using different representations (fully drawn lines). Error bars indicate asymptotic 95\% Gaussian confidence intervals.}
    \label{fig:toy_example}
\end{figure}

\subsection{IHDP Experiment} \label{sec:IHDP}

The second experiment benchmarks our algorithm on the IHDP dataset. The version of the IHDP dataset we use is a semi-synthetic dataset consisting of $n=737$ observations of outcomes, binary treatments and $d=25$ covariates. Outcomes represent cognitive test scores for low birth weight, premature infants in the United States between 1985 and 1988. The treatment is a binary indicator of home visits. The treatments and covariates are real data, while outcomes are simulated via the \texttt{npci} package \citep{dorie2016npci}. The author of the package has discontinued support, and we therefore procure 1000 instances of the simulated dataset from the Github repository associated to \citep{chernozhukov2022riesznetforestrieszautomaticdebiased}. This version of the IHDP benchmark is also the version used by \citet{hines2025automaticdebiasingneuralnetworks} and \citet{cai2025clearnerconstrainedlearningcausal}. Since outcomes are simulated using a known regression function, $\gamma_P$, the true ATE can be calculated as $\frac{1}{n}\sum_{i=1}^n \gamma_P(1,W_i)-\gamma_P(0,W_i)$. 

We applied the outcome-adapted neural network with and without dimensionality reduction to the 1000 instances of the IHDP dataset. We also applied an AutoDML estimator where the regression function and the Riesz representer are estimated using separate neural networks. Following \citet{chernozhukov2022riesznetforestrieszautomaticdebiased},  \citet{hines2025automaticdebiasingneuralnetworks}, and \citet{cai2025clearnerconstrainedlearningcausal}, we did not use any data splitting for this experiment. Figure \ref{fig:mae} shows the mean-absolute-error (MAE) computed from the 1000 replications. The figure also shows the MAE for RieszNet reported by \citet{chernozhukov2022riesznetforestrieszautomaticdebiased}, for MADNet reported by \citet{hines2025automaticdebiasingneuralnetworks}, and for the neural network implementation of a C-learner reported by \citet{cai2025clearnerconstrainedlearningcausal}. We see that the best performance is achieved by the outcome-adapted neural network using adaptive dimensionality reduction with MADNet achieving the second-best performance. We also see that simple dimensionality reduction yields a minor improvement over no dimensionality reduction. The AutoDML estimator using separate neural networks has the worst performance with RieszNet improving marginally on the separate neural networks.

In Section \ref{seq: RieszNet}, we argued that the performance of RieszNet is surprising if the shared representation was intended to be Riesz-adapted. Since   \citet{chernozhukov2022riesznetforestrieszautomaticdebiased} actually used a small weight of $\lambda_\text{Riesz}=0.1$  on the Riesz loss component, we hypothesized that the shared representation learned by RieszNet was actually outcome-adapted, and since the TMLE component of the network also contributes to the shared representation being outcome-adapted, this could explain its good performance.

To investigate these claims, we applied RieszNet to the IHDP datasets with varying weights on the loss function components. Firstly, we applied RieszNet with $\lambda_\text{MSE}=\lambda_\text{TMLE}=1$ fixed, as used by \citet{chernozhukov2022riesznetforestrieszautomaticdebiased}, and varied $\lambda_\text{Riesz}$. For comparison purposes, we also included an outcome-adapted and a Riesz-adapted neural network, both without dimensionality reduction. Secondly, we applied RieszNet with $\lambda_\text{Riesz}=0.1$ fixed, as used by \citet{chernozhukov2022riesznetforestrieszautomaticdebiased}, and varied $\lambda_\text{MSE}$ and $\lambda_\text{TMLE}$ such that $\lambda_\text{MSE}+\lambda_\text{TMLE}=2$ is constant. That is, given $\lambda_\text{TMLE}$, we set $\lambda_\text{MSE} = 2- \lambda_\text{TMLE}$. 

Figure \ref{fig:varying_riesz_net_weights} shows the MSEs for the different estimators. We see that with a small $\lambda_\text{Riesz}$, the performance of RieszNet is similar to the performance of the outcome-adapted neural network, while the performance of RieszNet approaches the Riesz-adapted neural network as $\lambda_\text{Riesz}$ becomes larger. When varying $\lambda_\text{TMLE}$ and keeping $\lambda_\text{MSE}+\lambda_\text{TMLE}=2$, we see that the MSE of RieszNet is relatively stable. Remarkably, we see a decent performance even with $\lambda_\text{TMLE}=2$ corresponding to removing the MSE loss component from the RieszNet loss function. These results support our hypotheses regarding the behavior of RieszNet, that is, that RieszNet actually performs best when the shared representation is outcome-adapted rather than Riesz-adapted.

\begin{figure}
    \centering
\includegraphics[width=\linewidth]{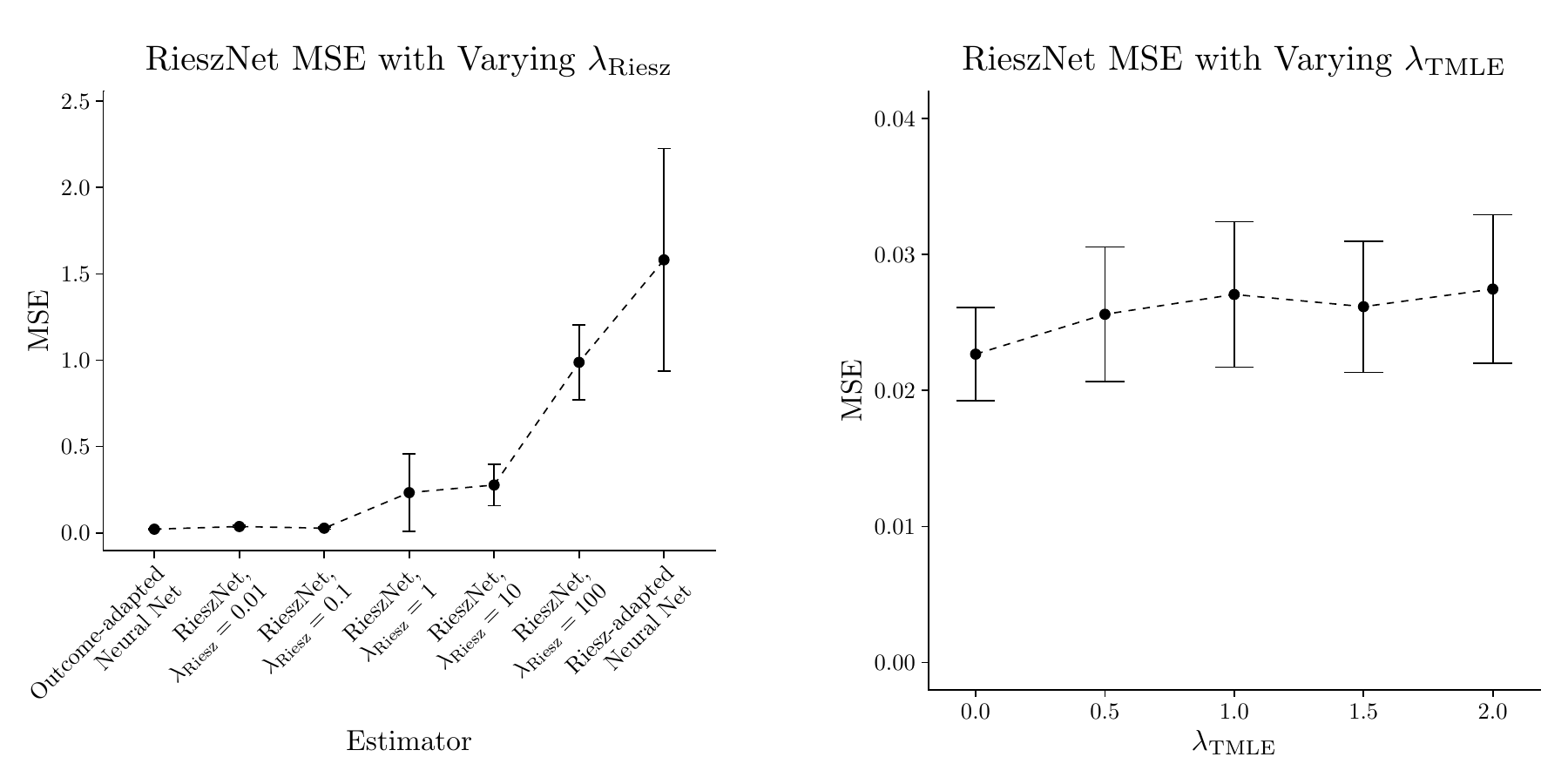}
    \caption{Mean-square error (MSE) of outcome/Riesz-adapted neural networks and RieszNet with varying weights on loss components. Error bars indicate asymptotic 95\% Gaussian confidence intervals. Dashed lines serve as visual aids only.}
    \label{fig:varying_riesz_net_weights}
\end{figure}

\subsection{Average Shift Effect on Simulated Data}\label{sec:ase_sim}

The third experiment benchmarks our algorithm for estimating the average shift effect using synthetic data. We consider a shift of size $\delta =1$ and a data-generating process given by 
\begin{align*}
W &\sim \mathcal{U}([-3,3]^4),\\ 
U &= \sin(W_2) + W_3 + (|W_1|+0.15)\epsilon_U,\\
Y &= U(W_3^2 + \exp(W_4)/2) + \epsilon_Y,
\end{align*}
where $\epsilon_U$ is standard exponential, $\epsilon_Y$ is standard normal and $(W, \varepsilon_U, \varepsilon_Y)$ are independent. The parameter of interest is the average shift effect defined in Example \ref{ex-ase}.

We generated datasets of size $2n$ with $n \in \{500, 1000, 1500, 2000\}$ and used a simple data split where $n$ observations were used to estimate nuisance functions and $n$ observations were used to compute the final estimates. All experiments were repeated 1000 times. We applied the outcome-adapted neural network with adaptive and simple dimensionality reduction as well as without dimensionality reduction. We also applied RieszNet, MADNet and an estimator that used separate neural networks for the Riesz regression and outcome regression. 

\begin{figure}
    \centering
    \includegraphics[width=\linewidth, trim={4cm 0cm 0cm 0cm}]{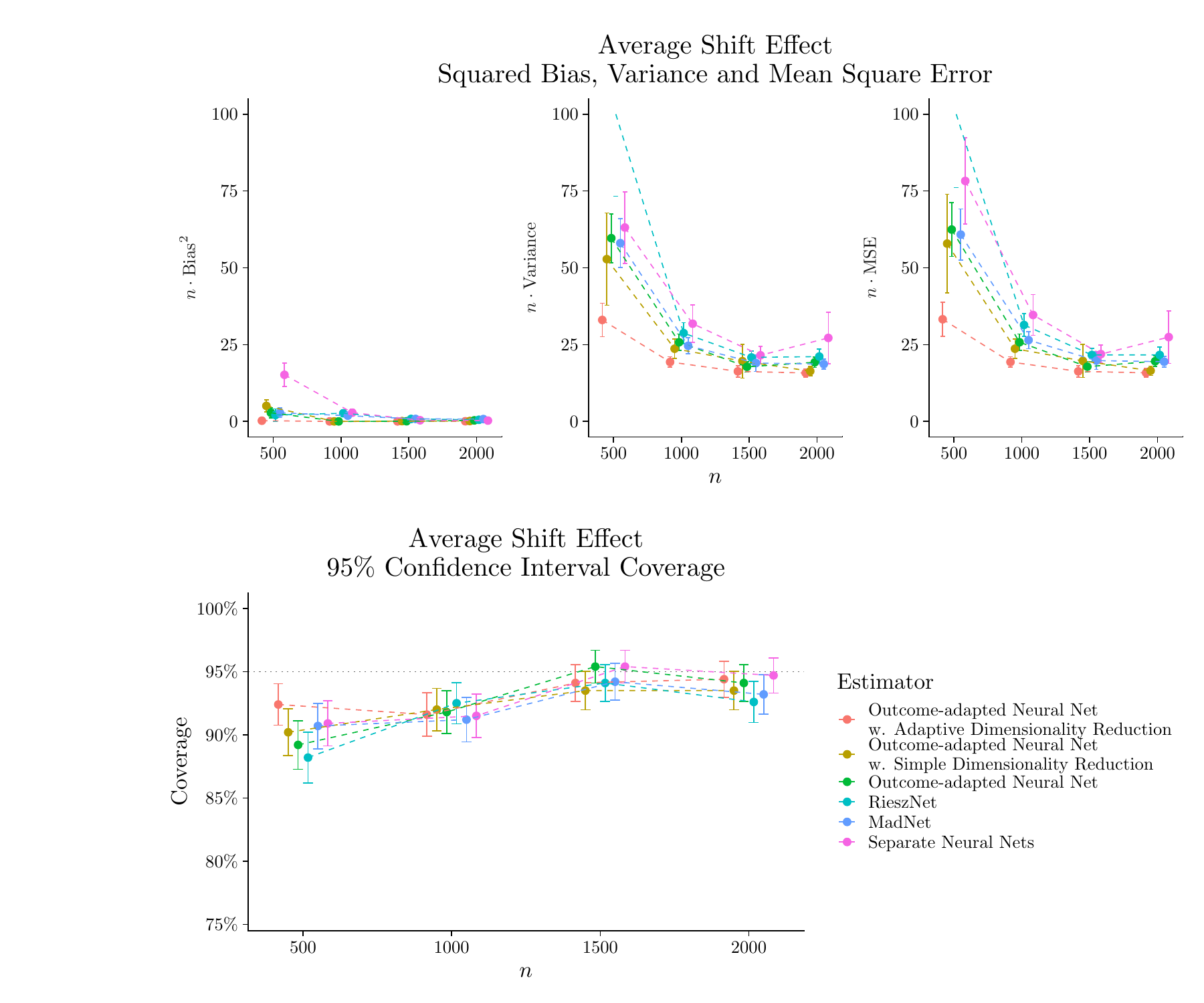}
    \caption{At the top, scaled squared bias, variance and mean square error of AutoDML estimators. At $n=500$, RieszNet has scaled variance/MSE equal to 127 which is not displayed on the plot for visibility reasons. At the bottom, coverage of AutoDML confidence intervals. Error bars indicate asymptotic 95\% Gaussian confidence intervals. Dashed lines serve as visual aids only. Points are horizontally shifted for visual clarity.}
    \label{fig:ase}
\end{figure}

Figure \ref{fig:ase} shows $n$-scaled squared bias, variance, MSE and confidence interval coverage using Gaussian confidence intervals as a function of sample size. We see that squared bias is small for all methods when $n \geq 1500$, and that squared bias in general is small compared to variance across all sample sizes, except for the estimator using separate neural networks when $n=500$, where there is a substantial amount of bias. 

We see that the outcome-adapted neural network with adaptive dimensionality reduction has the lowest variance and MSE across all sample sizes. Both of the other two outcome-adapted estimators are comparable or superior to RieszNet and MADNet in terms of variance and MSE, while the estimator using separate neural networks has the largest variance and MSE. Coverage is too low for all estimators for the small sample sizes $n =500$ and $n = 1000$, while for $n=1500$ and $n = 2000$ the coverage is close to the nominal 95\% level for all estimators. 

The results for the outcome-adapted neural network with adaptive dimensionality reduction show the benefit of using an outcome-adapted representation with an information bottleneck. As the squared bias is small for all estimators, we can attribute the decrease in MSE to the lower variance.

\subsection{Mean Missing Outcome on Simulated Data}\label{sec:mmo-sim}
The final experiment is a stress test of the estimators where we estimated the mean missing outcome from Example \ref{ex:mmo} using synthetic data. We used the data generating process originally introduced by \citet{Kang} and modified by \citet{cai2025clearnerconstrainedlearningcausal}. Specifically, with $\mathbf{I}_4$ denoting the identity matrix in $\RR^4$,  
\begin{align*}
    \tilde W &\sim\mathcal{N}(0,\mathbf{I}_4),\\
    \tilde U \mid \tilde W = \tilde w &\sim\mathrm{Bernoulli}(\tilde \pi(\tilde w)),\\
    \tilde Y \mid \tilde W = \tilde w &\sim\mathcal{N}(\tilde\gamma(\tilde w),1),
\end{align*}
and
\begin{align*}
    W_1 &= e^{\tilde W_1/2},\\
    W_2 &= \frac{\tilde W_2}{1 + e^{\tilde W_1}} + 10,\\
    W_3 &= (\tilde W_1\tilde W_3/25 + 0.6)^3,\\
    W_4 &= (\tilde W_2+\tilde W_4+20)^2,\\
    Y &= U\tilde Y, 
\end{align*}
where, with $c\geq 0$, 
\begin{align*}
    \tilde \pi(\tilde w) &= \frac{\exp (c(-\tilde w_1+0.5\tilde w_2-0.25\tilde w_3-0.1\tilde w_4))}{1 + \exp(c(-\tilde w_1+0.5\tilde w_2-0.25\tilde w_3-0. 1\tilde w_4))},\\
    \tilde \gamma(\tilde w) &= 210 + 13.7(2\tilde w_1 +\tilde w_2+\tilde w_3 +\tilde w_4).  
\end{align*}
With $c=1$, this is the data generating process developed by \citet{Kang}. The parameter $c$ was introduced by \citet{cai2025clearnerconstrainedlearningcausal}. Small $c$ values yield propensity scores that are close to $1/2$, while large $c$ values yield extreme propensity scores, that is, propensity scores close to $0$ or $1$. Our experiment used the most challenging setting considered by \citet{cai2025clearnerconstrainedlearningcausal}, $c=1.75$. This data generating process is interesting as \citet{Kang}, \citet{cai2025clearnerconstrainedlearningcausal}, and \citet{Robins_comment} all find it to be particularly challenging for doubly robust estimators like the AutoDML estimator due to the extreme propensity scores, even with $c=1$. Notice in particular that $\alpha_P(U,W) = U/\pi_P(W)$ is unbounded for any $c>0$. 

We generated datasets of size $2n$ with $n \in \{500, 1000, 1500, 2000\}$ and used a simple data split where $n$ observations were used to estimate nuisance functions and $n$ observations were used to compute the final estimates. All experiments were repeated 1000 times. We applied the outcome-adapted neural network with adaptive and simple dimensionality reduction as well as without dimensionality reduction. We also applied RieszNet, MADNet and an estimator that used separate neural networks for the Riesz regression and outcome regression. 

As \citet{cai2025clearnerconstrainedlearningcausal} find that coverage of Gaussian confidence intervals for several doubly robust estimators is too low for the data generating process with $c = 1$, we also implemented bootstrap standard-error and bootstrap percentile confidence intervals. To compute these, we resampled each of our datasets with replacement 200 times, applied our estimators, and computed the standard deviation, $\hat\sigma_\text{bootstrap}$, and bootstrapped 2.5\% and 97.5\% quantiles, $\hat{q}_{0.025}$ and $\hat{q}_{0.975}$, of the resulting 200 point estimates. The bootstrapped standard error confidence intervals were computed as $\hat \psi \pm 1.96 \cdot \hat\sigma_\text{bootstrap},$ and the bootstrapped percentile confidence intervals as $(\hat{q}_{0.025},\hat{q}_{0.975})$. For computational reasons, we only bootstrapped confidence intervals for RieszNet and the outcome-adapted neural network with adaptive dimensionality reduction.

Figure \ref{fig:mmo} shows  $n$-scaled squared bias, variance, MSE, and confidence interval coverage as a function of sample size. For visual clarity, we only show the results for the outcome-adapted neural network with adaptive dimensionality reduction and RieszNet. We defer results for the rest of the estimators as well as some additional bootstrap results to Section \ref{sec:ci-appendix} in Appendix \ref{sec:sim_study_details}. 

\begin{figure}
    \centering
    \includegraphics[width=\linewidth, trim={2.5cm 0cm 0cm 0cm}]{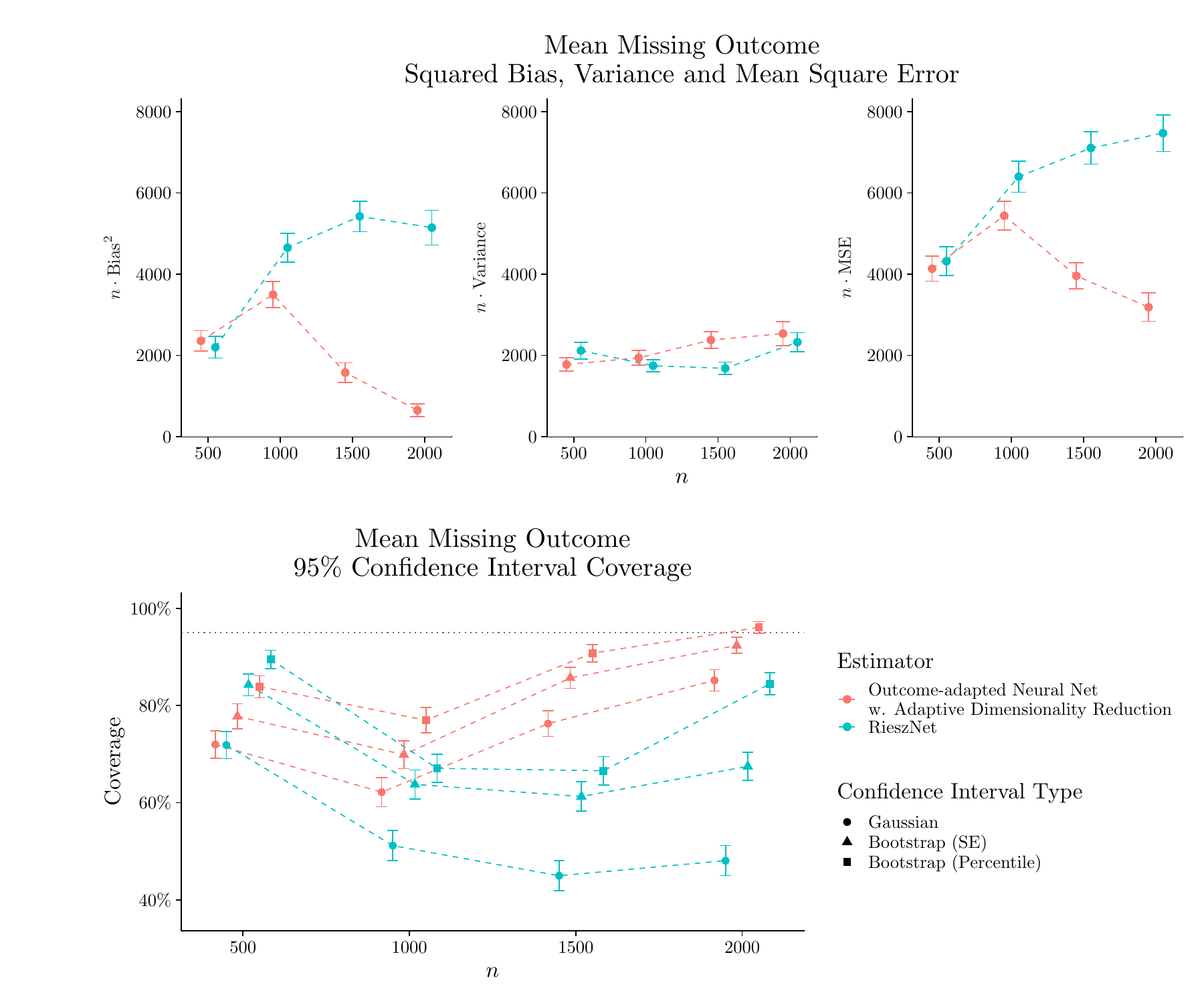}
    \caption{At the top, scaled squared bias, variance and mean square error of the outcome-adapted neural network with adaptive dimensionality reduction and RieszNet. At the bottom, coverage of Gaussian confidence intervals, as well as bootstrapped standard error and bootstrap percentile confidence intervals. Error bars indicate asymptotic 95\% Gaussian confidence intervals. Dashed lines serve as visual aids only. Points are horizontally shifted for visual clarity.}
    \label{fig:mmo}
\end{figure}

We see that RieszNet has the smallest variance across all sample sizes except for $n=500$. Conversely, the outcome-adapted neural network with adaptive dimensionality reduction has the smallest squared bias at all sample sizes except for $n=500$. The outcome-adapted neural network with adaptive dimensionality reduction has the smallest MSE at all sample sizes. At $n=1500$ and $n=2000$, in particular, the MSE of the outcome-adapted neural network is much smaller than that of RieszNet. In this case, the reduction in mean-square error of the outcome-adapted neural network with adaptive dimensionality reduction relative to RieszNet is driven by squared bias and not by variance. 

Figure \ref{fig:mmo_appendix} in Appendix \ref{sec:sim_study_details} shows that the basic AutoDML estimator using separate neural networks has MSE multiple orders of magnitude larger than what is shown in Figure \ref{fig:mmo} for RieszNet and the outcome-adapted neural net with adaptive dimensionality reduction. The large MSE of the estimator using separate neural nets is consistent with the findings of \citet{cai2025clearnerconstrainedlearningcausal}, \citet{Kang}, and \citet{robins_1995_aipw}. 

Figure \ref{fig:mmo} also shows that coverage is too low for both estimators at all sample sizes when using Gaussian confidence intervals. This is in particular the case for RieszNet at large sample sizes, while the coverage of the outcome-adapted neural network with adaptive dimensionality reduction becomes less poor as the sample size increases. Both types of bootstrap confidence intervals improve upon the Gaussian confidence intervals in terms of coverage. The percentile bootstrap confidence intervals have closest to the nominal 95\% level of coverage across all sample sizes for both RieszNet and the outcome-adapted neural network with adaptive dimensionality reduction, and nominal 95\% coverage is attained when $n=2000$ for the latter. In addition, the bootstrap standard error confidence intervals come quite close to having nominal coverage when $n=2000$ for the outcome-adapted neural network with adaptive dimensionality reduction. Aside from this, both bootstrap confidence intervals have too low coverage across sample sizes smaller than $n=2000$ for both RieszNet and the outcome-adapted neural network.

We remark that it is possible that the superior coverage of the percentile intervals is an artifact of the data generating process. Thus, we are not claiming that percentile confidence intervals are, in general, preferable to bootstrap standard error confidence intervals.

\section{Discussion}

We have shown that our outcome-adapted AutoDML estimator yields efficiency gains in terms of asymptotic variance by using a covariate representation that is predictive of outcomes and discards information about the Riesz representer. We have demonstrated that similar behavior and performance can be achieved for the related RieszNet estimator by a particular limit choice of tuning parameters, and we have also argued that this limit choice, in fact, makes RieszNet implement outcome-adaptation.  In conclusion, outcome-adaptation is the key property for achieving estimator efficiency, and the general outcome-adapted AutoDML estimator does not need tuning parameters to control a tradeoff between multiple objectives.  

We discuss here briefly some open questions and further directions of research:
\begin{itemize}
    \item \textbf{Representation optimality:} \cite{christgau2025efficientadjustmentcomplexcovariates} introduced the notion of an outcome distribution sufficient (ODS) representation. We replaced ODS by the less restrictive condition $\VP{Y \mid Z} = \VP{Y \mid X}$ in Theorem \ref{thm:delete_overadjust} to demonstrate the efficiency gain for a particular $P$. However, in this paper, we have not attempted to characterize the most efficient representation across $P \in \mathcal{P}$, which is what \cite{christgau2025efficientadjustmentcomplexcovariates} do for the estimation of the ATE. We believe it would be interesting and possible to generalize these results to all AutoDML estimators, but that it would inevitably require the $\sigma$-algebraic formulations by \cite{christgau2025efficientadjustmentcomplexcovariates}, and we leave this to future work.
    \item \textbf{Alternatives to neural network representations:} Even though our simulation studies have exclusively considered representations generated by neural networks, the outcome-adapted AutoDML estimator is not specific to neural networks. It remains an open question whether better estimators can be obtained by using other machine learning methods than neural networks for generating representations. For instance, ensemble tree-based models such as random forests and boosted trees are popular methods for regression, and a natural representation of covariates are the leaf nodes of the fitted trees as discussed by \cite{geurts2006extremely}.  
    \item \textbf{Representation error and the sure screening property:} We believe that it should be possible to deepen the theoretical analysis of the representation error, for example, by characterizing when it is asymptotically ignorable. Concretely, it is of interest to generalize the uniform sure screening property \eqref{eq:uniform-SURE} beyond a generalized linear modeling framework, and to establish conditions that ensure it holds.
    \item \textbf{Valid inference and bootstrapping:} In the average shift effect example, the asymptotic Gaussian confidence intervals were sufficient to attain nominal coverage for large enough sample sizes, while their coverage was insufficient for the adversarial mean missing outcome example. In the latter case, we demonstrated empirically that bootstrapping generally improved coverage and reached nominal coverage for sufficiently large sample sizes. There are several outstanding theoretical questions. Firstly, when the representation error is not asymptotically ignorable on the $\sqrt{n}$-scale, how should we then correctly adjust the standard error? Secondly, under which conditions is bootstrapping giving asymptotically valid confidence intervals? It is beyond the scope of this paper to answer these questions, but in light of our experiments we suggest to use bootstrapped confidence intervals in practice, even if they are 
    computationally more expensive.
\end{itemize}

\acks{AW was supported by the Independent Research Fund Denmark (DFF) through
the Sapere Aude Starting Grant (5251-00032B). AR and NRH were supported by a research grant (VIL78003) from Villum Fonden.
NRH was supported by a research grant from Novo Nordisk Fonden (NNF20OC0062897).

The authors are grateful for computing resources and technical assistance provided by the Danish Center for Climate Computing, a facility built with support of the Danish e-Infrastructure Corporation, Danish Hydrocarbon Research and Technology Centre, VILLUM Foundation, and the Niels Bohr Institute.
}

\newpage
\appendix

\section{Proofs} \label{app:proofs}
\setcounter{theorem}{0}
\setcounter{example}{0}
\setcounter{equation}{0}

\renewcommand{\thetheorem}{\thesection.\arabic{theorem}}
\renewcommand{\theexample}{\thesection.\arabic{example}}
\renewcommand{\theequation}{A\arabic{equation}}

\subsection{Proofs and Examples for Section \ref{sec:efficient-covars}}\label{seq: proofs sec:efficient-covars}

\begin{proof}[Proof of Proposition \ref{prop:rep-functional}]
    If $\eta \in\mathcal{L}^2(P_Z)$, then $\eta \circ h \in\mathcal{L}^2(P_X)$, and, since
    $$
    M_h(\eta)\circ h = M(\eta\circ h),
    $$
    it is clear that $M_h$ inherits the properties of an AutoDML operator from $M$. Furthermore, for $\eta\in L^2(P_Z)$,
    \begin{align*}
    \EP{m_h(Z,\eta)} &= \EP{m(X,\eta\circ h)} \\ &= \EP{\alpha_P(X)(\eta\circ h)(X)} \\ &= \EP{\alpha_P(X)\eta(Z)} \\ &= \EP{\alpha_{h,P}(Z)\eta(Z)}. 
    \end{align*}
    This yields the first conclusion of the proposition. 

    Now assume $\gamma_{h,P}(Z) = \gamma_{P}(X)$. Then 
    $$
    \psi_h(P) = \EP{m_h(Z,\gamma_{h,P})} = \EP{m(X,\gamma_{h,P}\circ h)} = \EP{m(X,\gamma_P)} = \psi(P). 
    $$
    If we instead assume $\alpha_{h,P}(Z) = \alpha_{P}(X)$, then 
    \begin{align*}
        \psi_h(P) &= \EP{\alpha_{h,P}(Z) \gamma_{h,P}(Z)} \\ &=
        \EP{\alpha_{h,P}(Z) Y} \\
        &=
        \EP{\alpha_{P}(X) Y} \\ &=\EP{\alpha_{P}(X) \gamma_{P}(X)}=\psi(P).
    \end{align*}
\end{proof}

\begin{proof}[Proof of Theorem \ref{thm:delete_overadjust}]
    We have that
    \begin{align*}
        &\EP{(m(X,\gamma_P) - \psi(P))\alpha_P(X)(Y-\gamma_P(X))}\\
        &= \EP{(m(X,\gamma_P) - \psi(P))\alpha_P(X)\EP{Y-\gamma_P(X) \mid X}}= 0.       
    \end{align*}
    Hence, 
    $$
    V_{P} = \EP{(m(X,\gamma_P) - \psi(P))^2} + \EP{\alpha_P(X)^2(Y-\gamma_P(X))^2}, 
    $$
    and similarly for $V_{h,P}$. Since $\gamma_P(X)=\gamma_{h,P}(Z)$, we have $m(X,\gamma_P)=m_h(Z,\gamma_{h,P})$, and, therefore,
    \begin{align*}
    V_P - V_{h,P} &= \EP{\alpha_P(X)^2(Y-\gamma_P(X))^2 - \alpha_{h,P}(Z)^2(Y-\gamma_{h,P}(Z))^2} \\
    &= \EP{(\alpha_P(X)^2-\alpha_{h,P}(Z)^2)(Y-\gamma_P(X))^2} \\
    &= \EP{(\alpha_P(X)^2-\alpha_{h,P}(Z)^2)\VP{Y\mid X}} \\
    &= \EP{(\alpha_P(X)^2-\alpha_{h,P}(Z)^2)\VP{Y\mid Z}} \\
    &= \EP{\VP{\alpha_P(X) \mid Z}\VP{Y\mid Z}}.
    \end{align*}
\end{proof}

\begin{proof}[Proof of Lemma \ref{lem:int-treat-func}]
    For $\nu\in\mathcal{M}^d$, the mapping $(u,w) \mapsto m((u,w),\nu)$ is measurable by Tonelli-Fubini. Thus, $\nu \mapsto m(\cdot,\nu)$ defines an operator $M:\mathcal{M}^d\to\mathcal{M}^d$. We check that $M$ is an AutoDML operator. For $\nu\in\mathcal{L}^2(P_X)$, by conditional Jensen's inequality and \eqref{eq:bound}, we have
    \begin{align*}
    &\EP{ \left(\int |g(\tilde u)\nu(\tilde u,W)|\dd\mu(\tilde u) \right)^2} 
    = \EP{ \left(\int \left|\frac{g(\tilde u)}{f_P(\tilde u\mid W)}\nu(\tilde u,W)\right| f_P(\tilde u\mid W)  \dd\mu(\tilde u)\right)^2} \\
    &= \EP{\EP{\left|\frac{g(U)}{f_P(U\mid W)}\nu(U,W)\right|\mid W}^2} \leq \EP{\frac{g(U)^2}{f_P(U\mid W)^2}\nu(U,W)^2}\\
    &\phantom{{}= \EP{\EP{\left|\frac{g(U)}{f_P(U\mid W)}\nu(U,W)\right|\mid W}^2} }\leq \kappa_P^2\EP{\nu(U,W)^2}. 
    \end{align*}
    From this we get for $\nu\in\mathcal{L}^2(P_X)$ that
    $$
    \int |g(\tilde u)\nu(\tilde u,w)|\dd\mu(\tilde u) < \infty
    $$
    for $P_W$-almost all $w$, showing that 
    $$
        M(\nu)(u,w) =  \int g(\tilde u)\nu(\tilde u,w) \dd\mu(\tilde u) 
    $$
    for $P_X$-almost all $(u,w)$, hence $M(\nu) \in \mathcal{L}^2(P_X)$ and $M$ is linear.
    Now the chain of inequalities above gives that 
    $$
        \EP{m((U,W),\nu)^2} \leq \EP{ \left(\int |g(\tilde u)\nu(\tilde u,W)|\dd\mu(\tilde u) \right)^2} \leq \kappa_P^2\EP{\nu(U,W)^2},
    $$
    which establishes mean-square continuity of $m$ and thus boundedness of $M$ as a 
    linear operator on $L^2(P_X)$.
    
    Since $M$ is an AutoDML operator, $\psi$ is an AutoDML functional. The claimed Riesz representer follows by calculations similar to those above, 
    \begin{align*}
    &\EP{m((U,W),\nu)} 
     = \EP{\int \frac{g(\tilde u)}{f_P(\tilde u\mid W)}\nu(\tilde u,W) f_P(\tilde u\mid W)\dd\mu(\tilde u)}\\
    &= \EP{\EP{\frac{g(U)}{f_P(U\mid W)}\nu(U,W)\mid W}} = \EP{\frac{g(U)}{f_P(U\mid W)}\nu(U,W)}. 
    \end{align*}

    Now let $h:\RR^d\to\RR^q$ be of the form $h(u,w) = (u,\tilde h(w))$ and let $Z = (U,\tilde h(W)) = (U,\tilde Z)$. With 
    $$
    m_h((u,\tilde z),\eta) =
    \int g(\tilde u) \eta(\tilde u,\tilde z) \dd\mu(\tilde u),
    $$
    whenever $\int |g(\tilde u) \eta(\tilde u,\tilde z)| \dd\mu(\tilde u)<\infty$ and $m_h((u, \tilde z), \eta) = 0$ otherwise,  we have $m_h((u,\tilde z),\eta) = m((u,w),\eta\circ h)$, and $Z$ is permissible. 
\end{proof}

\begin{proof}[Proof of Theorem \ref{thm:supp_with_precision}]
    We have
    \begin{align*}
    V_{h,P} - V_P &= \EP{\varphi_h(Z,Y,\gamma_{h,P},\alpha_{h,P},\psi(P))^2} - \EP{\varphi(X,Y,\gamma_{P},\alpha_{P},\psi(P))^2} \\
    &= \VP{R_P} + 2\EP{R_P \varphi(X,Y,\gamma_{P},\alpha_{P},\psi(P))}.
    \end{align*}
    It is thus sufficient to show
    \begin{align*}
    \EP{R_P \varphi(X,Y,\gamma_{P},\alpha_{P},\psi(P))} &= \EP{R_P m(X,\gamma_P) + R_P\alpha_P(X)(Y-\gamma_P(X)) - R_P\psi(P)} \\ &= 0.    
    \end{align*}
    Since $\alpha_P(X) = \alpha_{h,P}(Z)$,
    $$
    R_P = m_{h}(Z,\gamma_{h,P}) - m(X,\gamma_{P}) + \alpha_P(X)(\gamma_P(X)-\gamma_{h,P}(Z)). 
    $$
    In particular, $R_P$ is $\sigma(X)$-measurable, and we also see that $\EP{R_P}=0$. This gives us
    $
    \EP{R_P\psi(P)}=0,
    $
    and
    $$
    \EP{R_P\alpha_P(X)(Y-\gamma_P(X))} = \EP{R_P\alpha_P(X)\EP{Y-\gamma_P(X)\mid X}} = 0. 
    $$
    From the proof of Lemma \ref{lem:int-treat-func}, we have
    $$
    m_h((U,\tilde Z),\gamma_{h,P}) = \int g(\tilde u) \gamma_{h,P}(\tilde u,\tilde Z) \dd\mu(\tilde u).
    $$
    Therefore,  
    \begin{align*}
    \EP{\alpha_P(X)\gamma_P(X)\mid W} = \int \frac{g(\tilde u)}{f_P(\tilde u\mid W)}\gamma_P(\tilde u,W) f_P(\tilde u\mid W) \dd\mu(\tilde u)= m(X,\gamma_P),  
    \end{align*}
    and
    $$
    \EP{\alpha_P(X)\gamma_{h,P}(Z) \mid W} = \int \frac{g(\tilde u)}{f_P(\tilde u\mid W)}\gamma_{h,P}(\tilde u,\tilde Z) f_P(\tilde u\mid W)\dd\mu(\tilde u) = m_h(Z,\gamma_{h,P}). 
    $$
    Since $m_{h}(Z,\gamma_{h,P})$ and $m(X,\gamma_{P})$ are $\sigma(W)$-measurable, we get that
    $$
    \EP{R_P\mid W} = m_{h}(Z,\gamma_{h,P}) - m(X,\gamma_{P}) + m(X,\gamma_{P}) - m_h(Z,\gamma_{h,P})=0. 
    $$
    We conclude that
    $$
    \EP{R_Pm(X,\gamma_P)} =  \EP{\EP{R_P\mid W}m(X,\gamma_P)} = 0,
    $$
    which completes the proof.
\end{proof}

\begin{example}\label{ex-not permissible}
    Consider the ATE of Example \ref{ex-ate}. Assume that $W\in\{-1,1\}$ and let $h(u,w)=(2u-1)w$. Then, for $\eta:\RR\to\RR$,
    $$
    m((u,w),\eta\circ h) = (\eta\circ h)(1,w) - (\eta\circ h)(0,w) = \eta(w) - \eta(-w). 
    $$
    If we take $\eta(z)=z$, then $m((u,w),\eta\circ h) = 2w$, which is a function that cannot be written as a function of $z=h(u,w)$ since
    $$
    z = h(u,w) = 
    \begin{cases}
    1\quad & \text{ for } w=1, u=1 \text{ or } w=-1, u=0, \\ 
    -1\quad & \text{ for } w=1, u=0 \text{ or } w=-1, u=1.
    \end{cases}
    $$ 
\end{example}
    
\begin{example}\label{ex: counterexample ase.}
    Recall the average shift effect functional of Example \ref{ex-ase}. Pick $K>0$ and let 
    \begin{align*}
    W &\sim \mathcal{N}(0,1), \\
    U \mid W=w &\sim \exp(1), \\ 
    Y &= W\cdot 1_{(U > K)}. 
    \end{align*}
    We consider a shift $\delta\in(0,K)$ and the representation $Z = h(U,W) = U$. We have
    $$
    \gamma_P(U,W) =W\cdot 1_{(U > K)}, \quad \gamma_{h,P}(U) = \EP{W} 1_{(U > K)} = 0. 
    $$
    For the Riesz representer, we have 
    $$
    \alpha_P(U,W) = 1_{(U\geq \delta)}\exp(\delta )-1 = \alpha_{h,P}(U).
    $$
    Then
    $$
    V_{P} = \EP{m(X,\gamma_P)^2} = \EP{W^2} \PP{K-\delta < U \leq K } = \exp(-K) (\exp(\delta)-1),
    $$
    and 
    $$
    V_{h,P} = \EP{\alpha_{h,P}(U)^2 Y^2} = \EP{W^2 1_{(U >K)}(1_{(U \geq \delta)} \exp(\delta)-1)^2} = \exp(-K)(\exp(\delta)-1)^2.
    $$
    Thus, $V_{h,P} < V_P$ if and only if $\delta < \log(2)$.
\end{example}

\subsection{Auxiliary Probabilistic Results}\label{sec:aux_results}

This section includes two general but technical lemmas regarding uniform asymptotics, which will be used in subsequent proofs, see also Section \ref{sec-asymp} for notation. We also include a useful lemma regarding conditional expectations of functions of independent random variables.

\begin{lemma}\label{lemma: unif-cont-mapping}
Let $(X_n)_{n\in\NN}$ be a sequence of random variables on a measurable space $(\Omega,\mathbb{F})$ and let $(\mathbb{P}_P)_{P\in\mathcal{P}}$ be a family of probability measures on $(\Omega,\mathbb{F})$. 
\begin{enumerate}[label=(\arabic*)]
    \item \label{bounded a} If $X_{n} = o_\mathcal{P}(1)$ and $(A_{n,P})_{n\in\NN, P\in\mathcal{P}}$ is a collection of random variables such that $\sup_{n\in\NN, P\in\mathcal{P}} |A_{n,P}|\leq c$ for some $c>0$, then $A_{n,P} X_{n}=o_\mathcal{P}(1)$. 
    \item \label{cont mapping} If $X_{n} = a + o_\mathcal{P}(1)$ for some $a\in\RR$ and $h:\RR\to\RR$ is continuous, then $h(X_{n}) = h(a) + o_\mathcal{P}(1)$. 
    \end{enumerate}
\end{lemma}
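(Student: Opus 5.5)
Both parts follow directly from the definition of $o_\mathcal{P}$, by reducing to uniform probability bounds of the form $\sup_{P\in\mathcal{P}}\PP{|X_n|>\epsilon}$. No weak-convergence machinery is needed. The approach is deterministic set inclusion plus the union bound.

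For part \ref{bounded a}, fix $\epsilon>0$. Since $|A_{n,P}|\leq c$ for all $n$ and $P$, there is the pointwise inclusion
$$
\{|A_{n,P}X_n|>\epsilon\}\subseteq\{|X_n|>\epsilon/c\}.
$$
Hence $\sup_{P}\PP{|A_{n,P}X_n|>\epsilon}\leq \sup_P\PP{|X_n|>\epsilon/c}\to 0$ by the assumption $X_n=o_\mathcal{P}(1)$. The only subtlety is that $A_{n,P}$ may depend on $P$, but the bound $c$ is uniform, so the inclusion holds under every $\mathbb{P}_P$ simultaneously.

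For part \ref{cont mapping}, the key point is that only continuity of $h$ at the single fixed point $a$ is needed. Because $a$ does not depend on $P$, the $\delta$ from continuity is automatically uniform over $\mathcal{P}$. Fix $\epsilon>0$ and choose $\delta>0$ such that $|x-a|\leq\delta$ implies $|h(x)-h(a)|\leq\epsilon$. This gives the inclusion
$$
\{|h(X_n)-h(a)|>\epsilon\}\subseteq\{|X_n-a|>\delta\},
$$
and so
$$
\sup_P\PP{|h(X_n)-h(a)|>\epsilon}\leq\sup_P\PP{|X_n-a|>\delta}\to0.
$$

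The main obstacle is essentially nil. The one place requiring care is confirming that the same $\delta$ works for all $P$, which holds because $a$ is a deterministic constant independent of $P$. If $a$ were allowed to vary with $P$, one would instead need uniform continuity of $h$ or tightness, but that case is not claimed here.
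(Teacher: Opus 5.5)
Your proposal is correct and follows the paper's proof essentially verbatim: the same set inclusion $\{|A_{n,P}X_n|>\epsilon\}\subseteq\{|X_n|>\epsilon/c\}$ for part (1), and, for part (2), a $\delta$ from continuity at the fixed point $a$ giving $\{|h(X_n)-h(a)|>\epsilon\}\subseteq\{|X_n-a|>\delta\}$. Your non-strict choice ($|x-a|\leq\delta \Rightarrow |h(x)-h(a)|\leq\epsilon$) makes that inclusion exact. The paper's strict-inequality version formally only yields $\{|X_n-a|\geq\delta\}$, though this is harmless after halving $\delta$.
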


\begin{proof}
    \begin{enumerate}
    \item[\ref{bounded a}] Let $\varepsilon>0$. Then 
    $$
    \lim_{n\to\infty}\sup_{P\in\mathcal{P}} \PP{|A_{n,P}X_{n}|>\varepsilon} \leq   \lim_{n\to\infty}  \sup_{P\in\mathcal{P}} \PP{|X_{n}|>\varepsilon/c} = 0.  
    $$

    \item[\ref{cont mapping}] Let $\varepsilon>0$. Pick $\delta>0$ such that if $|x-a|<\delta$, then $|h(x)-h(a)|<\varepsilon$. Then 
    $$
    \lim_{n\to\infty}\sup_{P\in\mathcal{P}} \PP{|h(X_{n})-h(a)|>\varepsilon}
    \leq 
    \lim_{n\to\infty}\sup_{P\in\mathcal{P}} \PP{|X_{n}-a|>\delta} = 0.  
    $$
    \end{enumerate}
\end{proof}

\begin{lemma}\label{lemma-conditional-convergence-in-prob}
    Let $(X_n)_{n\in\NN}$ be a sequence of random variables on a measurable space $(\Omega,\mathbb{F})$ and let $(\mathbb{P}_P)_{P\in\mathcal{P}}$ be a family of probability measures on $(\Omega,\mathbb{F})$. Let $(\mathcal{A}_n)_{n\in\NN}$ be a sequence of sub-$\sigma$-algebras on the background space. If 
    $$
        \EP{|X_n|\mid\mathcal{A}_n} = o_\mathcal{P}(1)
    $$  
    or
    $$ 
        \EP{X_n^2\mid\mathcal{A}_n} = o_\mathcal{P}(1),
    $$
    then 
    $$
        X_n = o_\mathcal{P}(1). 
    $$
\end{lemma}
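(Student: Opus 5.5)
We prove the first-moment statement with the conditional Markov inequality, and reduce the second-moment statement to it.

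Fix $\varepsilon>0$ and $\eta>0$. The conditional Markov inequality gives, $\mathbb{P}_P$-almost surely,
$$
\PP{|X_n|>\varepsilon \mid \mathcal{A}_n} \leq \min\left\{1, \varepsilon^{-1}\EP{|X_n|\mid\mathcal{A}_n}\right\}.
$$
Taking expectations yields
$$
\PP{|X_n|>\varepsilon} = \EP{\PP{|X_n|>\varepsilon\mid\mathcal{A}_n}} \leq \EP{\min\{1,\varepsilon^{-1}\EP{|X_n|\mid\mathcal{A}_n}\}}.
$$
We split this expectation according to whether $\EP{|X_n|\mid\mathcal{A}_n}\le \varepsilon\eta$ or not. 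On the first event the integrand is at most $\eta$. On the complement it is at most $1$, so that part contributes at most $\PP{\EP{|X_n|\mid\mathcal{A}_n}>\varepsilon\eta}$. Hence
$$
\sup_{P\in\mathcal{P}}\PP{|X_n|>\varepsilon} \leq \eta + \sup_{P\in\mathcal{P}}\PP{\EP{|X_n|\mid\mathcal{A}_n}>\varepsilon\eta}.
$$
By assumption the last term tends to $0$. Therefore the $\limsup$ of the left-hand side is at most $\eta$, and letting $\eta\downarrow 0$ gives $X_n=o_\mathcal{P}(1)$.

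For the second-moment hypothesis there are two equivalent routes. The first applies the argument above to $X_n^2$: it gives $X_n^2 = o_\mathcal{P}(1)$, and this implies $X_n=o_\mathcal{P}(1)$ because $\{|X_n|>\varepsilon\}=\{X_n^2>\varepsilon^2\}$. The second uses conditional Jensen, $\EP{|X_n|\mid\mathcal{A}_n}\le \EP{X_n^2\mid\mathcal{A}_n}^{1/2}$, together with Lemma \ref{lemma: unif-cont-mapping}\ref{cont mapping} applied to $\sqrt{\cdot}$, which reduces to the first case.

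The main obstacle is that conditional Markov on its own only bounds the conditional probability, and that bound is not uniformly integrable unless it is truncated. Capping it at $1$ and splitting on the small-probability event handles this. The argument is otherwise uniform in $P$, because $\eta$ and $\varepsilon$ do not depend on $P$.
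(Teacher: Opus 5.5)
Your proof is correct and follows essentially the same route as the paper: conditional Markov, then splitting the expectation of the conditional probability (bounded by $1$) on a small-probability event, with the second-moment case reduced to the first via conditional Jensen and Lemma~\ref{lemma: unif-cont-mapping}\ref{cont mapping}. The differences are cosmetic. You split on $\{\EP{|X_n|\mid\mathcal{A}_n}>\varepsilon\eta\}$, whereas the paper splits on the event that $\PP{|X_n|>\varepsilon\mid\mathcal{A}_n}$ exceeds $\delta$, and your alternative of applying the first case directly to $X_n^2$ is also valid.
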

\begin{proof}
    Since 
    $$
    \EP{|X_n|\mid\mathcal{A}_n} \leq \EP{X_n^2\mid\mathcal{A}_n}^{1/2}
    $$
    we get by Lemma \ref{lemma: unif-cont-mapping}, that $\EP{X_n^2\mid\mathcal{A}_n}= o_\mathcal{P}(1)$ implies $\EP{|X_n|\mid\mathcal{A}_n} = o_\mathcal{P}(1)$. Hence, it is sufficient to show the result for $\EP{|X_n|\mid\mathcal{A}_n}= o_\mathcal{P}(1)$.
    Let $\varepsilon > 0$. By the conditional Markov inequality,
        $$
    \PP{|X_n| > \varepsilon \mid \mathcal{A}_n} \leq \frac{\EP{|X_n| \mid \mathcal{A}_n}}{\epsilon},
    $$
    which implies that $\PP{|X_n| > \varepsilon \mid \mathcal{A}_n} = o_{\mathcal{P}}(1)$. We have for all $\delta>0$,
    \begin{align*}
        \sup_{P\in\mathcal{P}} \PP{|X_n|>\varepsilon} &= \sup_{P\in\mathcal{P}}  \EP{ \PP{|X_n| > \varepsilon \mid \mathcal{A}_n}} \\ 
        &\leq \delta +  \sup_{P\in\mathcal{P}}  \EP{ 1_{(\PP{|X_n| > \varepsilon \mid \mathcal{A}_n)> \delta} } \PP{|X_n| > \varepsilon \mid \mathcal{A}_n}} \\
        &\leq  \delta + \sup_{P \in \mathcal{P}} \PP{\PP{|X_n| > \varepsilon \mid \mathcal{A}_n}> \delta}.
    \end{align*}
    Letting $\delta\to 0$ and $n\to\infty$ gives the result. 
\end{proof}

\begin{lemma}
  \label{lemma:cond_exp_independent}
Let $(\Omega, \mathcal{F}, \mathbb{P})$ denote a probability space, $X: \Omega
\to \mathcal{X}$ and $Y: \Omega \to \mathcal{Y}$ are random variables and let
$\phi: \mathcal{X} \times \mathcal{Y} \to \mathbb{R}$ be measurable and satisfy that
$\mathbb{E}[|\phi(X, Y)|] < \infty$. Define $g: \mathcal{Y} \to \mathbb{R}$ for all
$y \in \mathcal{Y}$ by $g(y) := \mathbb{E}[\phi(X, y)]$. If $X$ and $Y$ are independent, then
\[
  \mathbb{E}[\phi(X, Y) \mid Y] = g(Y).
\]
\end{lemma}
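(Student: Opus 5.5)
The plan is to identify the conditional expectation with a regular conditional distribution of $X$ given $Y$, using independence to write that distribution as the marginal law of $X$. Let $P_X$ and $P_Y$ denote the laws of $X$ and $Y$. By independence, the joint law of $(X,Y)$ is the product measure $P_X\otimes P_Y$. Since $\mathbb{E}[|\phi(X,Y)|] = \int |\phi|\,\dd(P_X\otimes P_Y)<\infty$, Tonelli--Fubini gives two things. First, $\int|\phi(x,y)|\,\dd P_X(x)<\infty$ for $P_Y$-almost all $y$. Second, $y\mapsto\int \phi(x,y)\,\dd P_X(x)$ is measurable, so $g$ is well defined $P_Y$-a.s. and measurable, and $g(Y)$ is integrable. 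On the exceptional null set one may set $g=0$, so that $g$ is everywhere defined; this does not change $g(Y)$ almost surely.

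Next I verify the defining property of conditional expectation. It suffices to show $\mathbb{E}[\phi(X,Y)1_B(Y)] = \mathbb{E}[g(Y)1_B(Y)]$ for all measurable $B\subseteq\mathcal{Y}$, since $\sigma(Y)$ consists of the sets $\{Y\in B\}$ and $g(Y)$ is $\sigma(Y)$-measurable. By the product structure and Fubini,
$$
\mathbb{E}[\phi(X,Y)1_B(Y)] = \int_B\int \phi(x,y)\,\dd P_X(x)\,\dd P_Y(y) = \int_B g(y)\,\dd P_Y(y) = \mathbb{E}[g(Y)1_B(Y)].
$$
Fubini applies here because $|\phi\, 1_B|\le|\phi|$ is integrable with respect to $P_X\otimes P_Y$.

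The only delicate point is the measurability and almost-everywhere finiteness of $g$. The statement defines $g(y)$ for every $y$, but the integral may fail to exist on a $P_Y$-null set. I handle this with the Fubini--Tonelli theorem as above, working if needed with the completion or with the convention of setting $g=0$ off the good set. Beyond that, the argument is a direct application of the product-measure characterization of independence.
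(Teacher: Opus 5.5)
Your proof is correct and follows the paper's argument: you identify the joint law with $P_X\otimes P_Y$ and check $\mathbb{E}[\phi(X,Y)1_B(Y)]=\mathbb{E}[g(Y)1_B(Y)]$ for all measurable $B$ via Fubini. Your treatment of the measurability of $g$ and of the $P_Y$-null set where $\int\phi(x,y)\,\dd P_X(x)$ may be undefined is more careful than the paper, which simply asserts that $g(Y)$ is clearly measurable.
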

\begin{proof}
Clearly $g(Y)$ is a measurable function of $Y$. Let $P_{XY}$ denote the
distribution of $(X, Y)$, $P_X$ the distribution of $X$ and $Y$ the distribution
of $Y$ and note that since $X$ and $Y$ are independent, we have $P_{XY} = P_X \otimes
P_Y$. For every measurable $A \subseteq \mathcal{Y}$, we have 
\begin{align*}
  \mathbb{E}[\ind_A(Y) \phi(X, Y)] &= \int_{\mathcal{X} \times \mathcal{A}} \phi(x, y) \, \mathrm{d}P_{XY}(x, y) = \int_{\mathcal{A}} \int_{\mathcal{X}} \phi(x, y) \, \mathrm{d}P_X(x) \, \mathrm{d}P_Y(y)\\
  &= \int_{\mathcal{A}} g(y) \, \mathrm{d}P_Y(y) = \mathbb{E}[\ind_{A}(Y) g(Y)].
\end{align*}
This proves the desired result.
\end{proof}

\subsection{Proofs for Section \ref{seq:sampling error}}

For $h \in \mathcal{H}$ and $P \in \mathcal{P}$, let $Z = h(X)$ and
$$
\varphi_{h,P}(Z,Y) = \varphi_{h}(Z,Y,\gamma_{h,P},\alpha_{h,P},\psi_h(P)).
$$
We also use the notation
\begin{align*}
r_\gamma(\hat Z) &= \hat\gamma_{\hat h}(\hat Z) - \gamma_{\hat h,P}(\hat Z), \\
r_\alpha(\hat Z) &= \hat\alpha_{\hat h}(\hat Z) - \alpha_{\hat h,P}(\hat Z), \\
r_\varphi(\hat Z, Y) &= \varphi_{\hat{h}}(\hat{Z},Y,\hat\gamma_{\hat{h}},\hat\alpha_{\hat{h}},\psi_{\hat{h}}(P))-\varphi_{\hat{h},P}(\hat{Z},Y).
\end{align*}
We prove Theorem \ref{thm-inferens} using two lemmas. Throughout, we assume familiarity with the results in Section \ref{sec:aux_results}.

\begin{lemma}\label{lemma-asymp_alt}
    Assume Assumptions \ref{assump-mean-square continuity} and \ref{assump-asymptotics} hold. With $(X, Y)\sim P$ independent of $D_1$, $\hat{Z} = \hat{h}(X)$ and $Z=h(X)$, the following are true, 
    \begin{align}
    \sup_{P\in\mathcal{P}}\sup_{h\in\mathcal{H}} \EP{|\varphi_{h,P}(Z,Y)|^{2+\delta}}&<\infty, \label{first claim_alt} \\
    \EP{r_\varphi(\hat Z, Y)^2\mid D_1} &= o_\mathcal{P}(1), \label{second claim_alt}  \\
     \label{third claim_alt}
    n^{1/2} \EP{r_\varphi(\hat Z, Y) \mid D_1} &= o_\mathcal{P}(1)  \\
    \inf_{P\in\mathcal{P}} \EP{\varphi_{\hat{h},P}(\hat{Z},Y)^2 \mid D_1}&>0. \label{fourth claim_alt} 
    \end{align}
\end{lemma}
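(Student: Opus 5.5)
I will verify the four claims of Lemma \ref{lemma-asymp_alt} in order. Throughout I use three facts: $(X,Y)$ is independent of $D_1$; $\hat h\in\mathcal{H}$ is $D_1$-measurable; and Lemma \ref{lemma:cond_exp_independent} therefore lets me evaluate conditional expectations given $D_1$ by fixing $h=\hat h$ (and the estimated nuisance functions) and integrating over $(X,Y)\sim P$.

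\textbf{Claim \eqref{first claim_alt}.} Write $\varphi_{h,P}(Z,Y) = (m_h(Z,\gamma_{h,P})-\psi_h(P)) + \alpha_{h,P}(Z)(Y-\gamma_{h,P}(Z))$ and apply $|a+b|^{2+\delta}\le 2^{1+\delta}(|a|^{2+\delta}+|b|^{2+\delta})$. The first term is bounded by $C$ directly by Assumption~\ref{assump-asymptotics}(i). For the second term, $|\alpha_{h,P}(Z)|\le C$, so it suffices to bound $\EP{|Y-\gamma_{h,P}(Z)|^{2+\delta}}$. This follows from
\[
|Y-\gamma_{h,P}(Z)|^{2+\delta}\lesssim |Y|^{2+\delta}+|\gamma_{h,P}(Z)|^{2+\delta},
\]
together with conditional Jensen, which gives $|\gamma_{h,P}(Z)|^{2+\delta}\le \EP{|Y|^{2+\delta}\mid Z}$. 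All constants are uniform in $P$ and $h$.

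\textbf{Claim \eqref{second claim_alt}.} Expand the remainder as
\[
r_\varphi = m_{\hat h}(\hat Z, r_\gamma) + r_\alpha(\hat Z)(Y-\gamma_{\hat h,P}(\hat Z)) - \hat\alpha_{\hat h}(\hat Z)r_\gamma(\hat Z),
\]
using linearity of $m_{\hat h}$. I bound each piece in conditional $L^2$ given $D_1$.
\begin{itemize}
\item For the first piece, $m_{\hat h}(\hat Z,r_\gamma)=m(X,r_\gamma\circ\hat h)$. Uniform mean-square continuity (Assumption \ref{assump-mean-square continuity}), applied conditionally on $D_1$, gives the bound $\kappa\,\EP{r_\gamma(\hat Z)^2\mid D_1}$.
\item For the second piece, condition on $\hat Z$ and use the bound $C$ on the conditional variance in Assumption~\ref{assump-asymptotics}(i). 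This gives the bound $C\,\EP{r_\alpha^2\mid D_1}$.
\item For the third piece, use $|\hat\alpha_{\hat h}|\le C$ to get the bound $C^2\,\EP{r_\gamma^2\mid D_1}$.
\end{itemize}
Assumption~\ref{assump-asymptotics}(ii)--(iii) then yields $o_\mathcal{P}(1)$.

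\textbf{Claim \eqref{third claim_alt}.} This is the step I expect to be the main obstacle, since the double-robustness cancellation must be carried out conditionally on $D_1$. I use the rewriting $-\hat\alpha r_\gamma = -\alpha_{\hat h,P} r_\gamma - r_\alpha r_\gamma$. The middle term of the expansion has conditional mean zero because $\EP{Y-\gamma_{\hat h,P}(\hat Z)\mid \hat Z, D_1}=0$. By Proposition \ref{prop:rep-functional} applied with fixed $h=\hat h$, the Riesz property of $\alpha_{\hat h,P}$ gives
\[
\EP{m_{\hat h}(\hat Z,r_\gamma)\mid D_1}=\EP{\alpha_{\hat h,P}(\hat Z)r_\gamma(\hat Z)\mid D_1}.
\]
This requires $r_\gamma\in L^2(P_{\hat Z})$, which I would justify from (ii) on an event of probability tending to one. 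What remains is $-\EP{r_\alpha r_\gamma\mid D_1}$. Its absolute value is at most the product of the conditional $L^2$ norms by Cauchy--Schwarz, so Assumption~\ref{assump-asymptotics}(iv) gives $n^{1/2}\cdot$ this $=o_\mathcal{P}(1)$.

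\textbf{Claim \eqref{fourth claim_alt}.} By Lemma \ref{lemma:cond_exp_independent}, $\EP{\varphi_{\hat h,P}(\hat Z,Y)^2\mid D_1}$ equals $\EP{\varphi_{h,P}(Z,Y)^2}$ evaluated at $h=\hat h$. This is at least $c$ by Assumption~\ref{assump-asymptotics}(i), since $\hat h\in\mathcal{H}$, so the infimum is at least $c>0$.
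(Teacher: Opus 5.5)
Your proposal is correct and follows the paper's proof essentially step for step. The paper uses the same four arguments:
\begin{itemize}
\item for the first claim, the $2^{1+\delta}$ convexity bound together with conditional Jensen;
\item for the second, the same three-term expansion of $r_\varphi$, controlled by uniform mean-square continuity, the conditional-variance bound and $|\hat\alpha_{\hat h}|\le C$;
\item for the third, the Riesz-representer cancellation that reduces $\EP{r_\varphi\mid D_1}$ to $-\EP{r_\alpha r_\gamma\mid D_1}$, followed by Cauchy--Schwarz and Assumption~\ref{assump-asymptotics}(iv);
\item for the fourth, Lemma~\ref{lemma:cond_exp_independent} with the lower bound $c$.
\end{itemize}
Your remark that one needs $r_\gamma\in L^2(P_{\hat Z})$ before invoking the Riesz representation is a point of care the paper leaves implicit, but it does not change the argument.
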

\begin{proof} 
    For \eqref{first claim_alt}, by two applications of the standard inequality $|x+y|^r \leq 2^{r-1}(|x|^r+|y|^r)$ for $r \geq 1$ (provable by convexity of $t \mapsto |t|^r$) and Assumption~\ref{assump-asymptotics}~(i), we have
    \begin{align*}
    &|\varphi_{h,P}(Z,Y)|^{2+\delta} \leq 2^{1+\delta} \left( |m_h(Z,\gamma_{h,P})-\psi_{h}(P)|^{2+\delta} + |\alpha_{h,P}(Z)(Y-\gamma_{h,P}(Z))|^{2+\delta}\right)\\
    & \leq  2^{1+\delta}  |m_h(Z,\gamma_{h,P})-\psi_{h}(P)|^{2+\delta} + 2^{2+2\delta}|\alpha_{h,P}(Z)|^{2+\delta}\left(|Y|^{2+\delta} + |\gamma_{h,P}(Z)|^{2+\delta}\right)\\
    & \leq 2^{1+\delta} |m_h(Z,\gamma_{h,P})-\psi_{h}(P)|^{2+\delta} + 2^{2+2\delta}C^{2+\delta}\left(|Y|^{2+\delta} + |\gamma_{h,P}(Z)|^{2+\delta}\right).
    \end{align*}
    By taking expectations of both sides of the inequality, we obtain \eqref{first claim_alt} by Assumption~\ref{assump-asymptotics}~(i) and conditional Jensen's inequality since 
    \[
    \EP{|\gamma_{h, P}(Z)|^{2+\delta}} = \EP{|\EP{Y \mid Z}|^{2+\delta}} \leq \EP{|Y|^{2+\delta}}\leq C.
    \]

    For \eqref{second claim_alt}, we begin by writing
    \begin{equation}
    \label{eq:r_phi}
    \begin{aligned}
    r_\varphi(\hat Z, Y) &= m_{\hat{h}}(\hat{Z},\hat{\gamma}_{\hat{h}}) + \hat{\alpha}_{\hat{h}}(\hat Z)(Y-\hat{\gamma}_{\hat{h}}(\hat Z)) - m_{\hat{h}}(\hat{Z},\gamma_{\hat{h}, P}) - \alpha_{\hat{h}, P}(\hat Z)(Y-\gamma_{\hat{h}, P}(\hat Z))\\
    &= m_{\hat{h}}(\hat{Z}, \hat{\gamma}_{\hat{h}}-{\gamma}_{\hat{h},P}) - \hat\alpha_{\hat h}(\hat Z)r_\gamma(\hat Z) + r_\alpha(\hat Z)(Y -{\gamma}_{\hat{h},P}(\hat{Z})).
    \end{aligned}
    \end{equation}
    Using $(x + y + z)^2 \leq 3x^2 + 3y^2 +3z^2$ (again provable by convexity of $t \mapsto t^2$), we have 
    \begin{align*}
        \EP{r_\varphi(\hat{Z}, Y)^2 \mid D_1} &\leq 3 \EP{m_{\hat{h}}(\hat{Z}, \hat{\gamma}_{\hat{h}}-{\gamma}_{\hat{h},P})^2 \mid D_1}+ 3 \EP{\hat\alpha_{\hat h}(\hat Z)^2r_\gamma(\hat Z)^2 \mid D_1}\\
        &+ 3 \EP{r_\alpha(\hat Z)^2(Y -{\gamma}_{\hat{h},P}(\hat{Z}))^2 \mid D_1}.
    \end{align*}
    We will argue that each of the three terms in the upper bound above are $o_{\mathcal{P}}(1)$ to finish the proof that \eqref{second claim_alt} holds. For every fixed $h \in \mathcal{H}$ by Assumption~\ref{assump-mean-square continuity} and permissibility of $h$ we have for every $\eta: \mathbb{R}^q \to \mathbb{R}$ with $\EP{\eta(h(X))^2} < \infty$ that
    \begin{align*}
    \EP{m_{h}(h(X), \eta)^2} \leq \kappa \EP{ \eta(h(X))^2}.
    \end{align*}
    By Lemma~\ref{lemma:cond_exp_independent}, since $D_1$ is independent of $X$, we can use the bound for the $D_1$-measurable function $\hat{\gamma}_{\hat{h}} - \hat{\gamma}_{\hat{h}, P}$ to obtain 
    \[
        \EP{m_{\hat{h}}(\hat{Z}, \hat{\gamma}_{\hat{h}}-{\gamma}_{\hat{h},P})^2 \mid D_1} \leq \kappa \EP{ (\hat{\gamma}_{\hat{h}}(\hat{Z})-{\gamma}_{\hat{h},P}(\hat{Z}))^2 \mid D_1} = o_{\mathcal{P}}(1)
    \]
    by Assumption~\ref{assump-asymptotics}~(ii). By Assumption~\ref{assump-asymptotics}~(i) and (ii), we have
    $$
    \EP{\hat\alpha_{\hat h}(\hat Z)^2r_\gamma(\hat Z)^2 \mid D_1} \leq C^2 \EP{ r_\gamma(\hat Z)^2 \mid D_1} = o_{\mathcal{P}}(1).
    $$
    Finally, by Assumption~\ref{assump-asymptotics}~(i) and (iii), we have 
    \begin{align*}
        &\EP{r_\alpha(\hat Z)^2(Y -{\gamma}_{\hat{h},P}(\hat{Z}))^2 \mid D_1} = \EP{r_\alpha(\hat Z)^2\EP{(Y -{\gamma}_{\hat{h},P}(\hat{Z}))^2 \mid D_1, \hat Z}\mid D_1}\\
        &\leq \EP{r_\alpha(\hat Z)^2 \sup_{h \in \mathcal{H}} \EP{(Y -{\gamma}_{h,P}(h(X)))^2 \mid h(X)} \mid D_1}\\
        &\leq C \EP{r_\alpha(\hat Z)^2 \mid D_1} = o_{\mathcal{P}}(1).
    \end{align*}
    
    For \eqref{third claim_alt}, we first note that for every fixed $h \in \mathcal{H}$ by Assumption~\ref{assump-mean-square continuity} and permissibility of $h$, $m_h$ has a Riesz representer $\alpha_{h, P}$ such that for all $\eta: \mathbb{R}^q \to \mathbb{R}$ with $\EP{\eta(h(X))^2} < \infty$ we have
    \begin{align*}
    \EP{m_{h}(h(X), \eta)} = \EP{ \alpha_{h, P}(h(X)) \eta(h(X))}.
    \end{align*}
    By Lemma~\ref{lemma:cond_exp_independent}, since $D_1$ is independent of $X$, we obtain that 
    \[
        \EP{m_{h}(\hat{Z}, \eta) \mid D_1} = \EP{ \alpha_{\hat{h}, P}(\hat{Z}) \eta(\hat{Z}) \mid D_1}.
    \]
    We also note that
    $$
    \EP{r_\alpha(\hat Z) (Y -{\gamma}_{\hat{h},P}(\hat{Z})) \mid D_1} = \EP{r_\alpha(\hat Z)\EP{(Y-\gamma_{\hat{h},P}(\hat{Z}))\mid \hat Z, D_1} \mid D_1}
    = 0.
    $$
    Thus, by using \eqref{eq:r_phi}, the above observations, the Cauchy-Schwarz inequality and Assumption~\ref{assump-asymptotics}~(iv), we obtain 
    \begin{align*}
        &\left| n^{1/2} \EP{r_\varphi(\hat Z,Y) \mid D_1}\right| =
        \left| n^{1/2} \EP{ m_{\hat{h}}(\hat{Z}, \hat{\gamma}_{\hat{h}}-{\gamma}_{\hat{h},P}) - \hat\alpha_{\hat h}(\hat Z)r_\gamma(\hat Z) \mid D_1} \right| \\
        &= \left| n^{1/2} \EP{\alpha_{\hat h,P}(\hat Z)r_\gamma(\hat Z) - \hat\alpha_{\hat h}(\hat Z)r_\gamma(\hat Z) \mid D_1} \right| = \left| n^{1/2} \EP{-r_\alpha(\hat Z)r_\gamma(\hat Z)\mid D_1} \right| \\
        &\leq n^{1/2} \EP{r_\alpha(\hat Z)^2\mid D_1}^{1/2}\EP{r_\gamma(\hat Z)^2\mid D_1}^{1/2} = o_{\mathcal{P}}(1).
    \end{align*}
    
    For \eqref{fourth claim_alt}, we have for any $P \in \mathcal{P}$ by Lemma~\ref{lemma:cond_exp_independent} and Assumption~\ref{assump-asymptotics}~(i) that 
    \begin{align*}
    \EP{\varphi_{\hat h,P}(\hat{Z},Y)^2\mid D_1}
    & \geq \inf_{h \in \mathcal{H}} \EP{ \varphi_{h,P}(Z,Y)^2} \geq c > 0.
    \end{align*}
    This implies the desired result.
\end{proof}

We use the above results to show that the outcome-adapted AutoDML estimator is asymptotically linear. 

\begin{lemma}\label{lem-asymptotic linearity_alt}
    Under Assumptions \ref{assump-mean-square continuity} and \ref{assump-asymptotics}, the outcome-adapted AutoDML estimator satisfies, 
        \begin{align}
        \hat\psi - \psi_{\hat{h}}(P) &= \frac{1}{|I_2|}\sum_{i\in I_2}\varphi_{\hat{h},P}(\hat{Z}_i,Y_i) + o_\mathcal{P}(n^{-1/2}),\label{point_asymp_linear_alt}  \\
        \hat V  &= \frac{1}{|I_2|}\sum_{i\in I_2} \varphi_{\hat{h},P}(\hat{Z}_i,Y_i)^2 + o_\mathcal{P}(1). \label{var_asymp_linear_alt}
        \end{align}
\end{lemma}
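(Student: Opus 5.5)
Both expansions are decomposed around the influence function $\varphi_{\hat h,P}$ evaluated at the held-out observations. The key tool is conditioning on $D_1$: given $D_1$, the pairs $(\hat Z_i,Y_i)$, $i\in I_2$, are i.i.d., so Lemma~\ref{lemma-conditional-convergence-in-prob} reduces the statements to conditional moment bounds. These bounds are exactly the ones established in Lemma~\ref{lemma-asymp_alt}.

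\textbf{Point estimate.} Since $\psi_{\hat h}(P)$ cancels,
\[
\hat\psi - \psi_{\hat h}(P) = \frac{1}{|I_2|}\sum_{i\in I_2}\varphi_{\hat h,P}(\hat Z_i,Y_i) + \frac{1}{|I_2|}\sum_{i\in I_2} r_\varphi(\hat Z_i,Y_i).
\]
I would split the remainder into a centered part and a mean part:
\[
\frac{1}{|I_2|}\sum_{i\in I_2}\bigl(r_\varphi(\hat Z_i,Y_i)-\EP{r_\varphi(\hat Z,Y)\mid D_1}\bigr) + \EP{r_\varphi(\hat Z,Y)\mid D_1}.
\]
By \eqref{third claim_alt}, the mean part is $o_\mathcal{P}(n^{-1/2})$. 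For the centered part, multiply by $n^{1/2}$. Conditionally on $D_1$ the summands are i.i.d. with mean zero, so the conditional second moment is at most
\[
\frac{n}{|I_2|}\,\EP{r_\varphi(\hat Z,Y)^2\mid D_1}.
\]
This is $o_\mathcal{P}(1)$ by \eqref{second claim_alt}, because $|I_2|/n$ is a fixed proportion. Lemma~\ref{lemma-conditional-convergence-in-prob} then gives that the centered part is $o_\mathcal{P}(n^{-1/2})$, which yields \eqref{point_asymp_linear_alt}.

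\textbf{Variance.} Write $\hat\varphi_i = \varphi_{\hat h,P}(\hat Z_i,Y_i)+r_\varphi(\hat Z_i,Y_i)$ and $\varphi_i=\varphi_{\hat h,P}(\hat Z_i,Y_i)$. Then $\hat V = |I_2|^{-1}\sum_{i\in I_2}\hat\varphi_i^2 - (\hat\psi-\psi_{\hat h}(P))^2$.

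The last term is $o_\mathcal{P}(1)$. To see this, note that $\varphi_{\hat h,P}$ has conditional mean zero given $D_1$, and its conditional second moment is bounded uniformly by \eqref{first claim_alt}. Hence its empirical mean is $O_\mathcal{P}(n^{-1/2})$ by conditional Chebyshev, and combined with \eqref{point_asymp_linear_alt} this gives the claim.

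Next, expand
\[
\frac{1}{|I_2|}\sum_{i\in I_2}\hat\varphi_i^2 = \frac{1}{|I_2|}\sum_{i\in I_2}\varphi_i^2 + \frac{2}{|I_2|}\sum_{i\in I_2}\varphi_i r_{\varphi,i} + \frac{1}{|I_2|}\sum_{i\in I_2} r_{\varphi,i}^2,
\]
where $r_{\varphi,i}=r_\varphi(\hat Z_i,Y_i)$. The last average has conditional expectation $\EP{r_\varphi(\hat Z,Y)^2\mid D_1}=o_\mathcal{P}(1)$, so it is $o_\mathcal{P}(1)$ by Lemma~\ref{lemma-conditional-convergence-in-prob} applied with absolute values. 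The cross term is bounded by Cauchy--Schwarz as
\[
2\Bigl(\tfrac{1}{|I_2|}\textstyle\sum\varphi_i^2\Bigr)^{1/2}\Bigl(\tfrac{1}{|I_2|}\textstyle\sum r_{\varphi,i}^2\Bigr)^{1/2}.
\]
The first factor has conditional mean uniformly bounded by \eqref{first claim_alt}, hence is $O_\mathcal{P}(1)$. The product of an $O_\mathcal{P}(1)$ factor and an $o_\mathcal{P}(1)$ factor is $o_\mathcal{P}(1)$, which gives \eqref{var_asymp_linear_alt}.

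\textbf{Main obstacle.} The delicate point is the bookkeeping of uniform $O_\mathcal{P}/o_\mathcal{P}$ statements under conditioning. Lemma~\ref{lemma: unif-cont-mapping} handles products only when one factor is deterministically bounded, so I need a small uniform product rule for an $O_\mathcal{P}(1)$ factor times an $o_\mathcal{P}(1)$ factor. I would prove it directly: $O_\mathcal{P}(1)$ follows from Markov's inequality and the uniform moment bound \eqref{first claim_alt}. Then split on the event that the $O_\mathcal{P}(1)$ factor exceeds a large constant $K$, and let $K\to\infty$ after taking $n\to\infty$.
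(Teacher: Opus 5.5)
Your proposal is correct and follows the paper's decomposition. You split the remainder into a conditionally centred part and a conditional-mean part (the paper's $R_1$ and $R_2$), controlled by \eqref{second claim_alt} and \eqref{third claim_alt}. You then rewrite $\hat V$ as $|I_2|^{-1}\sum_{i\in I_2}\hat\varphi_i^2-(\hat\psi-\psi_{\hat h}(P))^2$ and expand it into a squared-remainder term and a cross term (the paper's $R_3$ and $R_4$). Two sub-steps are argued differently.

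First, to get $\hat\psi-\psi_{\hat h}(P)=o_\mathcal{P}(1)$, the paper applies the uniform law of large numbers of \citet[Lemma 19]{Shah_2020} to $S_P(h)$ uniformly over $h\in\mathcal{H}$, and then integrates over the distribution of $D_1$. You instead use that, given $D_1$, the terms $\varphi_{\hat h,P}(\hat Z_i,Y_i)$ are i.i.d.\ with mean zero, by Lemma~\ref{lemma:cond_exp_independent} and $\EP{\varphi_{h,P}(h(X),Y)}=0$ for each fixed $h$. Their conditional variance is at most $\sup_{P,h}\EP{\varphi_{h,P}(h(X),Y)^2}<\infty$. Conditional Chebyshev then gives the stronger $O_\mathcal{P}(n^{-1/2})$ without any external limit theorem, which is more elementary and self-contained.

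Second, for the cross term, the paper bounds $\EP{|R_4|\mid D_1}$ using the triangle inequality and conditional Cauchy--Schwarz. This gives $\EP{r_\varphi(\hat Z,Y)^2\mid D_1}^{1/2}$ times the deterministic constant $\sup_{h\in\mathcal{H}}\EP{\varphi_{h,P}(h(X),Y)^2}^{1/2}$. That bound fits part (1) of Lemma~\ref{lemma: unif-cont-mapping} and Lemma~\ref{lemma-conditional-convergence-in-prob} directly. Applying Cauchy--Schwarz to the empirical averages, as you do, is what creates the need for a uniform $O_\mathcal{P}(1)\cdot o_\mathcal{P}(1)$ product rule. Your sketch of that rule is fine, but moving Cauchy--Schwarz inside the conditional expectation removes the obstacle you identified.
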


\begin{proof}
For \eqref{point_asymp_linear_alt}, we have
    \begin{align*}
    \hat{\psi}- \psi_{\hat{h}}(P) 
    &= \frac{1}{|I_2|}\sum_{i\in I_2}\varphi_{\hat{h},P}(\hat{Z}_i,Y_i) + \frac{1}{|I_2|} \sum_{i \in I_2} r_\varphi(\hat Z_i, Y_i)
    \end{align*}
    Let
    \begin{align*}
    R_{1} &= |I_2|^{-1/2}\sum_{i \in I_2} \left( r_\varphi(\hat{Z}_i,Y_i) - \EP{r_\varphi(\hat{Z}_i,Y_i)\mid D_1} \right),  \\
    R_{2} &= |I_2|^{-1/2}\sum_{i \in I_2} \EP{r_\varphi(\hat{Z}_i,Y_i) \mid D_1}.
    \end{align*}
    It suffices to show that $R_1,R_2 = o_\mathcal{P}(1)$. Conditionally on $D_1$, the terms in $R_1$ are independent, identically distributed, and mean zero. By Lemma \ref{lemma-asymp_alt}, we get, with $(X,Y) \sim P$ and $\hat Z=\hat h(X)$,
    \begin{align*}
    \EP{R_1^2 \mid D_1} &= \frac{1}{|I_2|} \sum_{i \in I_2} \VP{r_\varphi(\hat{Z}_i, Y_i)\mid D_1} \leq \EP{r_\varphi(\hat{Z},Y)^2\mid D_1} = o_\mathcal{P}(1),\\
    R_2 &= |I_2|^{1/2} \EP{r_\varphi(\hat Z,Y)\mid D_1} = o_\mathcal{P}(1).  
    \end{align*}

    To see that \eqref{var_asymp_linear_alt} holds, note first that
    \begin{align*}
        \hat{V} &= \frac{1}{|I_2|} \sum_{i \in I_2} \left( \varphi_{\hat{h}, P}(\hat{Z}_i, Y_i) + r_{\varphi}(\hat{Z}_i, Y_i) -  (\hat{\psi}-\psi_{\hat{h}}(P))\right)^2\\
        &= \frac{1}{|I_2|} \sum_{i \in I_2} \left( \varphi_{\hat{h}, P}(\hat{Z}_i, Y_i) + r_{\varphi}(\hat{Z}_i, Y_i) \right)^2  + (\hat{\psi}-\psi_{\hat{h}}(P))^2\\
        &\hspace{4.025cm} - 2 (\hat{\psi}-\psi_{\hat{h}}(P)) \cdot \frac{1}{|I_2|} \sum_{i \in I_2} \left( \varphi_{\hat{h}, P}(\hat{Z}_i, Y_i) + r_{\varphi}(\hat{Z}_i, Y_i) 
        \right)\\
        &= \frac{1}{|I_2|} \sum_{i \in I_2} \left( \varphi_{\hat{h}, P}(\hat{Z}_i, Y_i) + r_{\varphi}(\hat{Z}_i, Y_i) \right)^2  - (\hat{\psi}-\psi_{\hat{h}}(P))^2\\
        &=  \frac{1}{|I_2|} \sum_{i\in I_2} \varphi_{\hat{h},P}(\hat{Z}_i,Y_i)^2 + R_3 + 2 R_4 - (\hat\psi - \psi_{\hat{h}}(P))^2,
    \end{align*}
    where 
    \begin{align*}
    R_3 =  \frac{1}{|I_2|} \sum_{i\in I_2} r_\varphi(\hat Z_i,Y_i)^2,\quad 
    R_4 =  \frac{1}{|I_2|} \sum_{i\in I_2} r_\varphi(\hat Z_i,Y_i)\varphi_{\hat h,P}(\hat Z_i,Y_i). 
    \end{align*}
    Let $(X,Y) \sim P$ independent of $D_1$, and define $\hat{Z}=\hat{h}(X)$ and $Z=h(X)$. By the triangle inequality for conditional expectations and Lemma~\ref{lemma-asymp_alt}~\eqref{second claim_alt}, we have
    \[
        \EP{|R_3| \mid D_1} \leq \EP{r_{\varphi}(\hat{Z}, Y)^2 \mid D_1} = o_{\mathcal{P}}(1).
    \]
    Similarly, by also applying conditional Cauchy--Schwarz and Lemma~\ref{lemma:cond_exp_independent}, we obtain 
    \begin{align*}
        \EP{|R_4|\mid D_1} &\leq \EP{r_\varphi(\hat Z,Y)^2 \mid D_1}^{1/2} \EP{\varphi_{\hat h,P}(\hat Z,Y)^2 \mid D_1}^{1/2}\\
        &\leq \EP{r_\varphi(\hat Z,Y)^2 \mid D_1}^{1/2} \sup_{h \in \mathcal{H}} \EP{\varphi_{h,P}(h(X),Y)^2}^{1/2}
    \end{align*}
    By Lemma~\ref{lemma-asymp_alt}~\eqref{first claim_alt} and \eqref{second claim_alt}, we conclude that $R_4 = o_{\mathcal{P}}(1)$.
    
    To establish \eqref{var_asymp_linear_alt}, it remains to show that $\hat{\psi}-\psi_{\hat{h}} = o_{\mathcal{P}}(1)$. By \eqref{point_asymp_linear_alt}, we have 
    $$
    \hat\psi - \psi_{\hat h}(P) = S_P(\hat h) + o_{\mathcal{P}}(n^{-1/2})
    $$
    where we define for $h \in \mathcal{H}$,
    $$
    S_P(h) = \frac{1}{|I_2|} \sum_{i\in I_2}\varphi_{h,P}(h(X_i),Y_i).
    $$
    By the Uniform Law of Large Numbers \citep[Lemma 19]{Shah_2020} and Lemma \ref{lemma-asymp_alt} \eqref{first claim_alt}, 
    $$
    S_P(h)=o_{\mathcal{P},\mathcal{H}}(1).
    $$
    Since $I_1,I_2$ are disjoint and $\hat h$ is estimated using the observations in $D_1$, we have for any $\varepsilon>0$, 
    \begin{align*}
    \lim_{n\to\infty}\sup_{P\in\mathcal{P}} \PP{|S_P(\hat h)|>\varepsilon} &= \lim_{n\to\infty} \sup_{P\in\mathcal{P}} \int \PP{|S_P(\hat h)|>\varepsilon\mid D_1 = d }\dd P_{D_1}(d)\\
    &\leq  \lim_{n\to\infty} \sup_{P\in\mathcal{P}} \int \sup_{h\in\mathcal{H}}\PP{|S_P(h)|>\varepsilon}\dd P_{D_1}(d)\\
    &= \lim_{n\to\infty} \sup_{P\in\mathcal{P}}\sup_{h\in\mathcal{H}} \PP{|S_P(h)|>\varepsilon} = 0. 
    \end{align*}
    This establishes $\hat\psi - \psi_{\hat h}(P) = o_\mathcal{P}(1)$. 
    \end{proof}

    \begin{proof}[Proof of Theorem \ref{thm-inferens}]
    For $h \in \mathcal{H}$, let
    $$
    \tilde{S}_P(h) = (V_{h,P}/|I_2|)^{-1/2} S_P(h).
    $$
    Combining Lemma \ref{lemma-asymp_alt} \eqref{first claim_alt} with the Uniform Central Limit Theorem \citep[Lemma 18]{Shah_2020}, we obtain
    $$
    \tilde{S}_P(h) \stackrel{d/(\mathcal{P},\mathcal{H})}{\to} \mathcal{N}(0,1).
    $$
    Since $I_1,I_2$ are disjoint and $\hat h$ is estimated using the observations in $D_1$, we have, with $\Phi$ being the distribution function of a standard Gaussian, 
    \begin{align*}
    &\lim_{n\to\infty}\sup_{P\in\mathcal{P}}\sup_{x\in\RR} \left|\PP{\tilde{S}_P(\hat h)\leq x} - \Phi(x) \right| \\ =& \lim_{n\to\infty}\sup_{P\in\mathcal{P}} \sup_{x\in\RR} \left|\int \PP{\tilde{S}_P(\hat h)\leq x\mid D_1 = d} - \Phi(x) \dd P_{D_1}(d)\right| \\
    \leq& \lim_{n\to\infty}\sup_{P\in\mathcal{P}} \sup_{x\in\RR} \int \sup_{h\in\mathcal{H}} \left| \PP{\tilde{S}_P(h)\leq x} - \Phi(x) \right| \dd P_{D_1}(d) \\
    =&
    \lim_{n\to\infty}\sup_{P\in\mathcal{P}} \sup_{h\in\mathcal{H}} \sup_{x\in\RR}  \left|\PP{\tilde{S}_P(h)\leq x} - \Phi(x) \right| = 0. 
    \end{align*}
    By combining Lemma \ref{lem-asymptotic linearity_alt} \eqref{point_asymp_linear_alt} with \citep[Lemma 20]{Shah_2020}, we get
    $$
    \left(V_{\hat h,P}/|I_2|\right)^{-1/2}(\hat\psi - \psi_{\hat{h}}(P))\stackrel{d/\mathcal{P}}{\to}\mathcal{N}(0,1).
    $$

    The final claim that  
    $$
    \hat{V} = V_{\hat{h},P} + o_{\mathcal{P}}(1)
    $$
    is shown by using the Uniform Law of Large Numbers argument to $\frac{1}{|I_2|} \sum_{i \in I_2} \varphi_{h,P}(h(X), Y)^2$ as in the proof of Lemma \ref{lem-asymptotic linearity_alt} \eqref{var_asymp_linear_alt} that was used to show $\hat{\psi} = \psi_{\hat{h}} + o_\mathcal{P}(1)$.
\end{proof}

\begin{proof}[Proof of Corollary \ref{corr-inference}]
    We have
    $$
    \left(\hat V/|I_2|\right)^{-1/2}(\hat\psi - \psi_{\hat{h}}(P)) =  \frac{\left( V_{\hat{h},P}/|I_2|\right)^{-1/2}(\hat\psi - \psi_{\hat{h}}(P))}{(\hat{V}/V_{\hat{h},P})^{1/2}}.
    $$
    Since
    $$
    \hat{V}/V_{\hat{h},P} = 1 + \frac{o_\mathcal{P}(1)}{V_{\hat{h},P}},
    $$
    we get, by Lemma \ref{lemma-asymp_alt} \eqref{fourth claim_alt},
    $$
    \left(\hat{V}/V_{\hat{h},P}\right)^{1/2} = 1 + o_{\mathcal{P}}(1).
    $$
    The first claim now follows from \citep[Lemma 20]{Shah_2020} and Theorem \ref{thm-inferens}. The second claim follows directly from the first.
\end{proof}

\subsection{Proofs for Section \ref{seq:representation error}}

\begin{proof}[Proof of Proposition \ref{thm-rep-error}]
Using mean-square continuity, we have
    \begin{align*}
    \left|\psi_{\hat{h}}(P) - \psi(P)\right|^2 &= \EP{m_{\hat{h}}(\hat{Z},\gamma_{\hat{h},P})-m(X,\gamma_P) \mid D_1} ^2 \\
    &= \EP{m(X,\gamma_{\hat{h},P}\circ \hat{h})-m(X,\gamma_P) \mid D_1} ^2 \\
    &\leq \EP{(m(X,\gamma_{\hat{h},P}\circ \hat{h})-m(X,\gamma_P))^2 \mid D_1} \\
    &\leq  \kappa_P \EP{\left(\gamma_{\hat{h},P}(\hat{Z}) - \gamma_P(X)\right)^2\mid D_1}.
    \end{align*}
\end{proof}

\begin{proof}[Proof of Corollary \ref{cor-var-select}]
Suppose $X$ is independent of $D_1$, and let $\hat{Z} = \hat{h}(X)$ where $\hat{h} : \mathbb{R}^d \to \mathbb{R}^{\hat{S}}$ depends on $D_1$ only. 
Since $\gamma_P(X) = \gamma_{h,P}(Z)$, where $Z = h_P(X)$, we
have that $\ind_{\{S(P) \subseteq \hat{S}\}}\gamma_P(X)$ is $\sigma(D_1, \hat{Z})$-measurable. Whence on the event $\{S(P) \subseteq \hat{S}\}$, we have
$$
    \gamma_{\hat{h}, P}(\hat{Z}) = \EP{Y \mid D_1, \hat{Z}} = \EP{\gamma_P(X) \mid D_1, \hat{Z}}
    = \gamma_P(X).
$$
Proposition \ref{thm-rep-error} gives that 
$$
    \left|\psi_{\hat{h}}(P) - \psi(P)\right| \leq \kappa_P^{1/2} \EP{\left(\gamma_{\hat{h},P}(\hat{Z}) - \gamma_P(X)\right)^2\mid D_1}^{1/2} \ind_{\{S(P) \not \subseteq \hat{S}\}},
$$ 
so
$$
    \mathbb{P}_P(\psi_{\hat{h}}(P) \neq \psi(P)) \leq \mathbb{P}_P(S(P) \not \subseteq \hat{S}) = 1 - \mathbb{P}_P(S(P) \subseteq \hat{S}),
$$
which shows the first part of the corollary. For the second part, take any sequence $a_n > 0$,
then for all $\varepsilon > 0$ 
$$
    \sup_{P \in \mathcal{P}} \mathbb{P}_P(\left|\psi_{\hat{h}}(P) - \psi(P)\right| > \varepsilon a_n) \leq  \sup_{P \in \mathcal{P}}  \mathbb{P}_P(\psi_{\hat{h}}(P) \neq \psi(P)) \leq 1 - \inf_{P \in \mathcal{P}} \mathbb{P}_P(S(P) \subseteq \hat{S}),
$$
showing that \eqref{eq:uniform-SURE} implies $\left|\psi_{\hat{h}}(P) - \psi(P)\right| = o_{\mathcal{P}}(a_n)$.
\end{proof}

\subsection{Proofs for Section \ref{sec:sim}}
\begin{example}\label{ex:toy_example}
    Consider the ATE functional of Example \ref{ex-ate}. Let $\beta\in\RR$ and
    \begin{align*}
        W&\sim\mathcal{U}(-1,1),\\
        U\mid W=w & \sim \mathrm{Bernoulli}\left(\pi_P(w)\right),\\
        Y\mid(U,W) = (u,w) &\sim \mathcal{N}(u, 1),
    \end{align*}
    where $\pi_P(w) = \frac{1}{1+\exp(-\beta w)}$. Let $h(u,w)=u$ and consider the representation $Z = h(U,W)=U$. Define $m_h:\RR \times \mathcal{M}^1\to\RR$ by
    $$
    m_h(z,\eta) = \eta(1) - \eta(0).
    $$
    Then 
    $$
    m_h(h(u,w),\eta) = \eta(1) - \eta(0) = m((u,w), \eta\circ h),
    $$
    and we see that $Z=h(X)$ is a permissible representation. Furthermore, 
    $$
    \gamma_{h,P}(U) = \gamma_{P}(U,W)=U, \quad \VP{Y\mid U} = \VP{Y\mid (U,W)} = 1,  
    $$
    and, hence, the conditions of Theorem \ref{thm:delete_overadjust} are satisfied. We now calculate $V_P$ and $V_{h,P}$. The conditional density of $W$ given $U=1$ is exactly $\pi_P(w)1_{(-1,1)}(w)$, and we thus have
    $$
    \EP{\frac{U}{\pi_P(W)}\mid U} = U\EP{\frac{1}{\pi_P(W)}\mid U} = U\int_{-1}^{1} \frac{\pi_P(w)}{\pi_P(w)}\dd w = 2U. 
    $$
    Similarly, it can be shown that $\EP{\frac{1-U}{1-\pi_P(W)}\mid U} = 2(1-U)$. Therefore, $\alpha_{h,P}(U)^2 = 4$, and we have
    \begin{align*}
    V_{h,P} 
    &= \EP{(m_h(U,\gamma_{h,P}) + \alpha_{h,P}(U)(Y-\gamma_{h,P}(U)) - 1)^2}\\
    &=\EP{(\alpha_{h,P}(U)(Y-\gamma_{h,P}(U)))^2} \\
    &= \EP{\alpha_{h,P}(U)^2} = 4.
    \end{align*}
    Using the moment generating function of a uniform random variable, we have, for $\beta\neq 0$, 
    \begin{align*}
    \EP{\left(\frac{U}{\pi_P(W)}\right)^2} 
    &= \EP{U\EP{\frac{1}{\pi_P(W)^2}\mid U}}\\
    &= \EP{U\int_{-1}^1 \frac{\pi_P(w)}{\pi_P(w)^2}\dd w} = 1+\frac{e^{\beta} - e^{-\beta}}{2\beta}.
    \end{align*}
    Similarly, $\EP{\left(\frac{1-U}{1-\pi_P(W)}\right)^2}=1+\frac{e^{\beta} - e^{-\beta}}{2\beta}$, and we get, for $\beta\neq 0$,
    \begin{align*}
    V_{P} 
    &= \EP{(m((U,W),\gamma_{P}) + \alpha_{P}(U,W)(Y-\gamma_{P}(U,W)) - 1)^2}\\
    &=\EP{(\alpha_{P}(U,W)(Y-\gamma_{P}(U,W)))^2} \\
    &= \EP{\alpha_{P}(U,W)^2} \\
    &= \EP{\left(\frac{U}{\pi_P(W)}\right)^2} +\EP{\left(\frac{1-U}{1-\pi_P(W)}\right)^2}
    =2+\frac{e^{\beta} - e^{-\beta}}{\beta}. 
    \end{align*}
    With $\beta = 0$, we simply have $V_{P} = 4$. 
\end{example}

\section{Relation to Semiparametric Efficiency}\label{sec: semiparametric efficiency}
\setcounter{theorem}{0}
\setcounter{example}{0}

\renewcommand{\thetheorem}{\thesection.\arabic{theorem}}
\renewcommand{\theexample}{\thesection.\arabic{example}}
The estimating function 
$$
\varphi(X,Y,\gamma,\alpha,\psi) = m(X,\gamma)+ \alpha(Y-\gamma(X)) - \psi
$$
evaluated at $\gamma=\gamma_P$, $\alpha=\alpha_P$ and $\psi=\psi(P)$ is also the efficient influence function for the AutoDML functional 
$$
\psi(P) = \EP{m(X,\gamma_P)},
$$
as shown in Theorem \ref{thm-eif} below. A heuristic argument hereof is given by \citet{newey1994}, and a similar result is shown by \citet{HirshbergWager}. For completeness, we include a proof of the result.

\subsection{Paths, Tangent sets and Influence Functions}
We refer to \citet[Chapter 25]{Vaart_1998} for background on semiparametric efficiency theory, while collecting the central definitions below. We start by defining paths that are differentiable in quadratic mean (DQM) and use these to define tangent sets. In the following definitions, we let $\mathcal{P}$ denote a generic statistical model, that is, a collection of probability measures, on some sample space $\mathcal{X}$.

\begin{definition}[DQM Path]
    Let $P \in \mathcal{P}$ and $\{P_\varepsilon:\varepsilon \in (-t,t)\} \subset \mathcal{P}$ be such that $P_0 = P$. Assume each $P_\varepsilon$ has density $p_\epsilon$ with respect to a dominating measure $\mu$. If, for some $s\in L^2(P)$,
    $$
    \int \left(\frac{\sqrt{p_\epsilon}-\sqrt{p_0}}{\varepsilon} - \tfrac{1}{2} s \sqrt{p_0} \right)^2\dd\mu \to 0 \quad \text{ as $\varepsilon\to 0$},
    $$
    we say that $\{P_\varepsilon:\varepsilon \in (-t,t)\}$ is a DQM path through $P$ with score $s$. 
\end{definition}

\begin{definition}[Tangent Set]
    Given a collection of DQM paths through some $P  \in \mathcal{P}$, we refer to the corresponding collection of scores as a tangent set at $P$. We denote tangent sets by $\mathcal{T}$.
\end{definition}

\begin{definition}[Efficient Influence Function]\label{def-eif}
    Let $P\in\mathcal{P}$ and consider a collection of DQM paths through $P$ with corresponding tangent set $\mathcal{T}$. If, for every path, $\psi(P_\varepsilon)$ is differentiable at $\varepsilon=0$, and there exists a mean-zero $\varphi_P\in L^2(P)$ such that 
    \begin{equation}\label{eq-eif-property}
        \tfrac{\partial}{\partial\varepsilon} \psi(P_\epsilon)\big|_{\varepsilon=0} = \int s \varphi_P \dd P,
    \end{equation}
    where $s$ is the score for the path, we say that $\varphi_P$ is an influence function for $\psi$ at $P$ relative to $\mathcal{T}$. If $\varphi_P\in\overline{\mathrm{span}(\mathcal{T})}$, that is, $\varphi_P$ belongs to the closure in $L^2(P)$ of the span of $\mathcal{T}$, we say that it is efficient at $P$ relative to $\mathcal{T}$.\footnote{It would be more precise to speak of influence functions relative to a collection of DQM paths as \eqref{eq-eif-property} is a property of the path. We follow the convention in semiparametric statistics and speak of efficiency relative to tangent sets.}
\end{definition}

\subsection{The Efficient Influence Function for AutoDML Functionals }
Recall the setting from Section \ref{sec:setup} where $\mathcal{P}$ is a statistical model consisting of distributions of a pair of random variables $(X,Y)$ taking values in the sample space $\RR^d \times \RR$ where $\EP{Y^2} < \infty$.

To derive the efficient influence function, we start by characterizing the collection of paths that will generate our tangent set. 
\begin{lemma}\label{lem: def of paths}
    Let $P \in \mathcal{P}$ and let $s : \RR^d \times \RR \to \RR$ be bounded and mean-zero under $P$. Let $t = \frac{1}{2}\norm{s}_{\infty}^{-1}$. For any $\epsilon \in \RR$ with $0 < |\epsilon| < t$, we can define a probability measure, $P_\epsilon$, with density 
    $$
    p_\epsilon = 1 + \epsilon s,
    $$
    with respect to $P$, and it holds that $\{P_\epsilon: {\epsilon \in (-t,t)}\}$ is a DQM path through $P$ with score $s$. Moreover, the conditional expectation of $Y$ given $X$ under $P_\epsilon$ is given by
    $$
    \gamma_{P_\epsilon}(x) = \mathbb{E}_{P_\epsilon}[Y|X=x] = \frac{\EP{Yp_\epsilon \mid X = x}}{\EP{ p_\epsilon \mid X=x}}.
    $$
\end{lemma}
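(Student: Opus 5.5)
The plan is to verify three things in turn: that $P_\epsilon$ is a probability measure, that the path is DQM with score $s$, and the formula for $\gamma_{P_\epsilon}$.

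\textbf{Probability measure.} For $|\epsilon|<t=\tfrac12\norm{s}_\infty^{-1}$ we have $|\epsilon s|\le \tfrac12$ everywhere, so $p_\epsilon = 1+\epsilon s \ge \tfrac12 > 0$. Since $\EP{s}=0$, $\int p_\epsilon \dd P = 1$, so $P_\epsilon$ is a probability measure and $P_0=P$. The dominating measure is $P$ itself with $p_0\equiv 1$. Note also that $p_\epsilon \le \tfrac32$, so $\mathbb{E}_{P_\epsilon}[Y^2]\le \tfrac32\EP{Y^2}<\infty$. Hence the path lies in a model closed under such bounded perturbations; I would state this as the standing assumption implicitly used (nonparametric model).

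\textbf{DQM.} We must show
$$
\int \left(\frac{\sqrt{1+\epsilon s}-1}{\epsilon} - \tfrac12 s\right)^2 \dd P \to 0 .
$$
Pointwise, $\frac{\sqrt{1+\epsilon s}-1}{\epsilon}\to \tfrac12 s$ by differentiability of $\sqrt{\cdot}$ at $1$. For domination, write
$$
\frac{\sqrt{1+\epsilon s}-1}{\epsilon} = \frac{s}{\sqrt{1+\epsilon s}+1},
$$
which is bounded in absolute value by $\norm{s}_\infty$. So the integrand is bounded by $(\tfrac32\norm{s}_\infty)^2$, and dominated convergence under the probability measure $P$ gives the limit. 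Since $s$ is bounded, $s\in L^2(P)$, so it is a valid score.

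\textbf{Regression function.} This is the abstract Bayes formula for conditional expectations under a change of measure. Since $\mathbb{E}_{P_\epsilon}[|Y|]<\infty$, for bounded measurable $g$,
$$
\mathbb{E}_{P_\epsilon}[g(X)Y] = \EP{g(X)Yp_\epsilon} = \EP{g(X)\EP{Yp_\epsilon\mid X}}.
$$
Writing $\EP{Yp_\epsilon\mid X} = \frac{\EP{Yp_\epsilon\mid X}}{\EP{p_\epsilon\mid X}}\EP{p_\epsilon\mid X}$, where the denominator is at least $\tfrac12$, and using $\EP{k(X)\EP{p_\epsilon\mid X}}=\EP{k(X)p_\epsilon}=\mathbb{E}_{P_\epsilon}[k(X)]$ for integrable $k$, we obtain
$$
\mathbb{E}_{P_\epsilon}[g(X)Y]=\mathbb{E}_{P_\epsilon}\!\left[g(X)\frac{\EP{Yp_\epsilon\mid X}}{\EP{p_\epsilon\mid X}}\right].
$$
The ratio is $\sigma(X)$-measurable, which identifies it as $\gamma_{P_\epsilon}(X)$, $P_\epsilon$-a.s., and equivalently $P$-a.s. because the two measures are mutually absolutely continuous.

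\textbf{Main obstacle.} The only delicate step is the integrability needed to apply the tower property in the last step: we need $k = g\cdot\text{ratio}$ integrable under $P$. This follows from $|\EP{Yp_\epsilon\mid X}|\le \tfrac32\EP{|Y|\mid X}$ together with the lower bound $\tfrac12$ on the denominator, both of which rely on the choice of $t$.
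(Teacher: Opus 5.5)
Your proof is correct. The probability-measure and regression-function parts match the paper's argument almost line for line. The paper also bounds $p_\epsilon$ between $\tfrac12$ and $2$ and notes $\mathbb{E}_{P_\epsilon}[Y^2]<\infty$. Like you, it tacitly relies on $\mathcal{P}$ being nonparametric to conclude $P_\epsilon\in\mathcal{P}$. It then identifies $\gamma_{P_\epsilon}$ by testing the ratio against $\ind_A$ for $A\in\sigma(X)$, running the same tower-property chain as yours, only in the opposite direction.

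The genuine difference is the DQM step. The paper never handles the DQM integral directly. Instead it invokes Lemma 7.6 of van der Vaart (1998), checking two conditions: that $\epsilon\mapsto\sqrt{p_\epsilon}$ is continuously differentiable with derivative $s/(2\sqrt{1+\epsilon s})$, and that the Fisher information $\int s^2/p_\epsilon\,\dd P$ is continuous in $\epsilon$, by dominated convergence.

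You instead verify the definition directly through the identity $(\sqrt{1+\epsilon s}-1)/\epsilon = s/(\sqrt{1+\epsilon s}+1)$. This gives pointwise convergence and the uniform domination $\tfrac32\norm{s}_\infty$ at once. Pushing the algebra one step further even shows the integrand equals $\epsilon^2 s^4/(4(\sqrt{1+\epsilon s}+1)^4)$, so the DQM integral is $O(\epsilon^2)$.

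Your route is self-contained and more elementary. The paper's route uses a standard, reusable criterion that still applies to smooth paths where no such closed-form identity is available.
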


\begin{proof}
    We note first that for $0 < |\epsilon| < t$, it holds that
    \begin{equation}
    \label{eq:p_eps}
    1 + \epsilon s < 1 + t \norm{s}_{\infty} \leq 2
    \quad
    \text{and}
    \quad
    1+ \epsilon s > 1 - t \norm{s}_{\infty} \geq \frac{1}{2}.
    \end{equation}
    This implies that $p_\epsilon \geq 0$ and, since $\EP{s}=0$, $p_\epsilon$ is a valid density.
    Furthermore,
    $$
    \EPepsilon{Y^2} = \EP{Y^2(1+\epsilon s)} \leq 2 \EP{Y^2} < \infty, 
    $$
    and, in particular, $\{P_\epsilon: {\epsilon \in (-t,t)}\} \subset \mathcal{P}$. We have that 
    $$
    \frac{\partial}{\partial\epsilon} \sqrt{p_\epsilon} = \frac{s}{2\sqrt{1+s\epsilon}}
    $$
    is continuous in $\epsilon$. In addition, for any $0 < |\epsilon| < t$, we have
    $$
    \int \left(\frac{\partial}{\partial \epsilon} \log(p_\epsilon)\right)^2 p_\epsilon \dd P = \int \frac{s^2}{p_\varepsilon} \dd P \leq \int 2\norm{s}_{\infty}^2 \dd P < \infty.  
    $$
    This implies continuity of 
    $$
    \epsilon \mapsto \int \left(\frac{\partial}{\partial \epsilon} \log(p_\epsilon)\right)^2 p_\epsilon \dd P
    $$ by the dominated convergence theorem. It follows from \citet[Lemma 7.6]{Vaart_1998} that $\{P_\epsilon: {\epsilon \in (-t,t)}\}$ is a DQM path with score $s$.

    For the second claim, let $A \in \sigma(X)$. Then 
    \begin{align*}        
    \mathbb{E}_{P_\epsilon}[\ind_A\gamma_{P_\epsilon}(X)] &= \EP{\ind_A\gamma_{P_\varepsilon}(X)p_\epsilon } =
    \EP{\ind_A\gamma_{P_\varepsilon}(X)\EP{p_\epsilon \mid X}} \\ 
    &= \EP{1_A\EP{Y p_\epsilon \mid X}} = \EP{\ind_Ap_\epsilon Y} = \mathbb{E}_{P_\epsilon}[\ind_AY]
    \end{align*}
    which proves the claim.
\end{proof}
\begin{theorem}\label{thm-eif}
    Let $\psi:\mathcal{P}\to\RR$ be an AutoDML functional and let $\alpha_P$ denote the induced Riesz representer. 
    Fix $P \in \mathcal{P}$ and assume that $\alpha_P$ satisfies 
    $$
    \EP{\alpha_P(X)^2(Y-\gamma_P(X))^2 } < \infty. 
    $$
    It holds that
    $$
    \varphi_P(X,Y) = m(X,\gamma_P) + \alpha_P(X) (Y - \gamma_P(X)) - \psi(P)
    $$
    is the efficient influence function at $P$ relative to the tangent set 
    $$
    \mathcal{T} = \{s \in L^2(P): \norm{s}_\infty < \infty, \EP{s} = 0\}.
    $$
\end{theorem}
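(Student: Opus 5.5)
To prove Theorem \ref{thm-eif}, I will verify the three requirements of Definition \ref{def-eif} for the paths of Lemma \ref{lem: def of paths}. These requirements are that $\varphi_P$ is mean zero and in $L^2(P)$, that $\varphi_P$ lies in $\overline{\mathrm{span}(\mathcal{T})}$, and that the pathwise derivative identity \eqref{eq-eif-property} holds for every bounded mean-zero score $s$.

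\textbf{Membership.} Mean zero follows from the Riesz representation, since $\EP{m(X,\gamma_P)}=\psi(P)$ and $\EP{\alpha_P(X)(Y-\gamma_P(X))}=0$ by conditioning on $X$. Square integrability follows from mean-square continuity, which gives $m(X,\gamma_P)\in L^2$ because $\gamma_P\in L^2(P_X)$ by Jensen's inequality. The second term is in $L^2$ by the stated assumption. The set $\mathcal{T}$ is exactly the bounded mean-zero functions, and these are dense in the mean-zero subspace of $L^2(P)$ (truncate and recentre). Hence $\overline{\mathrm{span}(\mathcal{T})}=\{f\in L^2(P):\EP{f}=0\}\ni\varphi_P$, and efficiency is automatic.

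\textbf{Derivative.} This is the main step. Write $\psi(P_\epsilon)=\mathbb{E}_{P_\epsilon}[m(X,\gamma_{P_\epsilon})]$. The $X$-marginal of $P_\epsilon$ has density $q_\epsilon(x)=\EP{p_\epsilon\mid X=x}=1+\epsilon\bar s(x)$ with respect to $P_X$, where $\bar s(X)=\EP{s\mid X}$. By \eqref{eq:p_eps}, $q_\epsilon\in[1/2,2]$. Hence $L^2(P_{\epsilon,X})=L^2(P_X)$ with equivalent norms, and $M$ is still bounded. The Riesz representer of $M$ under $P_\epsilon$ is then $\alpha_P/q_\epsilon$, because
$$\mathbb{E}_{P_\epsilon}[m(X,\nu)]=\EP{m(X,\nu)q_\epsilon(X)}$$
is not directly of the right form. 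Instead I will avoid a representer and expand directly:
$$\psi(P_\epsilon)=\EP{m(X,\gamma_{P_\epsilon})}+\epsilon\EP{\bar s(X)m(X,\gamma_{P_\epsilon})}.$$
By Lemma \ref{lem: def of paths},
$$\gamma_{P_\epsilon}=\frac{\gamma_P+\epsilon\,\EP{Ys\mid X}}{1+\epsilon\bar s},$$
so
$$\gamma_{P_\epsilon}-\gamma_P=\epsilon\,\frac{\EP{(Y-\gamma_P(X))s\mid X}}{1+\epsilon\bar s}.$$
Set $g(X)=\EP{(Y-\gamma_P(X))s\mid X}$, which is in $L^2(P_X)$ because $s$ is bounded and $\EP{Y^2}<\infty$. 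By dominated convergence, $(\gamma_{P_\epsilon}-\gamma_P)/\epsilon\to g$ in $L^2(P_X)$. Linearity and mean-square continuity of $M$ then give two facts:
\begin{align*}
\epsilon^{-1}\EP{m(X,\gamma_{P_\epsilon})-m(X,\gamma_P)} &\to \EP{m(X,g)}=\EP{\alpha_P(X)g(X)}=\EP{\alpha_P(X)(Y-\gamma_P(X))s},\\
\EP{\bar s(X)m(X,\gamma_{P_\epsilon})} &\to \EP{\bar s(X)m(X,\gamma_P)}=\EP{s\,m(X,\gamma_P)}.
\end{align*}
Summing the two limits and using $\EP{s}=0$ to insert the $-\psi(P)$ term yields
$$\tfrac{\partial}{\partial\epsilon}\psi(P_\epsilon)\big|_{\epsilon=0}=\EP{s\,\varphi_P(X,Y)}.$$

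The main obstacle is justifying these limits rigorously. Specifically, I must establish the $L^2$ convergence of the difference quotient of $\gamma_{P_\epsilon}$, and check that $m(X,\cdot)$ is applied to genuine $L^2(P_X)$ elements so that linearity holds almost surely. Boundedness of $s$ together with the bounds $1/2\le q_\epsilon\le 2$ should make both routine.
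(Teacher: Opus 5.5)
Your proposal is correct and follows essentially the same route as the paper. It uses the same paths $p_\epsilon = 1+\epsilon s$, the same split $\psi(P_\epsilon)=\EP{m(X,\gamma_{P_\epsilon})}+\epsilon\,\EP{s\,m(X,\gamma_{P_\epsilon})}$, and handles the first term with the Riesz representation and the second with mean-square continuity; the only difference is cosmetic, in that you take the $L^2(P_X)$ limit of the difference quotient before applying the Riesz identity, whereas the paper applies the Riesz identity first and then dominated convergence. Delete the passing claim that the Riesz representer under $P_\epsilon$ is $\alpha_P/q_\epsilon$, which is false in general (for the ATE with $q_\epsilon$ depending only on $W$ it is still $\alpha_P$), although you abandon it at once, so it does not affect the argument.
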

\begin{proof}
    The claimed influence function is mean-zero and has second moment as
    $$
    \EP{\varphi_P(X,Y)} = \psi(P) + \EP{ \alpha_P(X) \EP{Y-\gamma_P(X)\mid X}} - \psi(P) = 0,
    $$
    and
    $$
    \EP{\varphi_P(X,Y)^2} = \EP{(m(X,\gamma_P)-\psi(P))^2} + \EP{\alpha_P(X)^2(Y-\gamma_P(X))^2} < \infty.
    $$
    Up to centering, we have $\overline{\mathrm{span}(\mathcal{T})}=L^2(P)$, and, hence, it is sufficient to verify \eqref{eq-eif-property} for a collection of paths such that the tangent set is $\mathcal{T}$. 
    
    For each $s \in \mathcal{T}$, we can, according to Lemma \ref{lem: def of paths}, define a corresponding DQM path $\{P_\epsilon: {\epsilon \in (-t,t)}\}$ with score $s$ and density $p_\epsilon = 1 + \epsilon s$ satisfying $1/2 \leq p_\epsilon \leq 2$ by \eqref{eq:p_eps}. We verify that \eqref{eq-eif-property} holds for $\varphi_P$ with respect to this path. Specifically, we wish to compute 
    \begin{align*}
    \frac{ \partial }{\partial{\epsilon}} \psi(P_\epsilon)|_{\epsilon =0} &= \frac{\partial}{\partial{\epsilon}}\EPepsilon{m(X,\gamma_{P_\epsilon})}|_{\epsilon = 0} = \frac{\partial}{\partial{\epsilon}}\EP{(1+\epsilon s)m(X,\gamma_{P_\epsilon})} |_{\epsilon = 0} \notag \\
    &= \frac{\partial}{\partial{\epsilon}}\EP{(m(X,\gamma_{P_\epsilon})} |_{\epsilon = 0} + \frac{\partial}{\partial{\epsilon}} \EP{\epsilon sm(X,\gamma_{P_\epsilon})} |_{\epsilon = 0},
    \end{align*}
    where the last equality is valid if the two derivatives in the last line exists. We claim that 
    \begin{align}
        \frac{\partial}{\partial{\epsilon}}\EP{(m(X,\gamma_{P_\epsilon})} |_{\epsilon = 0} = \EP{\alpha_P(X)(Y-\gamma_P(X))s} \label{eq:deriv1}\\
        \frac{\partial}{\partial{\epsilon}} \EP{\epsilon sm(X,\gamma_{P_\epsilon})} |_{\epsilon = 0} = \EP{m(X,\gamma_P)s} \label{eq:deriv2}.
    \end{align}
    We obtain from Lemma \ref{lem: def of paths} that 
    \begin{align*}
        &\gamma_{P_{\varepsilon}}(X) - \gamma_P(X) = \frac{\EP{Y p_\epsilon \mid X }}{\EP{p_\epsilon \mid X}} - \frac{\gamma_P(X)\EP{p_\epsilon \mid X}}{\EP{p_\epsilon \mid X}}\\
        &= \frac{\EP{Y p_\epsilon  \mid X } - \EP{\gamma_P(X)p_\epsilon \mid X}}{\EP{p_\epsilon  \mid X}} = 
        \frac{\varepsilon \EP{(Y-\gamma_P(X)) s \mid X}}{\EP{ p_\epsilon \mid X}}.
    \end{align*}    
    To see that \eqref{eq:deriv1} holds, we apply the Riesz representation of $\nu\mapsto\EP{m(X,\nu)}$ to get
    \begin{align*}
    &\frac{\EP{m(X,\gamma_{P_\epsilon})}-\EP{m(X,\gamma_{P})}}{\varepsilon}
    =\frac{\EP{\alpha_P(X)(\gamma_{P_\epsilon}(X)-\gamma_P(X))}}{\varepsilon}\\
    &=\EP{\frac{\EP{\alpha_P(X) (Y-\gamma_P(X)) s \mid X}}{ \EP{ p_\epsilon \mid X}}}.
    \end{align*}
    The integrand converges to $\EP{\alpha_P(X)(Y-\gamma_P(X))s\mid X}$ since $p_\epsilon \to 1$ as $\epsilon \to 0$ and \eqref{eq:deriv1} follows since 
    $$
    \EP{\left|\frac{\EP{\alpha_P(X) (Y-\gamma_P(X)) s \mid X}}{ \EP{ p_\epsilon \mid X}}\right|}\leq 2 \|s\|_\infty \EP{\alpha_P(X)^2(Y-\gamma_P(X))^2 }^{1/2} < \infty.
    $$
    For \eqref{eq:deriv2}, we use mean-square continuity to see that 
    \begin{align*}
    &\left|\frac{\EP{\epsilon sm(X,\gamma_{P_\epsilon})}}{\epsilon} -\EP{sm(X,\gamma_P)}\right|^2 \leq \EP{s^2 m(X, \gamma_{P_\epsilon} -\gamma_P)^2} \\ &\leq \norm{s}_\infty^2 \kappa_P \EP{(\gamma_{P_\epsilon}(X)-\gamma_P(X))^2}= \norm{s}_\infty^2 \kappa_P \epsilon^2 \EP{\left(\frac{\EP{(Y-\gamma_P(X)) s \mid X}}{ \EP{ p_\epsilon\mid X}}\right)^2}  \\
    & \leq 4\norm{s}_\infty^4 \kappa_P \epsilon^2 \EP{(Y-\gamma_P(X))^2 } \to 0 
    \end{align*}
    as $\epsilon \to 0$, where the last expectation is finite since $\EP{Y^2} < \infty$.
    
    Using that $\EP{s}=0$, we conclude that  
    $$
    \frac{\partial}{\partial{\epsilon}} \psi(P_\epsilon)|_{\epsilon =0} = \EP{(m(X,\gamma_P)+ \alpha_P (Y-\gamma_P)) s} = \EP{(m(X,\gamma_P)+ \alpha_P (Y-\gamma_P) - \psi(P)) s}.
    $$
    Hence $\varphi_P$ satisfies \eqref{eq-eif-property} and is the efficient influence function concluding the proof.
\end{proof}

\section{Details for Simulation Study}\label{sec:sim_study_details}
All code for reproducing our results is available at Github: \url{https://github.com/Asbjoern00/Outcome_Adapted_AutoDML}.

\subsection{Neural Network Implementation Details}
All neural networks are implemented using \texttt{PyTorch} \citep{paszke2019pytorchimperativestylehighperformance}. Unless otherwise specified, our neural networks have the following general configurations. The architecture is heavily inspired by \citet{chernozhukov2022riesznetforestrieszautomaticdebiased} and \citet{hines2025automaticdebiasingneuralnetworks}. 

\begin{itemize}
    \item \textbf{Architecture:} Our networks generally have 3 hidden layers with 200 neurons in the shared trunk (except when we use simple dimensionality reduction) and 2 hidden layers with 100 neurons in the branches. In the RieszNet and MADNet implementations, the Riesz branch has only one linear output layer. For the ATE experiments, we follow \citet{chernozhukov2022riesznetforestrieszautomaticdebiased} and have separate branches to return outcome predictions for the cases $U=0$ and $U=1$. 
    \item \textbf{Optimizer:} We used the \texttt{Adam} optimizer.
    \item \textbf{Early Stopping:} We used an 80/20 train/validation split and stopped training if validation loss failed to improve for 30 epochs. We trained the networks for a maximum of 1000 epochs.
    \item \textbf{Learning Rate:} We used a learning rate of $10^{-3}$ and a learning rate scheduler where learning rate was halved if validation loss failed to improve for 5 epochs.
    \item \textbf{Mini-batches:} We used mini-batches of size $64$.
    \item \textbf{Weight Decay:} We used weight decay of $10^{-3}$. In the RieszNet implementation, weight decay was not used on the $\varepsilon$ parameter as recommended by \citet{chernozhukov2022riesznetforestrieszautomaticdebiased}. 
    \item \textbf{Simple Dimensionality Reduction:} We performed 5-fold cross-validation with dimensions $\{1, 3, 10, 200\}$. We picked the smallest dimension which was within 1 standard error of the smallest cross-validation loss.
    \item \textbf{Adaptive Dimensionality Reduction:} We performed 5-fold cross-validation with $\lambda$ values $\{0,1,10,100\}$. We picked the largest $\lambda$ which is within 1 standard error of the smallest cross-validation loss. We pre-trained the model with $\lambda=0$ where we stopped training if validation loss failed to improve for 3 epochs. 
    \item \textbf{Loss Weights:} RieszNet and MADNet requires loss weights. Unless otherwise specified, we followed \citet{chernozhukov2022riesznetforestrieszautomaticdebiased} and used $\lambda_\text{Riesz}=0.1$, $\lambda_\text{MSE}=1$, and $\lambda_\text{TMLE}=1$ for RieszNet. MADNet requires choosing a weight mixing parameter, $\rho$, and a penalty parameter, $\tilde\lambda$. We followed \citet{hines2025automaticdebiasingneuralnetworks} and used $\rho=1$ and $\tilde\lambda = 5$.  
    \item \textbf{MADNet $\beta$:} MADNet requires a $\beta$ function. We use $\beta(U,W) = U$. This choice meets the requirements set by \citet{hines2025automaticdebiasingneuralnetworks}. 
\end{itemize}

We also use the following experiment-specific settings. 

\begin{itemize}
    \item \textbf{Outcome-adapted Dimensionality Reduction Experiment:} For the proof of concept experiment shown in Figure \ref{fig:toy_example}, we fixed  $\lambda=1$ for the adaptive dimensionality reduction instead of doing cross-validation. We also skipped pre-training. For large values of $\beta$, the propensity scores can be very close to 0 and 1, which lead to instability of the estimator using separate neural networks. For this reason, we capped Riesz representer estimates at $\pm 100$. 
    \item \textbf{Varying Weights on RieszNet Loss Components Experiments:} Similar to \citet{hines2025automaticdebiasingneuralnetworks}, we found that it was quite hard to replicate the performance of RieszNet found by \citet{chernozhukov2022riesznetforestrieszautomaticdebiased} due to instability of RieszNet. To remedy this instability, we increased the validation set size for the experiments, where we varied the weights of the RieszNet loss components. Specifically, we used a 60/40 train/validation split for the experiments in Figure \ref{fig:varying_riesz_net_weights}. The outcome-adapted and Riesz-adapted neural networks included in this experiment also used a 60/40 train/validation split. 
    \item \textbf{Bootstrapped Confidence Intervals}: When bootstrapping with the outcome-adapted neural network with adaptive dimensionality reduction, we use the group LASSO penalty parameter which was chosen via cross-validation when computing the point estimate. This is done for computational reasons.
\end{itemize}

\subsection{Error Bar Computation}
In all figures we quantify uncertainty with error bars. All error bars are based on asymptotic 95\% Gaussian confidence intervals of the form
$$
\hat{\xi} \pm 1.96\cdot \widehat{\text{SE}}(\hat{\xi}), \quad \xi \in \{\text{Bias}^2,\text{Var},\text{MSE},\text{Coverage}, \text{MAE}\}.
$$
For MSE, MAE and coverage we use empirical standard errors. 

Let $\hat{\psi} \in \RR^L$ denote the vector of point estimates based on $L$ independent replications and let $\bar{\psi}= L^{-1}\sum_{l=1}^L \hat{\psi}_l$ denote the mean of the point estimates. For squared bias and variance, we use standard errors given by
\begin{align*}
\widehat{\text{SE}}(\widehat{\text{Bias}^2} ) &= \frac{2\cdot |\widehat{\text{Bias}}|}{\sqrt{L}} \sqrt{ \frac{1}{L} \sum_{l=1}^L (\hat{\psi}_l - \bar{\psi})^2}, \\ 
\widehat{\text{SE}}(\widehat{\text{Var}}) &= \frac{1}{\sqrt{L}}\sqrt{ \frac{1}{L} \sum_{l=1}^L (\hat{\psi}_l - \bar{\psi})^4 - \widehat{\text{Var}}^2}.
\end{align*}

\subsection{Additional Results for Mean Missing Outcome Simulation Study} \label{sec:CaiCompare}
Recall the data generating process in Section \ref{sec:mmo-sim} introduced by \citet{cai2025clearnerconstrainedlearningcausal} and our simulation study using this data generating process also described in Section \ref{sec:mmo-sim}. 
\subsubsection{Comparison to the Simulation Study by Cai et al. (2025)}
\citet{cai2025clearnerconstrainedlearningcausal} use linear outcome models and logistic propensity score models without any sample splitting when considering $c=1.75$. Thus, the results in this simulation study are not directly comparable to those reported by \citet{cai2025clearnerconstrainedlearningcausal}. With $c=1$, \citet{cai2025clearnerconstrainedlearningcausal} also apply their estimator, the C-learner, with gradient boosting used to estimate the outcome and propensity score models and with $2$-fold cross-fitting at sample sizes $200$ and $1000$. At sample size $1000$, their method has mean absolute error $2.03$ (SE: $0.04$) while the outcome-adapted neural network with adaptive dimensionality reduction applied in the same setting has mean absolute error 1.60 (SE: 0.04).

\subsubsection{Additional Results for Mean Missing Outcome}\label{sec:ci-appendix}
Figure \ref{fig:mmo_appendix} shows $n$-scaled squared bias, variance and mean square error and Gaussian confidence interval coverage for MADNet, the outcome-adapted neural network with simple dimensionality reduction, the outcome-adapted neural network with no dimensionality reduction and the estimator using separate nets. We see that the scaled squared bias is fairly large at all sample sizes and for all estimators. In particular, there is no indication that the scaled squared bias tends to 0. 

We see that the estimator with separate neural networks and the outcome-adapted neural network with simple dimensionality reduction have mean-square error and variance that are multiple orders of magnitude larger than for MADNet and the outcome-adapted neural network with no dimensionality reduction. For the outcome-adapted neural network with simple dimensionality reduction, this is primarily due to a few replications where the error is catastrophically large, which also results in very wide error bars. The outcome-adapted neural network without dimensionality reduction consistently outperforms MADNet in terms of mean square error at all sample sizes. We see that coverage is too low for all methods considered. 

\begin{figure}
\centering
        \includegraphics[width=\linewidth, trim = 4cm 0cm 0cm 0cm]{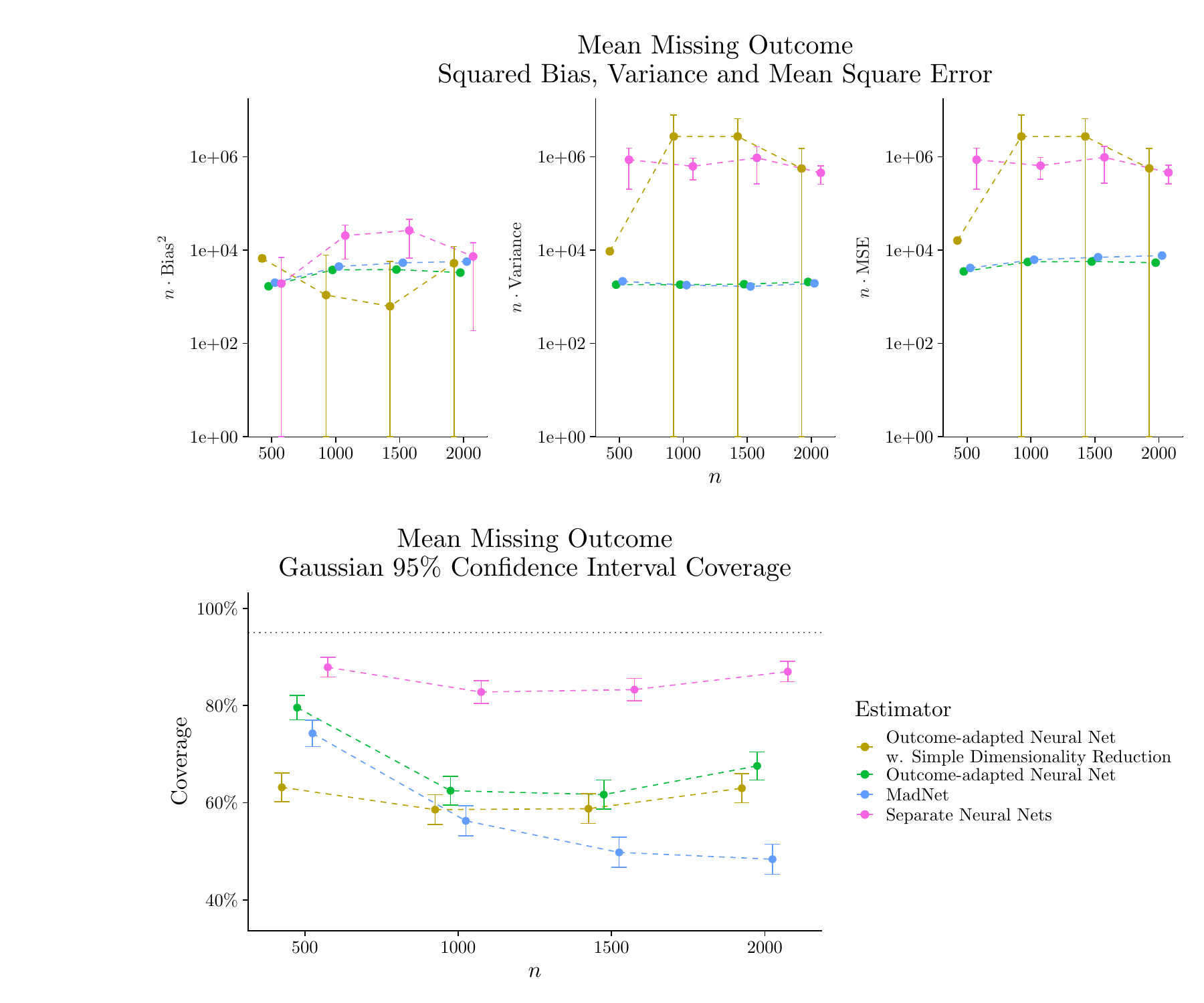}
        \caption{At the top, scaled squared bias, variance and MSE of AutoDML estimators on $\log_{10}$-scale. Error bars indicate asymptotic 95\% Gaussian confidence intervals. Dashed lines serve as visual aids only. Points are horizontally shifted for visual clarity.}
        \label{fig:mmo_appendix}
    \label{fig:bootstrap_mmo_combined}
\end{figure}

We further compare the different types of confidence intervals. In Figure \ref{fig:bootstrap}, we display the 100 first confidence intervals for RieszNet and the outcome-adapted neural network with adaptive dimensionality reduction when $n = 2000$. We see that there is a substantial amount of bias for both estimators and that the bias is largest for RieszNet. The bias results in too low coverage for the Gaussian and bootstrap standard error intervals which are symmetric around the point estimate. The percentile intervals do not suffer from this issue to the same extent, yielding better coverage for both estimators. However, the percentile confidence intervals have the undesirable property that they sometimes do not cover the point estimate, probably due to the relatively small number of bootstrap resamples.

\begin{figure}
    \centering
        \centering
        \includegraphics[width=\linewidth]{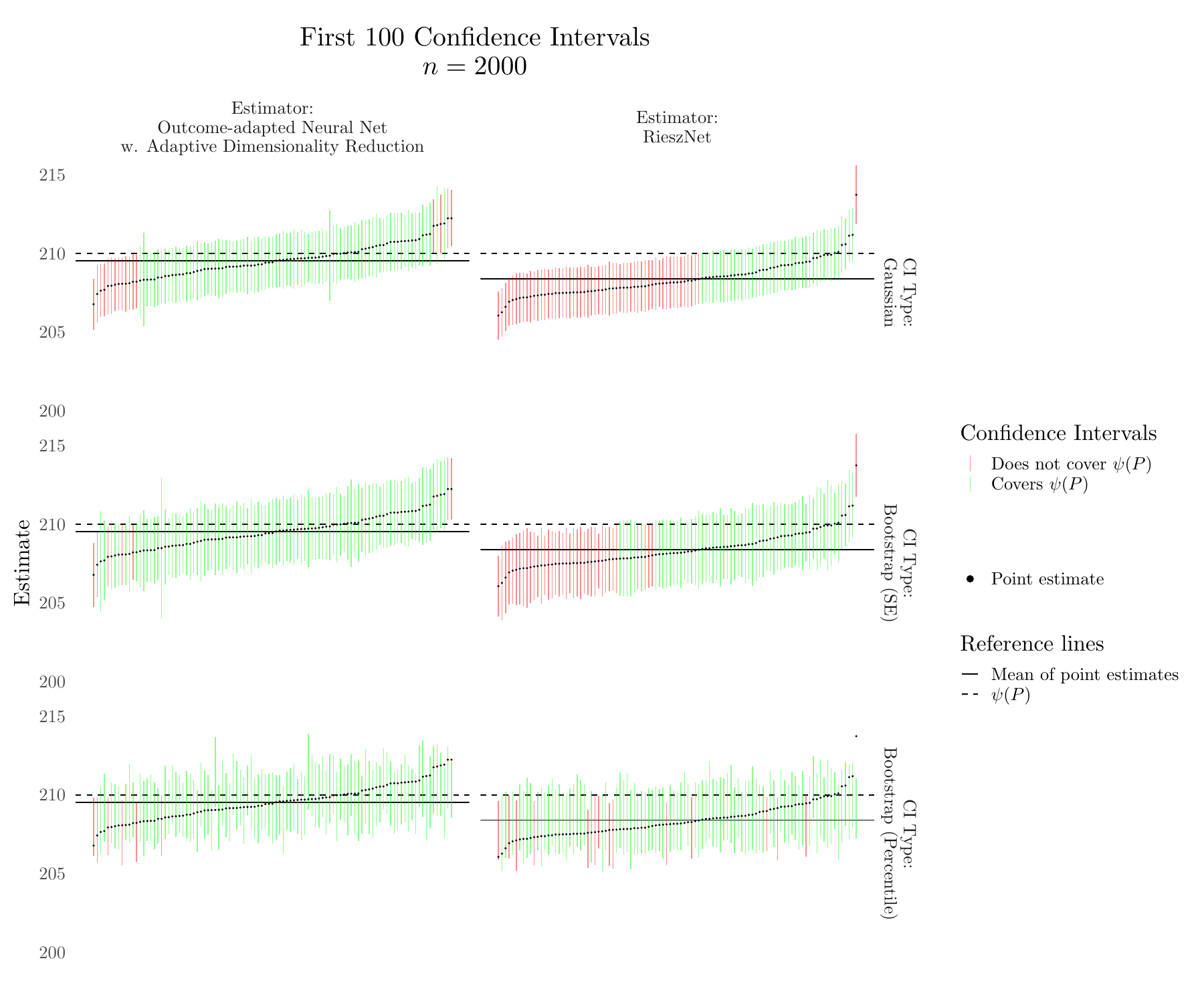}
        \caption{The first 100 Gaussian, bootstrap percentile and bootstrap standard error confidence intervals for RieszNet and the outcome-adapted neural network with adaptive dimensionality reduction, along with point estimates. Confidence intervals are sorted by point estimate.}
        \label{fig:bootstrap}
\end{figure}

\vspace{0.2in}

\bibliography{references}

@misc{chernozhukov_24_via_riesz_regression,
      title={Automatic debiased machine learning via {Riesz} regression}, 
      author={V. Chernozhukov and W. K. Newey and V. Quintas-Martinez and V. Syrgkanis},
      year={2024},
      note= {\textit{arXiv:2104.14737}},
}

@misc{singh2024kernelridgerieszrepresenters,
      title={Kernel ridge {Riesz} representers: Generalization, mis-specification, and the counterfactual effective dimension}, 
      author={R. Singh},
      year={2024},
      note = {\textit{arXiv:2102.11076}},
      eprint={2102.11076},
      archivePrefix={arXiv},
      primaryClass={stat.ML},
}

@misc{lee2025rieszboostgradientboostingriesz,
      title={RieszBoost: Gradient boosting for {Riesz} regression}, 
      author={K. J. Lee and A. Schuler},
      year={2025},
      eprint={2501.04871},
      note = {\textit{arXiv:2501.04871}},
      archivePrefix={arXiv},
      primaryClass={stat.ML},
}

@misc{christgau2025efficientadjustmentcomplexcovariates,
      title={Efficient adjustment for complex covariates: Gaining efficiency with {DOPE}}, 
      author={A. M. Christgau and A. R. Lundborg and N. R. Hansen},
      year={2025},
      note = {\textit{arXiv:2402.12980}},
      eprint={2402.12980},
      archivePrefix={arXiv},
      primaryClass={math.ST},
}

@misc{cai2025clearnerconstrainedlearningcausal,
      title={C-Learner: Constrained Learning for Causal Inference}, 
      author={T. T. Cai and Y. Fonseca and K. Hou and H. Namkoong},
      note = {\textit{arXiv:2405.09493}},
      year={2025},
      eprint={2405.09493},
      archivePrefix={arXiv},
      primaryClass={stat.ML},
}

@misc{dorie2016npci,
  author = {V. Dorie},
  title  = {Non-parametrics for Causal Inference},
  year   = {2016},
  url    = {https://github.com/vdorie/npci}
}

@article{HirshbergWager,
author = {D. A. Hirshberg and S. Wager},
title = {{Augmented minimax linear estimation}},
volume = {49},
journal = {The Annals of Statistics},
number = {6},
publisher = {Institute of Mathematical Statistics},
pages = {3206 -- 3227},
year = {2021},
}

@article{shi2019adapting,
 author = {Shi, C. and Blei, D. and Veitch, V.},
 journal = {Advances in Neural Information Processing Systems},
 publisher = {Curran Associates, Inc.},
 title = {Adapting Neural Networks for the Estimation of Treatment Effects},
 volume = {32},
 year = {2019}
}

@article{chernozhukov2018doubledebiasedmachinelearningtreatment,
    author = {Chernozhukov, V. and Chetverikov, D. and Demirer, M. and Duflo, E. and Hansen, C. and Newey, W. and Robins, J.},
    title = {Double/debiased machine learning for treatment and structural parameters},
    journal = {The Econometrics Journal},
    volume = {21},
    number = {1},
    pages = {C1-C68},
    year = {2018},
}

@article{Chernozhukov_2022_auto_dml_lasso,
author = {Chernozhukov, V. and Newey, W. K. and Singh, R.},
title = {Automatic Debiased Machine Learning of Causal and Structural Effects},
journal = {Econometrica},
volume = {90},
number = {3},
pages = {967-1027},
year = {2022}
}

@article{robins_1995_aipw,
 author = {J. M. Robins and A. Rotnitzky},
 journal = {Journal of the American Statistical Association},
 number = {429},
 pages = {122--129},
 publisher = {[American Statistical Association, Taylor & Francis, Ltd.]},
 title = {Semiparametric Efficiency in Multivariate Regression Models with Missing Data},
 volume = {90},
 year = {1995}
}

@article{rotnitzky2019efficientadjustmentsetspopulation,
  author  = {A. Rotnitzky and E. Smucler},
  title   = {Efficient Adjustment Sets for Population Average Causal Treatment Effect Estimation in Graphical Models},
  journal = {Journal of Machine Learning Research},
  year    = {2020},
  volume  = {21},
  number  = {188},
  pages   = {1--86},
}

@article{Henckel_2022,
    author = {Henckel, L. and Perković, E. and Maathuis, M. H.},
    title = {Graphical Criteria for Efficient Total Effect Estimation Via Adjustment in Causal Linear Models},
    journal = {Journal of the Royal Statistical Society Series B: Statistical Methodology},
    volume = {84},
    number = {2},
    pages = {579-599},
    year = {2022},
}

@article{OAL,
author = {Shortreed, S. M. and Ertefaie, A.},
title = {Outcome-adaptive lasso: Variable selection for causal inference},
journal = {Biometrics},
volume = {73},
number = {4},
pages = {1111-1122},
year = {2017}
}

@article{Shah_2020,
   title={The hardness of conditional independence testing and the generalised covariance measure},
   volume={48},
   number={3},
   journal={The Annals of Statistics},
   publisher={Institute of Mathematical Statistics},
   author={Shah, R. D. and Peters, J.},
   year={2020}}

@article{newey1994,
 author = {W. K. Newey},
 journal = {Econometrica},
 number = {6},
 pages = {1349--1382},
 publisher = {[Wiley, Econometric Society]},
 title = {The Asymptotic Variance of Semiparametric Estimators},
 volume = {62},
 year = {1994}
}

@article{hines2025automaticdebiasingneuralnetworks,
  title = 	 {Automatic debiasing of neural networks via moment-constrained learning},
  author =       {Hines, C. L. and Hines, O. J.},
  pages = 	 {390--405},
  year = 	 {2025},
  volume = 	 {275},
  journal = 	 {Proceedings of Machine Learning Research},
  publisher =    {PMLR},
}

@article{chernozhukov2022riesznetforestrieszautomaticdebiased,
  title = 	 {{R}iesz{N}et and {F}orest{R}iesz: Automatic Debiased Machine Learning with Neural Nets and Random Forests},
  author={V. Chernozhukov and W. K. Newey and V. Quintas-Martinez and V. Syrgkanis},
  pages = 	 {3901--3914},
  year = 	 {2022},
  volume = 	 {162},
  journal = 	 {Proceedings of Machine Learning Research},
  publisher =    {PMLR},
}

@article{Kang,
author = {J. D. Y. Kang and J. L. Schafer},
title = {{Demystifying Double Robustness: A Comparison of Alternative Strategies for Estimating a Population Mean from Incomplete Data}},
volume = {22},
journal = {Statistical Science},
number = {4},
publisher = {Institute of Mathematical Statistics},
pages = {523 -- 539},
year = {2007},
}

@article{Robins_comment,
author = {J. Robins and M. Sued and Q. Lei-Gomez and A. Rotnitzky},
title = {Comment: {P}erformance of Double-Robust Estimators When “Inverse Probability” Weights Are Highly Variable},
volume = {22},
journal = {Statistical Science},
number = {4},
publisher = {Institute of Mathematical Statistics},
pages = {544 -- 559},
year = {2007}
}

@article{paszke2019pytorchimperativestylehighperformance,
title = {PyTorch: An Imperative Style, High-Performance Deep Learning Library},
author = {Paszke, A. and Gross, S. and Massa, F. and Lerer, A. and Bradbury, J. and Chanan, G. and Killeen, T. and Lin, Z. and Gimelshein, N. and Antiga, L. and Desmaison, A. and Kopf, A. and Yang, E. and DeVito, Z. and Raison, M. and Tejani, A. and Chilamkurthy, S. and Steiner, B. and Fang, L. and Bai, J. and Chintala, S.},
volume = {32},
journal = {Advances in Neural Information Processing Systems},
pages = {8024--8035},
year = {2019},
publisher = {Curran Associates, Inc.},
}

@article{geurts2006extremely,
  title={Extremely randomized trees},
  author={Geurts, Pierre and Ernst, Damien and Wehenkel, Louis},
  journal={Machine Learning},
  volume={63},
  number={1},
  pages={3--42},
  year={2006},
  publisher={Springer}
}

@book{Vaart_1998,
title={Asymptotic statistics},
publisher={Cambridge University Press},
author={van der Vaart, A. W.},
year={1998},
}

@book{vanderLaanRose2011,
  author    = {M. J. van der Laan and S. Rose},
  title     = {Targeted learning: Causal inference for observational and experimental data},
  year      = {2011},
  publisher = {Springer},
}

@article{fanSureIndependenceScreening2008,
  title = {Sure Independence Screening for Ultrahigh Dimensional Feature Space},
  author = {Fan, J. and Lv, J.},
  year = {2008},
  journal = {Journal of the Royal Statistical Society Series B: Statistical Methodology},
  volume = {70},
  number = {5},
  pages = {849--911}
}

@article{wangHighDimensionalOrdinary2016,
    title = {High Dimensional Ordinary Least Squares Projection for Screening Variables},
    volume = {78},
    number = {3},
    journal = {Journal of the Royal Statistical Society Series B: Statistical Methodology},
    author = {Wang, X. and Leng, C.},
    year = {2016},
    pages = {589--611}
}

\end{document}